\def\inline#1:{\par\vskip 3pt\noindent{\bf #1:}\hskip 10pt}
\def\dnsparagraph{\vspace{-7pt}\paragraph}
\newtheorem{theorem}{Theorem}[section]
\newtheorem{lemma}[theorem]{Lemma}
\newtheorem{observation}[theorem]{Observation}
\newtheorem{corollary}[theorem]{Corollary}
\newtheorem{proposition}[theorem]{Proposition}
\newtheorem{claim}[theorem]{Claim}
\newtheorem{definition}[theorem]{Definition}
\newtheorem{remark*}[theorem]{Remark}
\newtheorem{remark}[theorem]{Remark}
\def\SUPPORTED{\mbox{\sf SUPPORTED} }
\def\CONGEST{\mbox{\sf CONGEST} }
\def\LOCAL{\mbox{\sf LOCAL} }
\def\LTIME{T_{\mbox{\sf local}}}
\def\SUPTIME{T_{\mbox{\sf sup}}}
\def\SUPSPACE{S_{\mbox{\sf sup}}}
\def\LCL{\mbox{\sf LCL}}
\def\cP{\mathcal{P}}
\def\cA{\mathcal{A}}
\def\tT{{\check T}}
\def\RT{\mathcal{R}}
\def\RLT{\RT_L}
\def\RCT{\RT_C}
\def\RFT{\RT_F}
\def\layer{layer}
\def\parent{\mathsf{parent}}
\def\AA{\mbox{\sf APPR}}
\def\PROC{\mbox{\sf P}_{up}}
\def\lleaves{lleaves}
\def\main{int}
\def\leaftree{L-tree}
\def\leaftrees{L-trees}
\def\inttree{I-tree}
\def\cuttree{cut I-tree}
\def\cuttrees{cut I-trees}
\def\fulltree{full I-tree}
\def\fulltrees{full I-trees}
\def\Inf{{\sf Inf}}
\def\InS{\mbox{\sf S}}
\def\UN{un}
\def\CDS{\mbox{\sf CDS}}
\def\PCC{\mbox{\sf CC}}
\def\Dopt{D^{\texttt{opt}}}
\def\depth{\texttt{depth}}
\def\wt{\texttt{weight}}
\def\col{\mbox{{\sf color}}}
\def\path{\mbox{{\sf path}}}
\def\cluster{\mbox{\sf cluster}}
\def\out{\mbox{\sf out}}
\tikzset{circle split part fill/.style  args={#1,#2}{%
 alias=tmp@name, %
  postaction={%
    insert path={
     \pgfextra{% 
     \pgfpointdiff{\pgfpointanchor{\pgf@node@name}{center}}%
                  {\pgfpointanchor{\pgf@node@name}{east}}%            
     \pgfmathsetmacro\insiderad{\pgf@x}
      \fill[#1] (\pgf@node@name.base) ([xshift=-\pgflinewidth]\pgf@node@name.east) arc
                          (0:180:\insiderad-\pgflinewidth)--cycle;
      \fill[#2] (\pgf@node@name.base) ([xshift=\pgflinewidth]\pgf@node@name.west)  arc
                           (180:360:\insiderad-\pgflinewidth)--cycle;            
         }}}}}  
\newcommand{\drawgz}[4]{
\node[draw, circle, fill=red!20] (a#4) at (-0.5+#1-0.5, 0+#2-0.5) {1};
\node[draw, circle, fill=red!20] (c#4) at (0.5+#1-0.5, 1.5+#2-0.5) {#3};
\node[draw, minimum size = 1cm, rectangle] (b#4) at (1.5+#1-0.5, 0+#2-0.5) {$K_{\chi - 2}$};
\draw (a#4) -- (b#4);
\draw (c#4) -- (a#4);
\draw (c#4) -- (b#4);
}
\begin{document}

\author[1]{Akanksha Agrawal}
\author[1]{John Augustine}
\author[2]{David Peleg}
\author[1]{Srikkanth Ramachandran}
\affil[1]{Indian Institute of Technology Madras}
\affil[2]{Weizmann Institute of Science, Israel}

\title{Recurrent Problems in the LOCAL Model}

\date{ }

\maketitle

\begin{abstract}
The paper considers the \textsf{SUPPORTED} model of distributed computing introduced by Schmid  and Suomela [HotSDN'13], generalizing the \textsf{LOCAL} and \textsf{CONGEST} models. In this framework, multiple instances of the same problem, differing
from each other by the subnetwork to which they apply, recur over time,
and need to be solved efficiently online. To do that, one may rely on
an initial preprocessing phase for computing some useful information.
This preprocessing phase makes it possible, in some cases, to obtain improved
distributed algorithms, overcoming locality-based time lower bounds. 

A first contribution of the current paper is expanding the spectrum of problem types to which the \textsf{SUPPORTED} model applies. In addition to \emph{subnetwork}-defined recurrent problems, we introduce also recurrent problems of two additional types: (i) instances defined by \emph{partial client sets}, and (ii) instances defined by \emph{partially fixed outputs}. 

Our second contribution is exploring and illustrating the versatility
and applicability of the \textsf{SUPPORTED} framework via examining
new recurrent variants of three classical graph problems. 
The first problem is  \textit{Minimum Client Dominating Set (\textsf{CDS})}, a recurrent version of the classical dominating set problem with each recurrent instance requiring us to dominate a partial client set.
We provide a \textit{constant time approximation scheme} for the \textsf{CDS} problem on trees and planar graphs,
overcoming the $\Omega(\log^*n)$ based locality lower bound.
The second problem is \textit{Color Completion (\textsf{CC})}, a recurrent version of the coloring problem in which each recurrent instance comes with a partially fixed coloring (of some of the vertices) that must be completed.
We study the minimum number of new colors and the minimum total number of colors necessary for completing this task.
We show that it is not possible to find a \textit{constant time approximation scheme} for the minimum number of additional colors required to complete the precoloring. On the positive side, we provide an algorithm that computes a $2$-approximation for the total number of colors used in the completed coloring (including the set of pre-assigned colors), as well as a one round algorithm for color completion that uses an asymptotically optimal number of colors.

The third problem we study is a recurrent version of Locally Checkable Labellings (LCL) on paths of length $n$. We show that such problems have complexities that are either $\Theta(1)$ or $\Theta(n)$, extending the results of
Foerster et al. [INFOCOM'19].
\end{abstract}

\tableofcontents

\newpage

%%%%%%%%%%%%%%%%%%%%%%%
\section{Introduction}

The area of distributed network algorithm concerns the development and analysis of distributed algorithm operating on a network of processors interconnected by communication links. In particular, a substantial body of research has been dedicated to the development of various graph algorithms for problems whose input consists of the network topology. Examples for such problems are finding maximal independent set (MIS) for the network, finding a maximal or maximum matching (MM), a minimum dominating set (MDS), a proper coloring with few colors, and so on, and considerable efforts were invested in developing sophisticated and highly efficient algorithms for these problems. Such algorithms are particularly significant in settings where the distributed network at hand is {\em dynamic}, and its topology keeps changing at a high rate.

The observation motivating the current study is that in many practical settings, the network itself may be static, or change relatively infrequently. In such settings, problems depending solely on the graph structure need be solved only once. In contrast, there are a variety of other problems, related to computational processes that occur repeatedly in the network, which need to be solved at a much higher frequency, and whose input consists of the network topology together with some other (varying) elements. For such problems, the traditional model might not provide a satisfactory solution, in the sense that it may be unnecessarily expensive to solve the entire problem afresh for each instance. Rather, it may be possible to derive improved algorithmic solutions that take advantage of the fact that the network topology is static. 
We refer to such problems as {\em recurrent problems}.

We envision that depending on the desired problems that the network needs to support, one can compute and store additional information about the topology of the network within each node, to enable recurrent problems to be solved faster. Inherently this captures an aspect of network design. When a network is built, it maybe useful to compute useful information about its topology keeping in mind the recurrent problems that it must support during its lifetime.

This framework has already been studied in literature as the $\SUPPORTED$ model \cite{supmodel}, wherein the recurrent problems are simply instances of the original problem but on a (edge induced) subgraph of the original graph. Edges of the original graph remain valid communication links.
We believe that the $\SUPPORTED$ model (as mentioned in \cite{supmodel}) does not fully capture all recurrent problems. To demonstrate this, we study a couple of natural extensions of the classical local problems of coloring and dominating set. 

\subsection{Recurrent Problems}

We consider graph-related optimization problems each of whose {\em instances} $\langle G,\InS\rangle$ consists of a network topology $G=(V,E)$, on which the distributed algorithm is to be run, and some problem-specific input $\InS$. 
The term "recurrent problem" refers to a setting where the network $G$ is fixed, and is the same in all instances (hence we often omit it). Formally, there is a stream of instances 
that arrive from time to time and must be solved efficiently.
The optimization problem itself may be a variant of some classical graph optimization problem, except it has some additional restrictions, specified in each instance $\InS$. 
Two concrete types of restrictions that are of particular interest are 
{\em partial client set (PCS)} and 
{\em partially fixed output (PFO)}.

\dnsparagraph{Partial client set (PCS)}
An instance $\InS$ restricted in this manner specifies a subset $C\subseteq V$ of {\em client vertices} to which the problem applies. The rest of the vertices are not involved (except in their capacity as part of the network).
For example, consider the maximum matching problem. In the PCS variant of this problem, a PCS-restricted instance will specify a vertex subset $C$ such that the matching is only allowed (and required) to connect vertex pairs of $C$.

\dnsparagraph{Partially fixed output (PFO)}
An instance $\InS$ restricted in this manner specifies a part of the output. The rest of the output must be determined by the algorithm.
For example, consider $k$-centers problem (where the goal is to select a subset $C$ of $k$ vertices serving as centers, so as to minimize the maximum distance from any vertex of $V$ to $C$). In the PFO variant of $k$-centers problem, a PFO-restricted instance will specify a vertex subset $C_{pre}$ of $k'$ vertices that were already pre-selected as centers, and the task left to the algorithm is to select the remaining $k-k'$ centers.

Naturally, some recurrent problems may involve other novel restrictions as well as hybrids, thereby opening up the possibility for rich theory to be developed. 

%%%%%%%%%%%%%%%%%%%%%%%%%%%%%%%%%%%%%%%%%%%%%%%%%%%%%%%%%
\subsubsection{Two representative examples: \texorpdfstring{$\CDS$}{CDS} and \texorpdfstring{$\PCC$}{PCC}}

In this paper, we will focus on two concrete examples for recurrent problems of practical significance, and use them for illustration.  The first of these two example problems, named $\CDS$,
serves to illustrate a recurrent problem with PCS-restricted instances (where the set of clients changes in each instance). The second problem, named $\PCC$, illustrates a recurrent problem with PFO-restricted instances (where parts of the output are fixed in advance in each instance).

\dnsparagraph{Minimum client-dominating set ($\CDS$)}
In certain contexts, a dominating set $D$ in a network $G$ (i.e., such that every vertex $v\in V$ either belongs to $D$ or has a neighbor in $D$) is used for placing {\em servers} providing some service to all the vertices in the network (interpreted as {\em clients}), in settings where it is required that each vertex is served by a server located either locally or at one of its neighbors. The {\em minimum dominating set (MDS)} problem requires finding the smallest possible dominating set for $G$.

We consider the recurrent variant of the $\CDS$ problem with PCS-restricted instances. This problem arises in settings where the set of clients in need of service does not include all the vertices of $G$, but rather varies from one instance to the next. In such settings, the network $G$ is static, and from time to time, a set of clients $C\subseteq V$, formed in an ad-hoc manner due to incoming user requests, requests to select and establish a (preferably small) subset $D$ of vertices from among their neighbors, which will provide them some service. In other words, the set $D$ is required to dominate the vertices in $C$. On the face of it, solving the minimum dominating set problem once on $G$ may be useful, but not guarantee optimal results for each recurrent instance $\InS$; rather, for each instance $\InS$, it may be necessary to solve the specialized problem once the request appears in the network. Hereafter, we refer to this problem as {\em minimum client-dominating set ($\CDS$)}.

Note that one may also consider a generalized problem that includes also a PFO component, by specifying in each instance $\InS$ also a partial set $D'$ of vertices that were pre-selected as servers (or dominators). Our results are presented for the $\CDS$ problem (without PFO restrictions), but it should be clear that they can easily be extended to the generalized problem with PFO restrictions\footnote{Essentially, for this problem, the pre-selected vertices of $D'$ can be used to satisfy all the clients that neighbor them, leaving us with a smaller set $C'$ of unsatisfied clients that need to be covered.}.

\dnsparagraph{Color Completion ($\PCC$)}
In certain contexts, a proper coloring of a distributed network is used for purposes of scheduling various mutually exclusive tasks over the processors of the network. For example, suppose that performing a certain task by a processor requires it to temporarily lock all its adjacent links for its exclusive use, preventing their use by the processor at the other end. Employing a proper coloring as the schedule (in which all the processors colored by color $t$ operate simultaneously at round $t$) enables such mutually exclusive operation. Naturally, it is desirable to use as few colors as possible, in order to maximize parallelism. 

We consider the recurrent variant of the coloring problem with PFO-restricted instances. 
From time to time we may receive a partial (collision-free) coloring assignment to some subset $C\subseteq V$ of the vertices, representing processors constrained to operate on some particular time slots. We are required to color all the remaining vertices in $V\setminus C$ properly and consistently with the initial coloring. Typically, using colors already occurring in the precoloring (i.e., used by some vertices in the set $C$) is fine, since these time slots are already committed for the task at hand. However, it is desirable to use as few new time slots (or new colors), to minimize the overall time spent on the task. 

Note that one may also consider a generalized problem that includes also a PCS component, by specifying in each instance $\InS$ also a partial set $V'$ of vertices that are interested in being scheduled, and hence need to be colored. Our results are presented for the $\PCC$ problem (without PCS restrictions), but it should be clear that they can easily be extended to the generalized problem with PCS restrictions\footnote{Essentially, for this problem, the vertices of $V\setminus V'$, which do not require coloring, can simply avoid participating in the coloring process.}.

%%%%%%%%%%%%%%%%%%%%%%%%%%%%%%%%%%%%%%%%%%%
\subsection{The \SUPPORTED model}
The \SUPPORTED model is an extension of the well studied \LOCAL and \CONGEST models with an additional preprocessing phase.
Specifically the solution to a problem in the $\SUPPORTED$ model consists of two stages, (i) a {\em preprocessing } stage and (ii) an {\em online} stage.

\begin{itemize}
\item 
In the preprocessing stage, run an algorithm $\cA_{pre}(G)$ on the topology of the network $G$ and obtain information $\Inf(G)$ to be stored at the network vertices (different vertices may of course store different information).
\item
During runtime, a stream of instances 
arrive. Whenever a new instance $\InS$ arrives, run an algorithm $\cA(\InS,\Inf(G))$ to solve this problem instance. 
\end{itemize}

In view of the fact that the preprocessing stage takes place only once, the particulars of the preprocessing algorithm are less important to us, and we allow it to be arbitrary (even oracular).
For the scope of this paper, in the upper bounds that we show, all our preprocessing phases are decidable, whereas the lower bounds hold for any arbitrary preprocessing.

In the online stage, we insist that the computations performed by each node in a single round must be polynomial in the size of the graph. Therefore even knowledge of the complete network for each node might not be sufficient, as underlying information about the topology (such as chromatic number) might not be computable in polynomial time. 

For a given problem $\Pi$ on a graph $G$, one may seek to optimize several parameters. For the scope of this paper, we consider only two, (i) the round complexity of the online algorithm, i.e., the number of synchronous rounds required to solve each recurrent instance and (ii) the size of the output to each node in the preprocessing phase, i.e., the amount of additional information that needs to be stored in each node of the graph from the preprocessing phase.
We use $\SUPTIME(\Pi, G)$ to denote the worst case online round complexity for any deterministic algorithm across all instances of $\Pi$. We use $\SUPSPACE(\Pi, G)$ to be the optimal size of the output to each node in the preprocessing phase that enables $\Pi$ to be solved in $\SUPTIME(\Pi, G)$ rounds in the online stage. 
We use $\LTIME(\Pi, p)$ to denote the worst case round complexity for $\Pi$ in the classical local model on all graphs with given parameter $p$. Depending on the problem, $\LTIME(\Pi)$ may be described by a combination of different parameters of the input graph, such as the number of nodes $n$ or maximum degree $\Delta$. 

%%%%%%%%%%%%%%%%%%%%%%%%%%%%%%%
\subsection{Our Contributions}

In Section 2, study the $\CDS$ problem. We first show that even on a path, it is not possible to optimally solve $\CDS$ in $o(n)$ time. We next look at $1 + \epsilon$ approximations. We show that on trees and planar graphs, one can obtain a $1 + \epsilon$ approximation in $O(\frac{1}{\epsilon})$ and $\tilde{O}\left(\frac{1}{\epsilon}^{\log_{24/23}{3}}\right) $ rounds respectively. To achieve these bounds, we only require to store $O(1)$ bits per node as the output of the preprocessing phase.

In Section 3, we study the $\PCC$ problem. We provide an algorithm to complete a given coloring using at most $\chi (\Delta + 1) / k$ new colors in $k$ rounds. We show that for $k = 1$, the number of colors used is asymptotically tight in the worst case.

In Section 4, we study a generic class of problems called Locally Checkable Labellings (LCL). We show that on a path, every LCL problem either has worst case complexity $\Theta(1)$ or $\Theta(n)$. In the specific case of recurrent problems where the online instances are a specific LCL on a sub-path of the given path (as considered in prior works such as \cite{foerster2019power}), we provide an efficient centralized algorithm to classify the LCL into one of the two cases and also construct the distributed algorithm to solve an LCL given its description, thereby extending the results in \cite{foerster2019power}. In our construction, the preprocessing phase requires only $O(1)$ additional bits to be stored per node. 

Finally in Section 5, we provide some partial results on sub-graph maximal matching and sub-graph maximal independent set that could potentially be useful in finding optimal solutions for these problems in the $\SUPPORTED$ model.

%%%%%%%%%%%%%%%%%%%%%%%%%%
\subsection{Related Work}

The \SUPPORTED model for first proposed by Schmid and Suomela \cite{supmodel}. Foerster et al. \cite{foerster2019power} provide several results including lower bounds for problems such as sinkless orientation and approximating independent set. For global network optimization problems, such as minimum spanning tree, near optimal \textit{universal} lower and upper bounds have been shown by (\cite{haeupler2021universally}). 
We stress that in all related prior work above, the problems to be solved are same as the traditional problems, but on a subgraph of the given graph. Most of our solutions here are adaptations of existing algorithms for the relevant problems in the \LOCAL model. 

\textbf{Dominating Set.} Czygrinow et al \cite{CHW08} provided an $O_{\epsilon}(\log^* n)$ round algorithm for a $1 + \epsilon$ approximation for the dominating set problem and it was later extended to bounded genus graphs by Amiri et al\cite{ASS19}. Foerster et al. \cite{foerster2019power} briefly discuss about extending these results to the \SUPPORTED model.

\textbf{Coloring.} Color Completion has been one of the methods used for $\Delta + 1$ coloring graphs in $\log^*n + f(\Delta)$ rounds. Existing algorithms decide on a coloring for a subgraph of the given graph and then recursively complete the chosen coloring. Barenboim \cite{barenboim2016deterministic} provided the first sublinear in $\Delta$ algorithm. The current best known algorithm has round complexity $\log^*n + O(\sqrt{\Delta \log \Delta})$ (see \cite{maus2020local, BEG13, fraigniaud2016local}). Maus \cite{maus2020local} also provided a smooth tradeoff between the number of colors and the round complexity, specifically in $k + \log^* n$ rounds, graphs can be properly colored using $O(\Delta^2 / k)$ colors for any $1 \leq k \leq \sqrt{\Delta}$. We note that Maus's algorithm does not provide a $\Delta + 1$ coloring but rather an $O(\Delta)$ coloring.

\textbf{LCL.} Locally Checkable Labellings (LCL) were first proposed by Naor and Stockmeyer \cite{DBLP:journals/siamcomp/NaorS95}. Chang et al. \cite{chang2019exponential} showed gaps in the deterministic complexity of LCL's. They showed that the worst case deterministic round complexity of LCL's on any hereditary graph class is either $\omega(\log_{\Delta} n)$ or $O(\log^* n)$. They also show that for paths, there is no LCL with complexity $o(n)$ and $\omega(\log^* n)$. Later Chang et al \cite{chang2019time} showed that on trees, the deterministic worst case complexities for LCL's is either $\Theta(1), \Theta(\log^* n), \Theta(\log n)$ or $n^{\Theta(1)}$. They also provide examples of LCL's with complexity $\Theta(n^{1/k})$ for any integer $k$. More recently, Balliu et al. \cite{balliu2021locally} showed that for a more restricted class of LCL problems called homogenous LCL problems, on rooted trees, there is a centralized algorithm that takes as input the description of the LCL and decides which of the above complexity classes it belongs to. Given the LCL, deciding its distributed complexity class on trees was shown to be \textsf{EXPTIME} hard by Chang \cite{chang2020complexity}.

%%%%%%%%%%%%%%%%%%%%%%%%%
\section{Dominating Sets}

\subsection{Client Dominating Set}

\begin{definition}[Client Dominating Set]
    Given a graph $G$ and a subset of its vertices $C \subseteq V(G)$, called the client set, we say that a subset $D$ is a client dominating set of $G, C$ if for every client $c \in C$, there exists $v \in D$ such that either $v = c$ or $v$ is a neighbor of $c$. 
\end{definition}

\begin{definition}[Minimum Client Dominating Set (\CDS)]
    Given a graph $G$ and a subset of its vertices $C \subseteq V(G)$, called the client set, find a client dominating set of minimum size.
\end{definition}

The $\CDS$ problem is of course a generalization of the Dominating Set problem as the dominating set is precisely the case when $C = V(G)$. It is also possible to reduce the $\CDS$ problem to an instance of a Dominating Set problem. Given a graph $G$ and a client set $C$, we can construct a graph $G_C$ which is obtained by adding a path on two vertices, $P_2$ to $G$ and connecting every nonclient vertex (i.e. $V(G) \setminus C$) to one end of the path $P_2$. See Figure \ref{fig:cds-ds:redn} (a). 

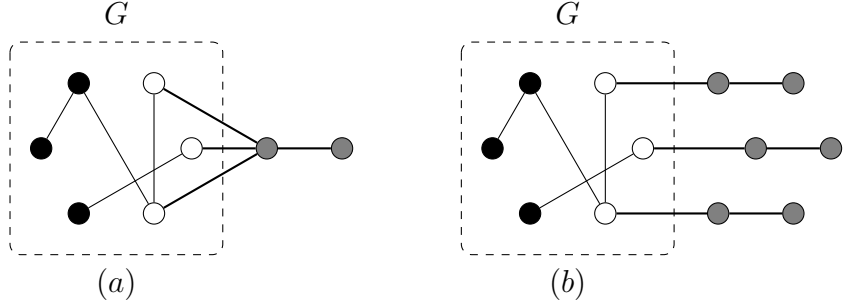
\begin{figure}
    \centering
    \begin{tikzpicture}
        \foreach \i in {2,...,4} {
            \node[draw, circle, inner sep=1mm, fill=black] (A\i) at (\i*60:1) {};
        }
        \foreach \i in {5,...,7} {
            \node[draw, circle, inner sep=1mm] (A\i) at (\i*60:1) {};
        }
        \draw (A2) -- (A3);
        \draw (A2) -- (A5);
        \draw (A7) -- (A5);
        \draw (A4) -- (A6);
        
        \node[draw, circle, inner sep=1mm, fill=gray] (B1) at (0:2) {};
        \node[draw, circle, inner sep=1mm, fill=gray] (B2) at (0:3) {};
        \draw[thick] (B1) -- (B2);
        
        \draw[thick] (A5) -- (B1);
        \draw[thick] (A6) -- (B1);
        \draw[thick] (A7) -- (B1);
        
        \draw[rounded corners, dashed] (135:2) rectangle (315:2) {};
        \node at (90:1.8) {$G$};
        
        \foreach \i in {2,...,4} {
            \node[draw, circle, inner sep=1mm, fill=black] (C\i) at ($(6,0) + (\i*60:1)$) {};
        }
        \foreach \i in {5,...,7} {
            \node[draw, circle, inner sep=1mm] (C\i) at ($(6,0) + (\i*60:1)$) {};
        }
        \draw (C2) -- (C3);
        \draw (C2) -- (C5);
        \draw (C7) -- (C5);
        \draw (C4) -- (C6);
        \draw[rounded corners, dashed] ($(6, 0) + (135:2)$) rectangle ($(6,0) + (315:2)$) {};
        \node at ($(6,0) + (90:1.8)$) {$G$};
        
        \node[draw, circle, inner sep=1mm, fill=gray] (D1) at ($(7.5,0) + (5*60:1)$) {};
        \node[draw, circle, inner sep=1mm, fill=gray] (D2) at ($(8.5,0) + (5*60:1)$) {};

        \node[draw, circle, inner sep=1mm, fill=gray] (E1) at ($(7.5,0) + (6*60:1)$) {};
        \node[draw, circle, inner sep=1mm, fill=gray] (E2) at ($(8.5,0) + (6*60:1)$) {};

        \node[draw, circle, inner sep=1mm, fill=gray] (F1) at ($(7.5,0) + (7*60:1)$) {};
        \node[draw, circle, inner sep=1mm, fill=gray] (F2) at ($(8.5,0) + (7*60:1)$) {};
        
        \draw[thick] (D1) -- (D2);
        \draw[thick] (E1) -- (E2);
        \draw[thick] (F1) -- (F2);
        \draw[thick] (C5) -- (D1);
        \draw[thick] (C6) -- (E1);
        \draw[thick] (C7) -- (F1);
        
        \node at (0, -1.8) {$(a)$};
        \node at (6, -1.8) {$(b)$};
        
    \end{tikzpicture}
    \caption{$(a)$ PTAS preserving reduction $(b)$ Locality preserving reduction, black vertices are clients, thick edges and gray vertices are added.}
    \label{fig:cds-ds:redn}
\end{figure}

\begin{claim}\label{clm:cds-ds:redn}
     Given a graph $G$ and a client set $C \subseteq V(G)$, consider the graph $G_C$ with
     \begin{itemize}
         \item $V(G_C) = V(G) \cup \{u_1, u_2\}$ where $u_1, u_2 \not\in V(G)$ are two new vertices
         \item $E(G_C) = E(G) \cup \{(u_1, v) \mid v  \in V(G) \setminus C\} \cup \{(u_1, u_2)\}$
     \end{itemize}
     For any $D \subseteq V(G_C)$, 
     $D \cap V(G)$ is a client dominating set of $G, C$ if and only if $D \cup \{u_1\}$ is a dominating set of $G_C$.
\end{claim}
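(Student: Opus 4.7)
The plan is to prove both directions of the biconditional by straightforward case analysis, leveraging the very limited connectivity that $u_1$ and $u_2$ have in $G_C$. The key structural facts I will use repeatedly are: (i) the only neighbor of $u_2$ in $G_C$ is $u_1$; (ii) the neighbors of $u_1$ in $G_C$ are exactly $u_2$ together with the non-client vertices $V(G) \setminus C$; and (iii) the edges of $G_C$ induced on $V(G)$ are precisely the edges of $G$. From (ii), no client vertex is adjacent to $u_1$ in $G_C$, which is the crucial asymmetry driving the backward direction.

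For the forward direction, suppose $D \cap V(G)$ is a client dominating set of $(G, C)$. I will check every vertex of $G_C$ is dominated by $D \cup \{u_1\}$: every client $c \in C$ is covered in $G$ by some element of $D \cap V(G) \subseteq D \cup \{u_1\}$, and since $V(G) \subseteq V(G_C)$ and edges of $G$ remain edges of $G_C$, the same vertex dominates $c$ in $G_C$; every non-client vertex of $V(G) \setminus C$ is adjacent to $u_1$ by construction; $u_1$ dominates itself; and $u_2$ is adjacent to $u_1$. This direction is essentially mechanical.

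For the backward direction, assume $D \cup \{u_1\}$ dominates $G_C$, and let $c \in C$ be an arbitrary client. Some $w \in D \cup \{u_1\}$ either equals $c$ or is adjacent to $c$ in $G_C$. The main (small) subtlety I need to handle is ruling out $w \in \{u_1, u_2\}$: since $c \in C$, we have $c \neq u_1, u_2$, and by facts (i) and (ii) above, $c$ is not adjacent to $u_1$ (clients are excluded from $N(u_1)$) nor to $u_2$ (whose only neighbor is $u_1$). Hence $w \in V(G) \cap D = D \cap V(G)$, and by fact (iii) the incidence $w = c$ or $wc \in E(G_C)$ translates into the same incidence in $G$. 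Thus $D \cap V(G)$ dominates every $c \in C$, as required. The only thing to be careful about in the write-up is keeping track of which graph (the original $G$ or the augmented $G_C$) each adjacency lives in; everything else is immediate from the construction.
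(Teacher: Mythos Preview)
Your proof is correct and follows essentially the same approach as the paper's: both directions are handled by the same case analysis, using that $u_1$ dominates all non-clients plus $u_1,u_2$, while neither $u_1$ nor $u_2$ can dominate any client. Your write-up is in fact slightly more careful than the paper's about distinguishing adjacencies in $G$ versus $G_C$.
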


\begin{proof}
    ($\Rightarrow$) Suppose $D \cap V(G)$ is a client dominating set of $G, C$, then all vertices in $C$ have a neighbor in $D \cap V(G)$. Now we look at those vertices in $G_C$ that are dominated by $D \cap V(G)$. The only possible vertices that are not dominated in $G_C$ are the non clients $V(G) \setminus C$ and the two vertices $u_1, u_2$. Notice that $u_1$ dominates all of them. Therefore $(D \cap V(G)) \cup \{u_1\}$ dominates $G_C$. $D \cup \{u_1\}$ is the almost the same set, except possibly with $u_2$ removed. As $u_2$ is not necessary when $u_1$ is present, $D \cup \{u_1\}$ must dominate $G_C$.
    
    ($\Leftarrow$) Suppose $D \cup \{u_1\}$ is a dominating set for $G_C$. $u_1$ only dominates the vertices $V(G) \setminus C, u_1, u_2$. The dominators of the remaining vertices (i.e $C$) must thus be present solely in $V(G)$, i.e., they must be $(D \cup \{u_1\})\cap V(G) = D \cap V(G)$.
\end{proof}

Notice that given a dominating set $D$ of $G_C$, one can replace $u_2$ (if it exists in the solution) with $u_1$ and then by Claim \ref{clm:cds-ds:redn}, $D \cap V(G)$ is a client dominating set. If $D$ is optimal, then $D \cap V(G)$ must an optimal client dominating set. Furthermore, suppose a $1 + \epsilon$ approximation for the dominating set is known for $G_C$, then using Claim \ref{clm:cds-ds:redn} we can get a dominating set of size $(1 + \epsilon)(|D^*| + 1) - 1 = (1 + \epsilon)|D^*| + \epsilon \leq (1 + 2\epsilon) |D^*|$. 

The above reduction holds only for centralized algorithms. Since the above reduction does not preserve locality, non-clients which are far apart in $G$ may be close in $G_C$, a distributed algorithm for dominating set does not immediately imply a distributed algorithm for $\CDS$ with the same round complexity. While we are unable to provide a locality preserving reduction for $1 + \epsilon$ approximating a dominating set, we shall discuss one attempt, which is a slight modification of the above. To each non-client, connect a different path of length $2$, instead of the same path as we have done here (See Figure \ref{fig:cds-ds:redn} (b)). While the new reduction is locality preserving and one can obtain an optimal solution via the new reduction, it does not seem straightforward to obtain an approximation. The reason is that the size of the dominating set for $G_C$ is more than the corresponding client dominating set by an additive $|V(G)| - |C|$ term. Thus if $D^*$ is a client dominating set, the corresponding dominating set in $G_C$ has size,  $(1 + \epsilon) (|D^*| + |V(G)| - |C|) - (|V(G)| - |C|) = (1 + \epsilon) |D^*| + \epsilon (|V(G)| - |C|)$. The additive term $\epsilon (|V(G)| - |C|)$ is too expensive and does not lead to even a constant approximation as $|D^*|$ could be arbitrarily small (even $1$) compared to $\epsilon (|V(G)| - |C|)$. 

%%%%%%%%%%%%%%%%%%%%%%%%%%%%%%%%%%%
\subsection{Lower Bound for Paths}\label{ss:lb}

We establish two lower bounds for $\CDS$ on a path.
First, we argue that, regardless of the preprocessing, the online runtime of
every (exact) deterministic distributed algorithm for the $\CDS$ problem
must take time $\Omega(D)$ on networks of diameter $D$.
Second, we show that the online runtime of every deterministic distributed approximation algorithm for $\CDS$ with ratio $1+\epsilon$ must require time $\Omega(1/\epsilon)$ on some input.

\begin{theorem}
\label{thm:ctas:path:lb}
    Let $\cA$ be a deterministic distributed local algorithm for $\CDS$ with arbitrary preprocessing. Then there exists some input for which $\cA$ requires $\Omega(D)$ time.
\end{theorem}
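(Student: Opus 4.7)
The plan is an indistinguishability argument based on two instances whose unique optimal CDSs follow incompatible residue classes modulo $3$. Fix any deterministic algorithm $\cA$ with online round complexity $r$ and arbitrary preprocessing, and work on the path $P_n$ with $n = 3k+1$ for $k$ much larger than $r$. Since the preprocessing depends only on $G$, both instances I construct below will produce exactly the same preprocessing labels on every vertex.

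Take the client sets $C_A = V \setminus \{v_0\}$ and $C_B = V \setminus \{v_{n-1}\}$. In each case the clients form a contiguous run of $3k$ vertices of the path. Since any single dominator covers at most three consecutive vertices of the path, covering $3k$ contiguous clients with only $k$ dominators is tight: each dominator must cover exactly three distinct clients and the dominators must therefore partition the client interval into disjoint triples. A short case check rules out using the unique non-client endpoint as a dominator, since it would cover only one client and force the total to exceed $k$. This pins down the unique minimum CDSs
\[
O_A = \{v_2, v_5, \ldots, v_{3k-1}\}, \qquad O_B = \{v_1, v_4, \ldots, v_{3k-2}\},
\]
with opposite residue patterns modulo $3$.

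Now pick any index $j$ with $r < j < n - 1 - r$ and $j \equiv 1 \pmod{3}$; such a $j$ exists provided $r$ is at most roughly $n/2 - O(1)$. The ball $B_r(v_j)$ lies inside $\{v_1, \ldots, v_{n-2}\}$, and every vertex there is a client in both $C_A$ and $C_B$. Hence the $r$-view of $v_j$ (the induced subgraph, the vertex IDs, the preprocessing labels, and the client statuses) coincides under the two instances. By determinism, $\cA$ must output the same membership indicator at $v_j$ on both runs. But correctness against the unique optima forces $v_j \in D$ for $C_B$ (because $j \equiv 1 \pmod{3}$) and $v_j \notin D$ for $C_A$ (because $j \not\equiv 2 \pmod{3}$), a contradiction. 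Therefore $r = \Omega(n) = \Omega(D)$.

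The only step demanding real care is the uniqueness of $O_A$ and $O_B$; this is crucial, because if either instance admitted multiple optimal CDSs the algorithm could hide its single output at $v_j$ behind that flexibility. Uniqueness is driven by the tight counting identity $3 \cdot k = 3k$ combined with the observation that placing a dominator at the non-client endpoint, or letting any triple overlap, strictly reduces the total number of clients covered and thus pushes the required number of dominators above $k$. Everything else in the argument is routine bookkeeping about which vertices lie in $B_r(v_j)$.
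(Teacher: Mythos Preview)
Your proof is correct and follows essentially the same indistinguishability template as the paper: build two client sets on a path whose unique optimal solutions are disjoint, then observe that a vertex in the middle cannot distinguish them in $o(D)$ rounds. The only difference is cosmetic---the paper uses alternating clients with period~$2$ (so optima sit at residues modulo~$4$), whereas you use nearly-full client sets (all but one endpoint) so optima sit at residues modulo~$3$; your instances are arguably a bit cleaner, but the mechanism is identical.
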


\begin{proof}
We prove the statement by contradiction. Suppose there exists a deterministic algorithm $\cA$ whose worst case run time is $o(D)$.
Consider a path $P=(v_1, v_2, \dots v_n)$ where $n={4k+2}$ 
for even $k$ 
and the following two instances of clients (see Figure \ref{fig:ctas:path:lb}):
\begin{enumerate}
        \item $C_1 = \{v_2, v_4, \dots v_{4k}\}$, i.e., every vertex at an odd distance from the leftmost vertex except $v_n$.
        \item $C_2 = \{v_4, v_6, \dots v_{4k+2}\}$, i.e., every vertex at an odd distance from the leftmost vertex except $v_{2}$.
\end{enumerate}
    
%%%%%%%%%%%%%%%%%%%%%%
\begin{figure}[htb]
    \centering
    \begin{tikzpicture}
        \node[draw, circle, inner sep = 6pt] (n1) at (1, 0) {};
        \node[draw=red, circle, inner sep = 6pt] (n2) at (2, 0) {};
        \node (text0) at (2.0, 0) {$v_2$};
        \node[draw=red, circle, inner sep = 7pt] (nn2) at (2, 0) {};
        \node[draw, circle, inner sep = 6pt] (n3) at (3, 0) {};
        \node (text0) at (14.0, 0) {$v_n$};
         \node[draw=red, circle, inner sep = 6pt] (n4) at (4, 0) {};
        \node[draw=red, circle, inner sep = 7pt] (nn4) at (4, 0) {};
        \draw (n1) -- (nn2);
        \draw (nn2) -- (n3);
        \draw (n3) -- (nn4);

        \draw[decoration={brace,mirror,raise=15pt},decorate] (1-0.2,0) -- node[below=20pt] 
        {$2k$} (5+0.3, 0);
        
        \draw[decoration={brace,mirror,raise=15pt},decorate] (6-0.2,0) -- node[below=20pt] 
        {$2k$} (12+0.2, 0);
        
        \node at (5-0.3, 0)[circle,fill,inner sep=0.5pt]{};
        \node at (5, 0)[circle,fill,inner sep=0.5pt]{};
         \node (text0) at (2.0, -3) {$v_2$};
         \node at (5+0.3, 0)[circle,fill,inner sep=0.5pt]{};
        
        \node[draw, circle, inner sep = 6pt] (n5) at (6, 0) {};
        \node[draw=red, circle, inner sep = 6pt] (n6) at (7, 0) {};
        \node[draw=red, circle, inner sep = 7pt] (nn6) at (7, 0) {};
        \node[draw=blue, circle, inner sep = 6pt, line width = 1pt] (n7) at (8, 0) {};
        \node[draw=red, circle, inner sep = 6pt] (n8) at (9, 0) {};
        \node[draw=red, circle, inner sep = 7pt] (nn8) at (9, 0) {};
        \node (text0) at (14.0, -3.0) {$v_n$};
        
        \draw (n5) -- (nn6);
        \draw (nn6) -- (n7);
        \draw (n7) -- (nn8);
        
        \node at (10-0.3, 0)[circle,fill,inner sep=0.5pt]{};
        \node at (10, 0)[circle,fill,inner sep=0.5pt]{};
        \node at (10+0.3, 0)[circle,fill,inner sep=0.5pt]{};
        
        \node[draw, circle, inner sep = 6pt] (n9) at (11, 0) {};
        \node[draw=red, circle, inner sep = 6pt] (n10) at (12, 0) {};
        \node[draw=red, circle, inner sep = 7pt] (nn10) at (12, 0) {};
        \node[draw, circle, inner sep = 6pt] (n11) at (13, 0) {};
        \node[draw, circle, inner sep = 6pt] (n12) at (14, 0) {};
        
        \draw (n9) -- (nn10);
        \draw (nn10) -- (n11);
        \draw (n11) -- (n12);
        
        \node at (7, -1.5) {(a) Instance $C_1$};

        \node[draw, circle, inner sep = 6pt] (m1) at (1, -3) {};
        \node[draw, circle, inner sep = 6pt] (mm2) at (2, -3) {};
        \node[draw, circle, inner sep = 6pt] (m3) at (3, -3) {};
        \node[draw=red, circle, inner sep = 6pt] (m4) at (4, -3) {};
        \node[draw=red, circle, inner sep = 7pt] (mm4) at (4, -3) {};
        \draw (m1) -- (mm2);
        \draw (mm2) -- (m3);
        \draw (m3) -- (mm4);
        
        \node at (5-0.3, -3)[circle,fill,inner sep=0.5pt]{};
        \node at (5, -3)[circle,fill,inner sep=0.5pt]{};
        \node at (5+0.3, -3)[circle,fill,inner sep=0.5pt]{};
        
        \node[draw, circle, inner sep = 6pt] (m5) at (6, -3) {};
        \node[draw=red, circle, inner sep = 6pt] (m6) at (7, -3) {};
        \node[draw=red, circle, inner sep = 7pt] (mm6) at (7, -3) {};
        \node[draw=blue, circle, inner sep = 6pt, line width = 1pt] (m7) at (8, -3) {};
        \node[draw=red, circle, inner sep = 6pt] (m8) at (9, -3) {};
        \node[draw=red, circle, inner sep = 7pt] (mm8) at (9, -3) {};
        
        \draw (m5) -- (mm6);
        \draw (mm6) -- (m7);
        \draw (m7) -- (mm8);
        
        \node at (10-0.3, -3)[circle,fill,inner sep=0.5pt]{};
        \node at (10, -3)[circle,fill,inner sep=0.5pt]{};
        \node at (10+0.3, -3)[circle,fill,inner sep=0.5pt]{};
        
        \node[draw, circle, inner sep = 6pt] (m9) at (11, -3) {};
        \node[draw=red, circle, inner sep = 6pt] (m10) at (12, -3) {};
        \node[draw=red, circle, inner sep = 7pt] (mm10) at (12, -3) {};
        \node[draw, circle, inner sep = 6pt] (m11) at (13, -3) {};
        \node[draw=red, circle, inner sep = 6pt] (m12) at (14, -3) {};
        \node[draw=red, circle, inner sep = 7pt] (mm12) at (14, -3) {};
        
        \draw (m9) -- (mm10);
        \draw (mm10) -- (m11);
        \draw (m11) -- (mm12);
        
        \node at (7, -4) {(b) Instance $C_2$};
        
    \end{tikzpicture}
    \caption{The instances $C_1$ and $C_2$, differing
    in $v_2$ and $v_n$.
    Red double circles denote clients. 
    The blue node is $v_{2k+1}$.
    }
    \label{fig:ctas:path:lb}
\end{figure}
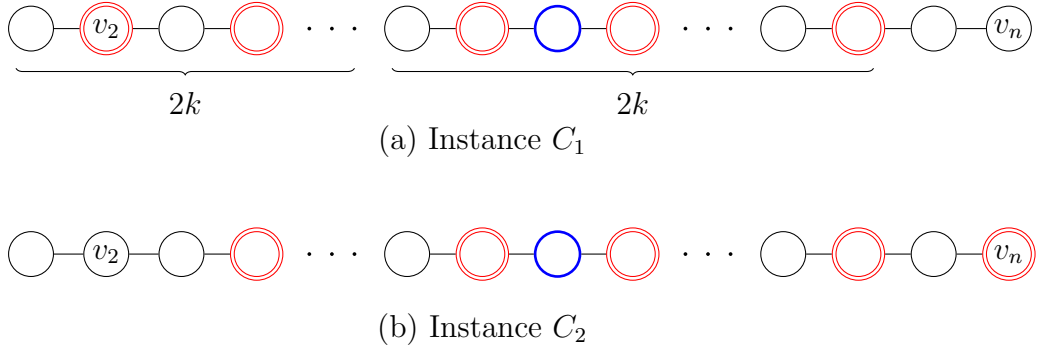
%%%%%%%%%%%%%%%%%%%%%%%%

Both these instances have unique optimal solutions that are disjoint. For $C_1$, the optimal solution is to place the dominators at $v_3, v_7, \dots v_{4k-1}$, whereas for $C_2$ the optimal solution is to place them at $v_5, v_9, \dots v_{4k+1}$. Consider the vertex $v_{2k+1}$. It must be chosen as a dominator in exactly one of the two given instances. Since $\cA$ operates in $t = o(D) = o(k)$ rounds, the inputs in the $t$-neighborhood of $v_{2k+1}$, which are observable to $v_{2k+1}$ during the execution, are identical in both instances, and hence the output of $v_{2k+1}$ must be identical as well, yielding the desired contradiction.
\end{proof}

\begin{theorem}
Let $\cA$ be a deterministic distributed local approximation algorithm for $\CDS$,
with arbitrary preprocessing, whose online runtime on every path and every instance is at most $k=4\ell+1$ for some integer $\ell\ge 1$.
There exists a network and a set of clients for which the approximation ratio
of $\cA$ is at least $1+1/(k+2)$.
\end{theorem}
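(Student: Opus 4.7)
The plan is to extend the two-instance indistinguishability at the middle of the path used in Theorem~\ref{thm:ctas:path:lb} into an additive $+1$ gap between the algorithm's output size and the optimum on one of the two instances, which then translates into a multiplicative ratio of the required form. Informally, I keep the same shifted-client trick but make the path just long enough that (i) a deterministic algorithm with $k$-hop visibility must produce the same labels inside a common ``unseen'' window $S$ of the path, yet (ii) the two unique optima disagree inside $S$, so whichever optimum the algorithm mimics in $S$, it is forced to pay a full extra dominator on the boundary of the other instance.

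Concretely, I will take the network to be a path on $n=4p+2$ vertices with $p=2\ell+2$ and consider the same two client sets $C_1=\{v_2,v_4,\dots,v_{4p}\}$ and $C_2=\{v_4,v_6,\dots,v_{4p+2}\}$ as in Theorem~\ref{thm:ctas:path:lb}. An elementary covering argument (a single dominator covers at most two clients since consecutive clients are at distance $2$) shows that each $C_i$ admits a unique optimum of size $p$: $\mathrm{OPT}(C_1)=\{v_3,v_7,\dots,v_{4p-1}\}$ and $\mathrm{OPT}(C_2)=\{v_5,v_9,\dots,v_{4p+1}\}$, and the two sets are disjoint. Since $C_1$ and $C_2$ differ only at the extreme clients $v_2$ and $v_{4p+2}$, the set $S=\{v_j : k+3\le j\le 4p+1-k\}$ of vertices whose $k$-hop view is identical in the two instances is, for $k=4\ell+1$, the five-vertex interval $[v_{4\ell+4},v_{4\ell+8}]$, and one checks that $\mathrm{OPT}(C_1)\cap S=\{v_{4\ell+7}\}$ while $\mathrm{OPT}(C_2)\cap S=\{v_{4\ell+5}\}$.

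Determinism and the fact that preprocessing sees only the fixed network force $\cA$ to produce the same output on $S$ under both instances. I split into cases on $|\cA(C_1)|$. If $|\cA(C_1)|\ge p+1$, we already have ratio at least $(p+1)/p$ on $C_1$. Otherwise $|\cA(C_1)|=p$ and, by uniqueness of the optimum, $\cA(C_1)=\mathrm{OPT}(C_1)$; hence $\cA(C_2)\cap S=\{v_{4\ell+7}\}$, which covers only the two $C_2$-clients $v_{4\ell+6}$ and $v_{4\ell+8}$. The remaining $C_2$-clients form a left tail $\{v_4,v_6,\dots,v_{4\ell+4}\}$ and a symmetric right tail, each a set of $2\ell+1$ clients pairwise at distance $2$. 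Since every dominator of such a tail can cover at most two of its clients, and the $S$-vertices that could potentially help (namely $v_{4\ell+4},v_{4\ell+5},v_{4\ell+6},v_{4\ell+8}$) are all pinned to ``out'' by indistinguishability, each tail forces at least $\ell+1$ dominators into $\cA(C_2)$, yielding $|\cA(C_2)|\ge 1+2(\ell+1)=p+1$. In either case the approximation ratio is at least $(p+1)/p=1+\tfrac{1}{2\ell+2}\ge 1+\tfrac{1}{4\ell+3}=1+\tfrac{1}{k+2}$.

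The delicate step is the ``$\ell+1$ dominators per tail'' bound, which uses crucially that the four non-core $S$-vertices are all pinned to ``out'' so that no help can come from inside $S$; this in turn is why $p$ must be chosen large enough that $S$ contains one vertex from each of the two optima. The parameter choice $p=2\ell+2$ is essentially tight for this purpose, since for smaller $p$ (e.g.\ $p=2\ell+1$) the window $S$ shrinks to a single vertex lying in neither optimum and $\cA$ can meet both optima on $S$ (via the empty output on $S$) and pay only $p$ dominators on each instance; everything else is routine parity bookkeeping.
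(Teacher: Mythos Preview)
Your argument is correct and in fact establishes a stronger lower bound than the one stated: you obtain ratio at least $(p+1)/p = 1 + 1/(2\ell+2)$, which exceeds $1 + 1/(k+2) = 1 + 1/(4\ell+3)$ for every $\ell\ge 1$. The uniqueness of the size-$p$ optimum (via the forced perfect matching of consecutive client pairs), the identification of the indistinguishability window $S=[v_{4\ell+4},v_{4\ell+8}]$, and the $\ell+1$ lower bound for each tail all check out.

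Your route differs from the paper's. The paper works on a longer path of $4k+7$ vertices with the client set consisting of all odd-indexed vertices, partitions the path into three blocks $A,B,C$, and performs a three-way case split on the number of dominators the algorithm places in the middle block $B$ (and whether they dominate all of $S[B]$); only in the last case does it introduce a second instance $S'$ obtained by deleting the two extreme clients. You instead reuse the two shifted instances $C_1,C_2$ from Theorem~\ref{thm:ctas:path:lb} on a shorter path, and your case split is a single dichotomy on $|\cA(C_1)|$, with uniqueness of the optimum doing the heavy lifting. The payoff of your approach is a cleaner two-case argument and a quantitatively sharper bound; the paper's construction, on the other hand, makes the role of the middle block more explicit and avoids relying on global uniqueness of the optimum solution (it only needs local structure inside $B$).
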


\begin{figure}[htb]
\centering
%\vspace{-1.5cm}
\includegraphics[width=5in]{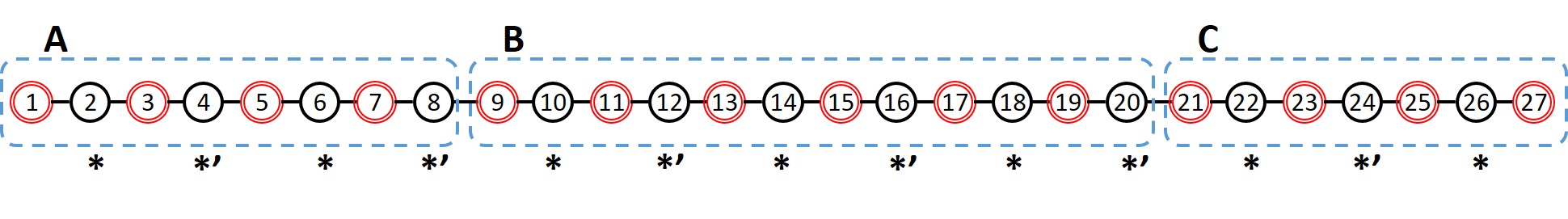}
\caption{An illustration of the instance $(P,S)$ for $\ell=1$. Here $k=5$.
The client vertices of the set $S$ are drawn as double red circles.
The vertices included in the optimal dominating set $D^*$ for $S$
are marked by $*$.
The vertices included in the optimal dominating set $D^{*'}$
for the modified instance $S'$ are marked by $*'$.
}
\label{fig: path P}
\end{figure}

\begin{proof}
Consider an algorithm $\cA$ as specified in the theorem.
Let $P=(v_1,v_2,\ldots,v_{4k+7})$ be a path with $ID(v_i)=i$ for every $i$.
For $i\le j$, denote by $P[v_i,v_j]$ the subpath of a path $P$ from $v_i$ to $v_j$.

Assume an arbitrary preprocessing stage took place, providing the vertices
of $P$ with some additional information.
Let the client set $S$ consist of all the odd-indexed vertices on $P$.
Consider the execution of $\cA$ on $P$ and $S$.
Partition $P$ into three subpaths,
$A=P[v_{1},v_{k+3}]$,
$B=P[v_{k+4},v_{3k+5}]$, and
$C=P[v_{3k+6},v_{4k+7}]$.
(See Fig. \ref{fig: path P} for an illustration.)

Let $D$ be the set of vertices chosen to the dominating set by the algorithm.
For $X\in\{A,B,C\}$,
let $S[X]=S\cap X$ be the set of clients in the subpath $X$, 
and $D[X]=D\cap X$ be the set of dominators selected in $X$.
There are three cases to consider.

\inline Case (1): $|D[B]| \ge 2\ell+2$.
\\
Note that no matter where the dominators of $D[B]$ are placed within
the subpath $B$, at least $\ell+1$ dominators must be selected
in the subpath $C$ in order to dominate all the clients of $S[C]$.
In particular, this holds even if some dominator in $D[B]$
dominates the leftmost client in $C$, $v_{3k+6}$
(node 21 in Fig. \ref{fig: path P}).
Similarly, at least $\ell+1$ dominators must be selected
in the subpath $A$ in order to dominate all the clients of $S[A]$.
(Here, the dominators in $D[B]$ cannot help.)
Altogether, $|D|\ge 4\ell+4$.
On the other hand, note that the unique optimum solution for this instance,
$D^*=\{v_2,v_6,v_{10},\ldots,v_{26}\}$,
consists of $4\ell+3$ dominators (see Fig. \ref{fig: path P}).
Hence in this case, the approximation ratio of $\cA$ is no better than $(4\ell+4)/(4\ell+3)$.

\inline Case (2): $|D[B]| = 2\ell+1$ but $D[B]$ does not dominate all of $S[B]$.
\\
In this case, some of the clients of $S[B]$ must be dominated by dominators
outside the subpath $B$. Inspecting the structure, it is clear that
the only client in $B$ that may be dominated by a dominator outside $B$
is the leftmost client, $v_{k+4}$ (node 9 in Fig. \ref{fig: path P}),
and the only way to do that is by selecting $v_{k+3}$, the rightmost node in $A$,
to $D$.
It is also clear that despite such a selection, $D[A]$ must contain at least
$\ell+1$ {\em additional} dominators in order to dominate all the clients
of $S[A]$.
Also, $|D[C]|\ge \ell+1$ is necessary to dominate $S[C]$.
Hence again, overall $|D|\ge 4\ell+4$, yielding the same approximation ratio
as in case (1).

\inline Case (3): $|D[B]| = 2\ell+1$ and $D[B]$ dominates all of $S[B]$.
\\
Note that in this case, the unique choice is
$D[B]=\{v_{k+5},v_{k+9},\cdot,v_{3k+3}\}$.
Define another instance consisting of
the client set $S'=S\setminus\{v_1,v_{4k+7}\}$, namely, $S$ with the first
and last vertices omitted, and consider the execution of algorithm $\cA$ on this instance.
Notice that in a $k$-round distributed execution, each node is exposed only
to information collected from its distance-$k$ neighborhood.
This implies that the vertices of $B$ see exactly the same view in this new
execution on $S'$ as in the execution on $S$, so their output must be the same.
Hence $D'[B]=D[B]$ (and hence $|D'[B]|=2\ell+1$).
Also note that despite the fact that each of $A$ and $C$ now have
one client fewer than in $S$,
we must have $|D'[A]|\ge \ell+1$ and $|D'[C]|\ge \ell+1$
in order to dominate all the clients of $S'[A]$ and $S'[C]$, respectively.
Hence in total $|D'| \ge 4\ell+3$. However, for this instance the optimum
solution $D^{*'}=\{v_4,v_8,\ldots,v_{4k+4}\}$ is smaller,
consisting of only $k+1=4\ell+2$ vertices (see Fig. \ref{fig: path P}).
Hence in this case, the approximation ratio of $\cA$ is 
$(4\ell+3)/(4\ell+2)$ or higher.

In summary over all cases, the approximation ratio of $\cA$ is
\par
\bigskip
$\mbox{~\hskip 100pt}
\displaystyle \min\left\{\frac{4\ell+3}{4\ell+2}~~,~~ \frac{4\ell+4}{4\ell+3}\right\} ~=~
\frac{k+3}{k+2} ~=~ 1+\frac{1}{k+2}$.
\end{proof}

%%%%%%%%%%%%%%%%%%%%%%%%%%%%%%%%%%%%%%%%%
\subsection{A CTAS for Trees}
\label{sec:MCDS-trees}
%%%%%%%%%%%%%%%%%%%%%%%%%%%%%%%%%%%%%%%%%

In this section we describe the CTAS for $\CDS$ on trees,
prove its correctness and analyze its complexity.

Like the CTAS on a path, the algorithm for trees is based on a preprocessing stage in which the tree is partitioned into subtrees of depth $O(k)$ for integer parameter $k$. Each recurrent instance is then solved by computing a local near-optimal CDS on each subtree, while taking care to ensure that the resulting solutions combine into a $1+4/(k-1)$ approximation of the optimal global solution.
The ``interface'' between adjacent subtrees is more difficult to handle than in the case of paths, as making a single change in the selection in one subtree (e.g., in one of its leaves) might affect several adjacent subtrees, which makes both the algorithm and its analysis somewhat more complex.

Let us first describe the preprocessing stage, which is applied to the network tree $T$.
The algorithm has an integer parameter $\ell\ge 1$ and sets $k=4\ell+1$.
Root the tree $T$ at a root vertex $r_0$, and mark each vertex $v$ by
its {\em layer} $\layer(v)$, namely, its distance from $r_0$
(by definition $\layer(r_0)=0$).
Partition the tree $T$ into subtrees by taking every vertex $v$
with $\layer(v)=pk$ for integer $p\ge 0$ as a root and defining $T[v]$ as 
the subtree of depth $k$ rooted at $v$. See Fig. \ref{fig:sub+cut}(a).
For notational convenience, we sometimes use $T[v]$ to denote also the
{\em vertex set} of the subtree $T[v]$.
Also, for any subtree $T[v]$ and vertex set $X\subseteq T$,
we denote $X[v] = X \cap T[v]$.

%%%%%%%%%%%%%%%%%%%%%%%%%%%%%%%%
\begin{figure}[htb]
\centering
%\vspace{-1.5cm}
\includegraphics[width=5.9in]{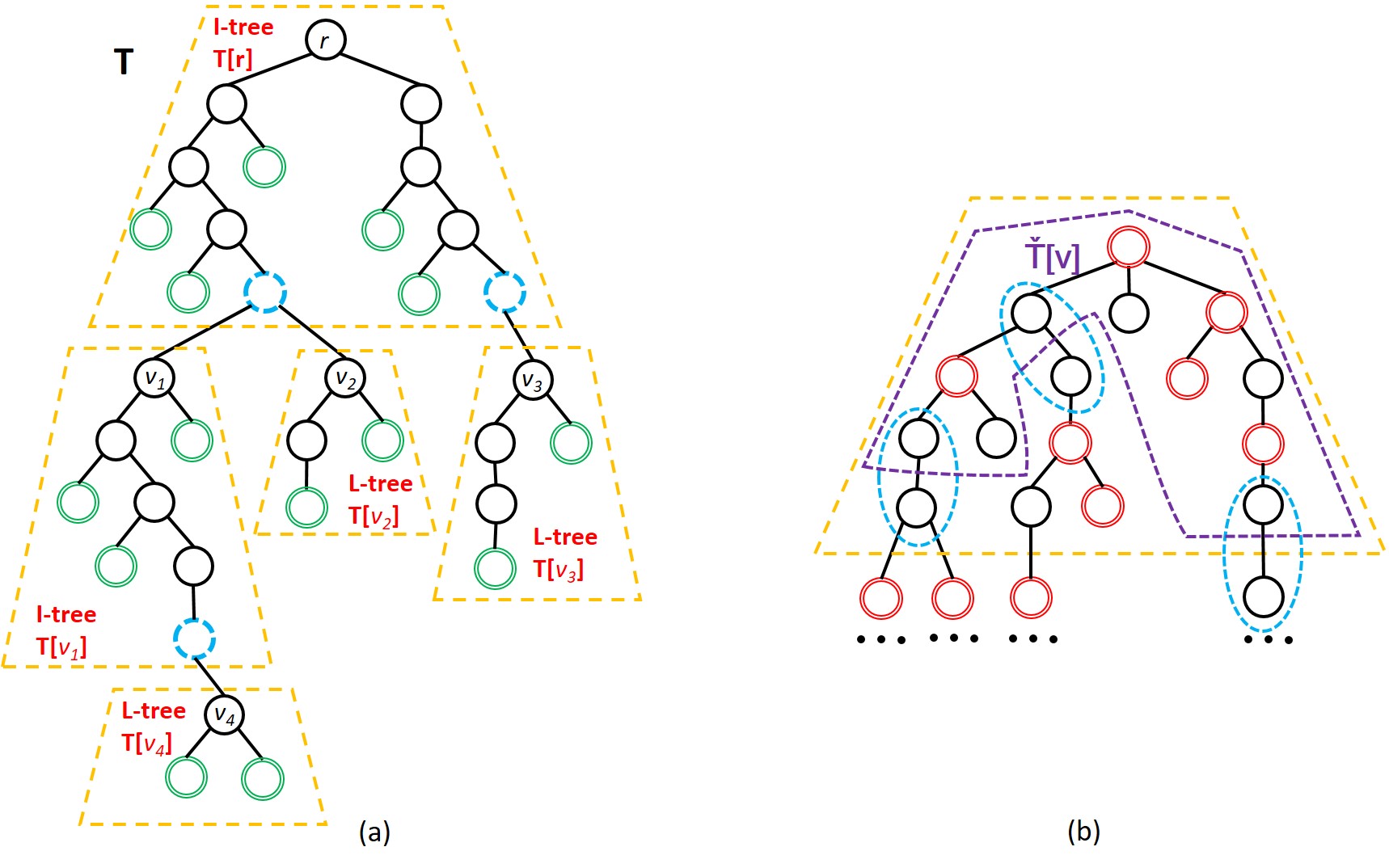}
\caption{(a) A decomposition of the tree $T$ into subtrees for $\ell=1$, $k=5$.
Layer-leaves are marked by a blue dashed circle, and
real leaves are marked by a green double circle.
{\hskip 1cm}
(b) A \cuttree, $k=5$. The client vertices of $S$ are drawn as double red
circles. The cuts along root-to-root paths are marked by blue dashed elypses.
The peak-tree $\tT[v]$ is marked by a purple dashed curve.}
\label{fig:sub+cut}
\end{figure}
%%%%%%%%%%%%%%%%%%%%%%%%%%%%%%%%%%%%%%%

The leaves of a subtree $T[v]$ can be classified into {\em real leaves} and
{\em layer-leaves}, namely, leaves of $T[v]$ that are internal nodes in $T$.
A subtree that has no other subtree below it (namely, all of whose leaves
are real) is called a {\em leaf-subtree} or simply \leaftree.
Otherwise, it is called an {\em internal-subtree} or \inttree.
(See Fig. \ref{fig:sub+cut}(a).)
We partition the vertices of $T$ into two subsets.
Let $\lleaves$ be the set of all layer-leaves,
and $\main$ be the set of all remainig vertices.
This induces a partition of the vertices of each subtree $T[v]$ into
$\lleaves[v]$
and $\main[v]$.
(For an \leaftree, $\lleaves[v]=\emptyset$.)

During the recurrent stage, each instance consists of a set $S$ of clients. This induces additional distinctions on the tree structure.
Internal subtrees are classified into two types.
The \inttree\ $T[v]$ is called a {\em \cuttree} if on every path from $v$ to a
root $w$ hanging from a layer-leaf of $T[v]$ there are two consecutive vertices
that do not belong to $S$. See Fig. \ref{fig:sub+cut}(b).
The figure also illustrates the fact that in a \cuttree\ $T[v]$
one can identify a subtree $\tT[v]$, referred to as the {\em peak} of $T[v]$,
with the property that for every edge $(x,y)$ connecting a vertex $x\in\tT[v]$
to a child $y\notin\tT[v]$, both $x,y\not\in S$.
This implies that nodes {\em below} $\tT[v]$ cannot help in dominating
clients in $\tT[v]$, namely, taking them into $D$
cannot dominate client vertices in $\tT[v]$.
$T[v]$ is a {\em \fulltree} if it is not a \cuttree, namely, there is
at least one path from $v$ to a root $w$ hanging from some layer-leaf of $T[v]$
with no two consecutive vertices of $V\setminus S$.

The idea behind the approximation scheme is as follows.
Our algorithm solves the $\CDS$ problem {\em separately},
in an optimal manner, on each subtree $T[v]$ of depth at most $k$
for the client set $S[v]$.
This can be done in time $O(k)$, but might entail inaccuracies.
As illustrated in the lower bound of Sect. \ref{ss:lb},
the main hindrance to the accuracy of a local distributed algorithm for $\CDS$
stems from long paths with a periodic occurrence of client vertices.
Such a path, threading its way along some root-to-leaf path in $T$,
might be handled poorly by the local computations.
Our goal is to bound the loss by at most 1 per subtree in the decomposition.
This is justified for \fulltrees, since in a \fulltree\ the optimum solution
$D^*$ must also use $\Omega(k)$ dominators to cover all the clients,
so the inaccuracy ratio is just $1/\Omega(k)$.

This approach is made complicated due to the fact that some subtrees
are not full, and may require only a small number of dominators.
For such subtrees (specifically, \leaftrees\ and \cuttrees),
we cannot allow the algorithm to ``waste'' more than the optimum solution.
Hence when comparing the number of dominators used by the algorithm
to that of the optimum $D^*$, we must use an accounting method
that will equate the costs over \leaftrees\ and \cuttrees, and charge
all the ``waste'' to \fulltrees.

This is done as follows. In a first phase, we locally solve the problem
optimally in each \leaftree\ and \cuttree. This is only used in order to decide,
for each such subtree $T[v]$, whether the root's parent, denoted $\parent(v)$,
must belong to the dominating set. This is important since these vertices cover
the ``interference layers'' between adjacent subtrees. 
For the \fulltrees, an optimal local solution cannot be computed.
Therefore, we artificially impose a ``waste'' in every \fulltree\ $T[v]$,
by selecting the parent of its root, $\parent(v)$, as a dominator,
whether or not necessary.
As explained above, this ``waste'' is justified by the fact that
$D^*$ must also use $\Omega(k)$ dominators in these subtrees.
As a result, when we compute a dominating set for the remaining
undominated clients in the second phase of the algorithm,
the solution computed by the algorithm on each subtree $T'$
is no greater than the number of $D^*$ dominators in $T'$.

%%%%%%%%%%%%%%%%%%%%%%%%%%%%%%%%%%%%%%%
\subsubsection*{Optimal procedure $\PROC$}
%%%%%%%%%%%%%%%%%%%%%%%%%%%%%%%%%%%%%%%
~  %Forcing line break
A simple procedure $\PROC$ we use is an optimal algorithm for $\CDS$ on rooted trees, which runs in time $O(\mathsf{depth}(T))$ on a tree $T$.
The algorithm starts with an empty set of dominators $D$ and works its way
from the leaves up, handling each node $w$ only after it finishes handling
all its children.
It adds $w$ to the set $D$ in one of the following two cases:
\\
(1) Some of $w$'s children are clients and are not yet dominated, or
\\
(2) $w$ itself is an undominated client and is the root.

It is easy to verify that this algorithm yields a minimum cardinality solution
for $\CDS$. It is also easy to implement this greedy algorithm
as an $O({\sf depth}(T))$ time distributed protocol.

\inline Modification for subtrees:
When applying this procedure to a subtree $T[v]$ of $T$ where $v$
is not the root of $T$, we make the following small but important modification:
When the procedure reaches $v$ itself, if $v\in S$ and $v$ is still
non-dominated, then we add $\parent(v)$ instead of $v$ to the solution.
(This can be done since $\parent(v)$ belongs to the tree $T$,
although it is outside the subtree $T[v]$.)

%%%%%%%%%%%%%%%%%%%%%%%%%%%%%%%%%%%%%%%
\subsubsection*{Approximation algorithm $\AA$}
%%%%%%%%%%%%%%%%%%%%%%%%%%%%%%%%%%%%%%%
~  %Forcing line break
\begin{algorithmic}[1]
\hrule 
\Procedure{\AA}{}
\For {every subtree $T[v]$}
    \State Decide if it is an \leaftree, a \cuttree or a \fulltree
    \State $D^{\lleaves[v]} \gets \emptyset$
    \State $\RLT \gets \{v \mid T[v]~\mbox{ is an \leaftree}\}$ { \hspace{3pt} (* \leaftree\ roots    *)}
    \State $\RCT \gets \{v \mid T[v]~\mbox{ is a \cuttree}\}$ { \hspace{3pt}  (* \cuttree\ roots    *)}
    \State $\RFT \gets \{v \mid T[v]~\mbox{ is a \fulltree}\}$ {\hspace{3pt} (* \fulltree\ roots    *)}
    \State $\RT \gets \RLT\cup\RCT\cup\RFT$ {\hspace{3pt} (* all subtree roots    *)}
\EndFor

\Statex  \ \ \ \ \ {\bf (* First dominator selection phase *) } 
\label{step:3}
\For {every \leaftree\ $T[v]$}
    \State Apply Procedure $\PROC$ to $(T[v],S[v])$
    \If{$\parent(v)\in T[w]$ was selected as a dominator}
        \State let $D^{\lleaves}[w] \gets D^{\lleaves}[w] \cup \{\parent(v)\}$
    \EndIf
\EndFor
\label{step:4}
\For{every \cuttree\ $T[v]$}
    \State Apply Procedure $\PROC$ to the peak-tree $(\tT[v],S\cap\tT[v])$
    \If{$\parent(v)\in T[w]$ was selected as a dominator}
        \State let $D^{\lleaves}[w] \gets D^{\lleaves}[w] \cup \{\parent(v)\}$
    \EndIf
\EndFor 
\label{step:5}
\For{every \fulltree\ $T[v]$, with $\parent(v)\in T[w]$ }
    \State $D^{\lleaves}[w] \gets D^{\lleaves}[w] \cup \{\parent(v)\}$
    \State $D^{\lleaves} \gets \bigcup_{v\in\RT} D^{\lleaves}[v]$
    \State let $S'$ be the set of all vertices that are dominated by $D^{\lleaves}$
    \State $S''\gets S\setminus S'$ \ {\hspace{3pt} (* Remaining undominated clients  *)}
\EndFor

\Statex \ \ \ \ \ {\bf (* Second dominator selection phase *) }
\For{every subtree $T[v]$} 
    \State Apply Procedure $\PROC$ to $(T[v],S''[v])$
    \State Let $D^{\main}[v]$ be its output set of dominators {\hspace{3pt} (* these are all internal nodes *)}
\EndFor 
\State $D^{\main} \gets \bigcup_{v\in\RT} D^{\main}[v]$
\State \Return $D^{\cA} \gets D^{\lleaves} \cup D^{\main}$
\EndProcedure
\hrule
\end{algorithmic}

%%%%%%%%%%%%%%%%%%%%%%%%%   
\subsubsection*{Analysis}
%%%%%%%%%%%%%%%%%%%%%%%%%   
~  %Forcing line break
For an instance $(T,S)$ of $\CDS$, a set $D$ is said to be an {\em upmost dominating set} if it has the following property:
For every $w\in D$, replacing $w$ by $\parent(w)$ results in a non-dominating set.
(This property also suggests a natural bottom-up process for transforming
a solution $D$ into an upmost solution $D'$ of the same size.)

Denote the optimum solution by $D^*$. Without loss of generality we may assume that $D^*$ is an upmost dominating set.
The following is immediate from the definition of upmost dominating sets.

\begin{observation}
\label{obs: upmost: x in D has needy child}
Consider an instance $(T,S)$ of $\CDS$ and an upmost dominating set $D$ for it.
If $v\in D$, then there exists
%must exist 
some child $v'$ of $v$
in $T$ such that $v'\in S$ and $v$ is its only dominator
(or in other words, no child of $v'$ is in $D$).
\end{observation}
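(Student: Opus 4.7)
The plan is to derive the conclusion directly from the definition of an upmost dominating set, using a single local swap argument. Fix $v \in D$ and, assuming $v$ is not the root of $T$, form the modified set $D' = (D\setminus\{v\})\cup\{\parent(v)\}$. By the upmost property, $D'$ fails to dominate some client, so there exists $c\in S$ with $c\notin D'$ and no neighbor of $c$ in $D'$. The entire proof then consists of pinning down which vertex $c$ can be.

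First I would observe that, since $D$ did dominate $c$, the failure in $D'$ must be caused by the removal of $v$: either $c=v$, or $v$ was the unique neighbor of $c$ in $D$. I would then rule out the non-child cases one by one. The case $c=v$ is impossible because $\parent(v)\in D'$ is a neighbor of $v$. The case $c=\parent(v)$ is impossible because $\parent(v)\in D'$. If $c$ is any other vertex whose sole $D$-neighbor was $v$, then in a tree the only remaining possibility is that $c$ is a child of $v$, since $v$'s neighborhood in $T$ consists exactly of $\parent(v)$ and $v$'s children.

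Having identified $c$ as a child $v'$ of $v$ with $v'\in S$, the final step is to unpack what ``undominated by $D'$'' means for $v'$. In the tree $T$, the potential dominators of $v'$ are $v'$ itself, $\parent(v')=v$, and the children of $v'$. Since $v' \notin D'$ and no neighbor of $v'$ is in $D'$, we get $v'\notin D$ and no child of $v'$ lies in $D$. Combined with the fact that $v$ is the unique $D$-neighbor of $v'$, this yields exactly the stated conclusion: $v'\in S$, and $v$ is the only dominator of $v'$.

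The only real subtlety is the boundary case when $v$ is the root of $T$, where $\parent(v)$ is undefined and the upmost condition must be interpreted accordingly; since the definition quantifies over $w\in D$ with a well-defined parent, the conclusion for the root either follows vacuously or by a symmetric argument using the fact that removing the root altogether must expose an undominated client. I do not expect this case to present any real obstacle — the entire argument is a short neighborhood analysis, and the tree structure (giving $v$ only one neighbor above and the rest below) is what makes the case analysis terminate cleanly.
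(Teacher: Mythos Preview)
Your argument is correct and is exactly the intended one: the paper does not give a proof at all, merely stating that the observation is ``immediate from the definition of upmost dominating sets,'' and your swap-and-case-analysis is the natural way to unpack that. Your flagging of the root case is apt; in the paper the observation is only ever invoked for layer-leaves (hence non-roots), so the ambiguity in the definition at the root never matters.
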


\begin{observation}
\label{obs:unique optimal upmost}
For any instance $(T,S)$ of $\CDS$, the dominating set selected by Procedure
$\PROC$ is equal to the unique optimum upmost solution $D^*$.
\end{observation}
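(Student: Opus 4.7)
The observation reduces to three facts: that the set $D$ returned by $\PROC$ dominates $S$ and has minimum cardinality (a folklore bottom-up greedy fact for tree dominating sets, which I would only sketch), that $D$ is in addition upmost, and that the upmost optimum is unique. Combining these forces $D = D^*$. Optimality is a standard leaf-stripping exchange: the deepest unsatisfied client in any optimum can be assumed, after a size-preserving swap, to be dominated by its parent rather than by itself or a child, and repeating this bottom-up transforms any optimum into one containing all of $\PROC$'s selections.

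To see that $D$ is upmost, I would observe that whenever $\PROC$ inserts a vertex $w$ via rule~(1), the insertion is triggered by some child $v' \in S$ of $w$ that is undominated at the moment $\PROC$ processes $w$. Because $\PROC$ works strictly bottom-up, no descendant of $v'$ can join $D$ later, so in the final set $D$ the vertex $w$ is the unique dominator of $v'$. Replacing $w$ by $\parent(w)$ therefore leaves $v'$ uncovered, which is exactly the upmost condition. The rule~(2) case (a root client) is analogous, with $w$ itself playing the role of $v'$ and removal of $w$ leaving $w$ undominated.

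For uniqueness of the upmost optimum I would argue by contradiction. Suppose $D_1 \neq D_2$ are two upmost optima and let $w$ be a deepest vertex in the symmetric difference $D_1 \triangle D_2$; say $w \in D_1 \setminus D_2$. Observation~\ref{obs: upmost: x in D has needy child} applied to $D_1$ produces a child $v' \in S$ of $w$ whose only $D_1$-dominator is $w$; hence $v' \notin D_1$ and no child of $v'$ lies in $D_1$. Both $v'$ and each of its children are strictly deeper than $w$, so by the choice of $w$ they cannot lie in $D_1 \triangle D_2$, forcing $v' \notin D_2$ and no child of $v'$ in $D_2$. Combined with $w \notin D_2$, this leaves $v'$ undominated by $D_2$, contradicting the assumption that $D_2$ is a dominating set covering $v' \in S$. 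The main obstacle I anticipate is making the optimality step fully rigorous—specifically, ensuring that each leaf-stripping swap in the exchange preserves the domination of every client, which requires sequencing the swaps in the same bottom-up order used by $\PROC$ so that descendants of a swapped-out vertex have already been handled.
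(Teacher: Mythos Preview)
The paper does not give a proof of this observation; it is stated as folklore and used without further justification. Your proposal therefore goes beyond what the paper provides, and your three-part decomposition (optimality of $\PROC$, upmostness of its output, uniqueness of the upmost optimum) is a correct and natural way to fill the gap.

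Your uniqueness argument via the deepest vertex in $D_1 \triangle D_2$ is clean and works. One small point to tidy: Observation~\ref{obs: upmost: x in D has needy child} need not produce a child $v'$ when $w$ is the root (the root may be in $D$ solely to dominate itself). But in that case, if the root is the deepest element of the symmetric difference then it is the only element, so $|D_1| \ne |D_2|$, contradicting optimality; alternatively the root itself plays the role of the needy client. Either patch is immediate. Similarly, in your upmostness paragraph, the rule~(2) case concerns the root, where $\parent(w)$ is undefined and the upmost condition is vacuous, so there is nothing to check.

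Your stated obstacle about sequencing the leaf-stripping swaps is real but routine: the usual phrasing is to take a deepest undominated client $v'$, observe that any dominating set must contain $v'$, a child of $v'$, or $\parent(v')$, and note that swapping whichever of these is present for $\parent(v')$ never loses coverage (children of $v'$ dominate only already-handled or nonexistent clients below, and $v'$ itself dominates a subset of what $\parent(v')$ dominates among not-yet-handled clients). Processing clients in nonincreasing depth makes this inductive.
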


We further partition the dominators of $D^*[v]$ into subsets,
according to whether they are layer-leaves or internal nodes,
and identify also the set of all {\em external} dominators,
namely, dominators that are either outside $T[v]$ or layer-leaves.
 
$D^{*,\lleaves}[v] = D^* \cap \lleaves[v]$,
\hspace{20pt}
$D^{*,\main}[v]   = D^* \cap \main[v]$,
\hspace{20pt}
$D^{*,ext}[v]   = D^* \setminus D^{*,\main}[v]$,

$D^{*,\lleaves} = \bigcup_{v\in\RT} D^{*,\lleaves}[v]$,
\hspace{20pt}
$D^{*,\main} = \bigcup_{v\in\RT} D^{*,\main}[v]$.

We also partition the vertices in each set $D^{\lleaves}[v]$ into two subsets.
Let
\begin{align*}
D_{C,L}^{\lleaves}[v] = & \{ w \mid w ~\mbox{ was added to }~ D^{\lleaves}[v]
~\mbox{ in Steps \ref{step:3} and \ref{step:4} of the algorithm} \},
\\
D_F^{\lleaves}[v] = & \{ w \mid w ~\mbox{ was added to }~ D^{\lleaves}[v]
~\mbox{ in Step \ref{step:5} of the algorithm} \},
\end{align*}

\hspace{45pt}
$D_{C,L}^{\lleaves} = \sum_{v\in RT} D_{C,L}^{\lleaves}[v]$,
\hspace{20pt}
$D_{F}^{\lleaves} = \sum_{v\in RT} D_{F}^{\lleaves}[v]$.

\begin{observation}
\label{obs:main, root, parent of L tree}
For every $v\in\RLT$, where $z=\parent(v)\in T[w]$, \\
(a) 
$D^{\lleaves}[v] = D^{*,\lleaves}[v] = \emptyset$,
and \\
\hspace{10pt}
(b) $D^{\main}[v] = D^{*,\main}[v]$.
\end{observation}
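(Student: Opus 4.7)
Part (a) is essentially definitional. Because $T[v]$ is an L-tree, every leaf of $T[v]$ is a real leaf of $T$, so $\lleaves[v]=\emptyset$; hence $D^{*,\lleaves}[v]=D^*\cap\emptyset=\emptyset$. For the algorithmic equality $D^{\lleaves}[v]=\emptyset$, I would inspect Procedure $\AA$ and observe that the only way a vertex enters $D^{\lleaves}[u]$ is as $\parent(v'')$ for some subtree root $v''\in\RT$ processed in Steps~\ref{step:3}--\ref{step:5} whose parent lies in $T[u]$. Since $T[v]$ is an L-tree, no such subtree $T[v'']$ sits below $T[v]$ in the decomposition, so $D^{\lleaves}[v]$ receives nothing.

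For part (b), I plan to argue that the second-phase run of $\PROC$ on $(T[v],S''[v])$ produces exactly $D^*\cap T[v]$. The key structural property is that $T[v]$ is closed from below: the subtree of $T$ hanging from any $u\in T[v]$ is entirely contained in $T[v]$, because every leaf of $T[v]$ is a real leaf of $T$. A bottom-up induction then shows that the global Procedure $\PROC$ applied to $(T,S)$ and the local modified $\PROC$ applied to $(T[v],S[v])$ make identical decisions at every internal vertex of $T[v]$, and the local modification fires (adding $\parent(v)$) iff $v\in S$ is undominated by its children in $D^*$. Combining this with Observation~\ref{obs:unique optimal upmost}, which identifies the output of $\PROC$ with $D^*$, yields that $D^*\cap T[v]$ equals the set of internal selections of the local modified $\PROC$ on $(T[v],S[v])$.

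I then split the second phase into two cases. If $\parent(v)\notin D^{\lleaves}$, then no member of $D^{\lleaves}$ is incident to $T[v]$ (the only candidate was $\parent(v)$), so $S''[v]=S[v]$; the second phase is a verbatim rerun of the first-phase modified $\PROC$, the modification again does not fire, and the output is exactly $D^*\cap T[v]$. If $\parent(v)\in D^{\lleaves}$, then $v\in S$ and is now dominated externally, so $S''[v]=S[v]\setminus\{v\}$; removing $v$ from the client set alters neither the bottom-up processing of any proper descendant of $v$ (whose decisions depend only on descendants' client status) nor the rule-1 decision at $v$ (which depends only on $v$'s children), and the modification does not fire since $v\notin S''[v]$. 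In both cases the second phase returns $D^*\cap T[v]$ and nothing outside $T[v]$. Since $\main[v]=T[v]$ for an L-tree, this gives $D^{\main}[v]=D^{*,\main}[v]$.

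The main subtlety I anticipate is Case~2: carefully verifying that deleting the single client $v$ from the input set $S[v]$ preserves every selection of $\PROC$ inside $T[v]$, and that the modified rule does not spuriously reintroduce $\parent(v)$ into $D^{\main}[v]$ during the second phase.
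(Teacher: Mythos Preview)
Your argument for (a) is correct and matches the paper exactly.

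For (b), your approach is in the same spirit as the paper's---both rely on Observation~\ref{obs:unique optimal upmost} together with the fact that for an \leaftree\ the bottom-up processing of $\PROC$ is unaffected by anything outside $T[v]$. The paper's proof is far terser: it simply asserts that both $D^{\cA}$ and $D^*$ induce the (unique) optimal upmost dominating set on $T[v]$ and invokes uniqueness. Your explicit case analysis on whether $\parent(v)\in D^{\lleaves}$ is a legitimate way to unpack that assertion and make precise why the second-phase run on $S''[v]$ still reproduces $D^*\cap T[v]$; in particular, your treatment of the subtle Case~2 (that deleting the single client $v$ leaves every rule-1 decision unchanged and prevents the root-modification from firing) is correct.

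There is one small slip in your Case~2: the implication ``$\parent(v)\in D^{\lleaves}\Rightarrow v\in S$'' is unjustified. The vertex $\parent(v)$ is a layer-leaf of $T[w]$ and may have several children that are subtree roots; it can land in $D^{\lleaves}[w]$ via Step~\ref{step:3}, \ref{step:4}, or \ref{step:5} applied to a \emph{sibling} subtree $T[v']$ (e.g., a \fulltree\ rooted at another child of $\parent(v)$), without $v$ itself being a client. The fix is immediate: if $\parent(v)\in D^{\lleaves}$ but $v\notin S$, then $v\notin S'$, hence $S''[v]=S[v]$, and the second phase is again a verbatim rerun of the first; the modification cannot fire since $v\notin S$, and your Case-1 reasoning applies. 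With this small additional sub-case your argument goes through.
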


\begin{proof}
Claim (a) follows trivially since \leaftrees\ have no layer-leaves,
so $\lleaves[v]=\emptyset$.

Claim (b) follows from the observation that for an \leaftree\ $T[v]$,
both $D$ and $D^*$
induce optimum upmost dominating sets for $T[v]$, and these induced
dominating sets, $\tilde D$ and ${\tilde D}^*$,
are identical by Obs. \ref{obs:unique optimal upmost}.
\end{proof}

It may be instrumental to pause and make the following observation
concerning the sets $\tilde D$ and ${\tilde D}^*$ discussed in the above proof.
For the purpose of dominating the clients of $S[v]$, either both sets contain
$z=\parent(v)$ or both do not. One might hope that this will establish that
$D^{*,\lleaves}[w] = D^{\lleaves}[w]$.
However, this argument is false,
since we need to account for the possibility that one of the dominating sets
($D$ or $D^*$) includes $z$ in order to dominate another client child $v'$,
other than $v$, while the other dominates $v'$ in some other way, 
and does not include $z$.
Nevertheless, we can prove the following weaker properties, which suffice for our purpose.

\begin{lemma}
\label{cor:D vs D* for lleaves}
$D_{C,L}^{\lleaves}[v] ~\subseteq~ D^{*,\lleaves}[v] ~\subseteq~ D^{\lleaves}[v]$~~~
for every $v\in\RT$.
\end{lemma}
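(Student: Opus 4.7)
My plan is to prove both inclusions using Observations~\ref{obs: upmost: x in D has needy child} and~\ref{obs:unique optimal upmost} together with a locality property of $\PROC$: because this bottom-up greedy's add/don't-add choice at any non-root vertex depends only on that vertex's children and the client set restricted to the subtree rooted there, running $\PROC$ on $(T,S)$ and on $(T[c],S[c])$ (respectively, on $(\tT[c],S\cap\tT[c])$) produces the same decisions at every vertex strictly below $c$. The delicate case is the peak-tree: although $D^*$-vertices can appear in the outside subtrees hanging below $\tT[c]$, the peak's defining boundary condition --- every edge leaving $\tT[c]$ has both endpoints outside $S$ --- prevents any such outside addition from ever showing up as an undominated-client trigger at a boundary vertex, which is exactly what is needed for rule~1 inside $\tT[c]$ to be insensitive to anything beyond the boundary. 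I expect verifying this locality claim for peak-trees to be the main technical obstacle.

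For the second inclusion $D^{*,\lleaves}[v]\subseteq D^{\lleaves}[v]$, I would take any $z\in D^{*,\lleaves}[v]$ and apply Observation~\ref{obs: upmost: x in D has needy child} to extract a child $c$ of $z$ with $c\in S$ such that $z$ is the unique $D^*$-dominator of $c$; in particular neither $c$ nor any child of $c$ lies in $D^*$. Since $z$ is a layer-leaf, $c$ is the root of a child subtree $T[c]$. If $T[c]$ is a \fulltree, Step~\ref{step:5} of $\AA$ inserts $z$ into $D^{\lleaves}[v]$ directly. Otherwise the locality claim transfers ``no child of $c$ lies in $D^*$'' to ``no child of $c$ lies in the local $\PROC$-output $D'$,'' and since each client child $c'$ of $c$ must be $D^*$-dominated by one of its own children (because $c,c'\notin D^*$), those dominators also appear in $D'$. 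Thus when $\PROC$ processes $c$, no client child is undominated, rule~1 does not fire, and $c$ itself is an undominated client at the root, so the root-modification rule fires and adds $z$ via Step~\ref{step:3} or~\ref{step:4}.

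For the first inclusion $D_{C,L}^{\lleaves}[v]\subseteq D^{*,\lleaves}[v]$, I would reverse the reasoning. If $z$ was added in Step~\ref{step:3} or~\ref{step:4} while processing some child subtree $T[c]$ (or its peak), the root-modification fired at $c$ in the local $\PROC$ run, so $c\in S$, no child of $c$ lies in the local output $D'$, and rule~1 did not fire at $c$. By Observation~\ref{obs:unique optimal upmost}, $D^*$ equals the output of $\PROC$ applied to $(T,S)$; the locality property then yields that the decisions below $c$ in $D^*$ coincide with those in $D'$. Hence no child of $c$ lies in $D^*$, every client child of $c$ is dominated by its own children, rule~1 at $c$ still does not fire in the whole-tree run, and rule~2 is inapplicable since $c$ is not the root of $T$. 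Therefore $c\notin D^*$, and when $\PROC$ subsequently processes $z=\parent(c)$ the client $c$ remains undominated, so rule~1 at $z$ fires; combined with $z\in\lleaves[v]$ this gives $z\in D^{*,\lleaves}[v]$.
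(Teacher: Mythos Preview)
Your proposal is correct and follows essentially the same approach as the paper's proof: both directions hinge on the fact that the first-phase $\PROC$ call on $T[v']$ (or $\tT[v']$) and the global optimum $D^*$ agree on the relevant subtree, and then a case split on whether the child subtree is an \leaftree, \cuttree, or \fulltree. The paper phrases this agreement more tersely---simply asserting via Observation~\ref{obs:unique optimal upmost} that both $D^*\cap(T[v']\cup\{z\})$ and the local $\PROC$ output are the unique optimum upmost solution for the subtree and hence equal---whereas you unpack the same fact into an explicit operational ``locality'' argument tracing rule~1 and the root-modification; the content is the same.
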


\def\PROOFTA{
To prove the second containment, suppose $z \in D^{*,\lleaves}[v]$.
As $D^*$ is an upmost dominating set for $T$,
by Obs. \ref{obs: upmost: x in D has needy child},
$z$ must have some child $v'$ such that $v'\in S$ and $v'$ is not dominated
by any of its children it $T$.
We argue that this $v'$ forces $z$ to be in $D^{\lleaves}[v]$.
To see this, consider the following three cases.
\begin{itemize}
\item
If $v'\in\RLT$, then both $D\cap(T[v']\cup\{z\})$ and $D^*\cap(T[v']\cup\{z\})$
are optimum upmost dominating sets for $T[v']$, hence they are identical
by Obs. \ref{obs:unique optimal upmost}, and therefore $z\in D$, implying
that $z \in D^{\lleaves}[v]$.
\item
If $v'\in\RCT$, then both $D\cap(\tT[v']\cup\{z\})$ and
$D^*\cap(\tT[v']\cup\{z\})$ are optimum upmost dominating sets for $\tT[v']$,
%hence they are identical by Obs. \ref{obs:unique optimal upmost},
and $z \in D^{\lleaves}[v]$ by the same argument.
\item
If $v'\in\RFT$, then $z=\parent(v') \in D^{\lleaves}[v]$ by Step \ref{step:5}
of the algorithm.
\end{itemize}

To prove the first containment, suppose $z \in D_{C,L}^{\lleaves}[v]$.
Then $z$ was added in Step \ref{step:3} or \ref{step:4} of the algorithm
to the set $D^{\lleaves}[v]$ since it belonged to the dominating set $D[v']$
generated by Procedure $\PROC$ for the subtree $T[v']$ for some vertex
$v'\in\RLT\cup\RCT$.
As the procedure always generates an upmost dominating set,
it follows that $v'\in S$, and after selecting $D[v']$,
$v'$ is still not dominated.
%by any of its children in $T$.
As $D^*$ also induces an upmost dominating set for $T[v']$,
the same holds for $D^*[v']$, hence $\parent(v')=z$ must be in $D^*$ as well,
i.e., $z \in D^{*,\lleaves}[v]$.
} 
%\begin{proof}
%\PROOFTA
%\end{proof}

\begin{observation}
\label{obs:main of C tree}
For every $v\in\RCT$, ~~ $D\cap \tT[v] = D^*\cap \tT[v]$.
\end{observation}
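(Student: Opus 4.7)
The plan is to apply the uniqueness of the optimum upmost solution (Observation~\ref{obs:unique optimal upmost}) to the peak-tree instance $(\tT[v], S\cap\tT[v])$, where $\PROC$'s root modification permits $\parent(v)$ as an external dominator of $v$. I will show that both $D\cap\tT[v]$ and $D^*\cap\tT[v]$ agree with $D^{opt}_{peak}\cap\tT[v]$, the in-$\tT[v]$ part of the unique optimum upmost solution $D^{opt}_{peak}$ produced by $\PROC$ on the peak-tree. The key structural fact is an \emph{isolation property}: every client $c\in S\cap\tT[v]$ with $c\neq v$ has all its neighbors in $\tT[v]$, because $\tT[v]$ is a subtree rooted at $v$ (so $\parent(c)\in\tT[v]$) and the defining boundary condition of the peak forbids $c\in S$ from having a child outside $\tT[v]$. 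Consequently, the only external influence on $\tT[v]$-clients is through $\parent(v)$ dominating $v$, and internal dominators contribute solely to dominating internal clients. A direct corollary is that $\PROC$'s bottom-up decisions at each $x\in\tT[v]$ depend only on the subtree of $\tT[v]$ rooted at $x$, and so the $\tT[v]$-part of $\PROC$'s output on the peak-tree is invariant to whether $v$ itself is included in the client set---the only effect of removing $v$ from $S$ is to disable the root modification at $v$.

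For $D\cap\tT[v]$: in Step~\ref{step:4} of the first phase, $\PROC$ is applied to the peak-tree and, by Observation~\ref{obs:unique optimal upmost}, produces $D^{opt}_{peak}$; the choice of whether $\parent(v)$ is included is recorded in $D^{\lleaves}$. In the second phase, $\PROC$ is applied to $(T[v],S''[v])$, whose in-$\tT[v]$ client set coincides with either $S\cap\tT[v]$ or $(S\cap\tT[v])\setminus\{v\}$ depending on whether $\parent(v)\in D^{\lleaves}$. By the isolation corollary, the in-$\tT[v]$ output of the second phase equals $D^{opt}_{peak}\cap\tT[v]$ in either case, giving $D\cap\tT[v]=D^{opt}_{peak}\cap\tT[v]$.

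For $D^*\cap\tT[v]$: depending on whether $v$ is dominated by $D^*\cap\tT[v]$, the set $D^*\cap\tT[v]$ is a dominating set for either $(\tT[v],S\cap\tT[v])$ or $(\tT[v],(S\cap\tT[v])\setminus\{v\})$, where in the latter case $\parent(v)\in D^*$ must be the dominator of $v$. Upmost (for the peak-tree instance) follows from upmost of $D^*$ combined with Observation~\ref{obs: upmost: x in D has needy child} and isolation: the needy-child witness of any $w\in D^*\cap\tT[v]$ is a client neighbor of $w$ and hence lies in $\tT[v]$. Optimality is shown by an exchange argument: any strictly smaller peak-tree dominating set could be substituted for $D^*\cap\tT[v]$ inside $D^*$ (retaining $\parent(v)$ if it is present in $D^*$ for external reasons, which are flagged by Observation~\ref{obs: upmost: x in D has needy child} applied to $\parent(v)\in D^*$), producing a smaller global dominating set and contradicting optimality of $D^*$. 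By Observation~\ref{obs:unique optimal upmost} and the isolation corollary, $D^*\cap\tT[v]=D^{opt}_{peak}\cap\tT[v]$. Combined with the previous paragraph, $D\cap\tT[v]=D^*\cap\tT[v]$. The main subtlety will be verifying the invariance of $\PROC$'s in-$\tT[v]$ output under the removal of $v$ from the client set, and carefully disentangling $\parent(v)$'s potentially dual role---covering $v$ inside the peak while simultaneously covering a sibling of $v$ outside---in the exchange step.
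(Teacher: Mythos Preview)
Your approach is essentially the same as the paper's---both argue that $D$ and $D^*$ induce the unique optimum upmost dominating set on the peak-tree instance and then invoke Observation~\ref{obs:unique optimal upmost}---but you supply considerably more detail where the paper gives only a two-sentence assertion. Your explicit statement and use of the isolation property, the tracking of the two algorithmic phases, and the exchange argument for optimality of $D^*\cap\tT[v]$ are exactly the content the paper's proof leaves implicit.

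One small caveat worth flagging: your claim that $S''\cap\tT[v]$ differs from $S\cap\tT[v]$ only possibly at $v$ tacitly assumes no layer-leaf of $T[v]$ that lies in $D^{\lleaves}$ also lies in $\tT[v]$. This can fail when the only ``cut'' on some root-to-root path through $T[v]$ occurs at the edge between a layer-leaf $\ell$ and the root $w$ below it; then $\ell$ may be in $\tT[v]$ and simultaneously in $D_F^{\lleaves}$ (via Step~\ref{step:5}), dominating its parent and removing it from $S''$. In that same scenario $\ell$ may belong to $D\cap\tT[v]$ but not $D^*\cap\tT[v]$. The paper's brief proof does not address this edge case either, and the observation is not invoked in the subsequent analysis, so this is a shared loose end rather than a flaw specific to your write-up.
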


\begin{proof}
The claim follows from the observation that for a \cuttree\ $T[v]$,
both $D$ and $D^*$
induce optimum upmost dominating sets for $\tT[v]$, and these induced
dominating sets, $\tilde D$ and ${\tilde D}^*$,
are identical by Obs. \ref{obs:unique optimal upmost}.
\end{proof}

We make use of the following straightforward monotonicity property.

\begin{observation}
\label{obs:monotonicity}
For every rooted tree $T$ and two client sets $S_1\subseteq S_2$,
the corresponding optimum dominating sets $D_1$ and $D_2$,
for $(T,S_1)$ and $(T,S_2)$ respectively, satisfy
$|D_1| \le |D_2|$.
\end{observation}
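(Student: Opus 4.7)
The plan is to exploit the fact that dominating set is a covering-type constraint: enlarging the client set can only tighten the requirement, never loosen it. Concretely, I would observe that any dominating set for $(T,S_2)$ is automatically a dominating set for $(T,S_1)$ whenever $S_1 \subseteq S_2$, because the definition of a client dominating set requires only that each client have a dominator in its closed neighborhood, and this condition is monotone in the client set.

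Formally, I would proceed in two lines. First, let $D_2$ be the optimum dominating set for $(T,S_2)$, so that every $c \in S_2$ has some $v \in D_2$ with $v = c$ or $v$ adjacent to $c$. Since $S_1 \subseteq S_2$, the same condition holds for every $c \in S_1$, which shows that $D_2$ is a (not necessarily optimum) client dominating set for $(T,S_1)$. Second, by optimality of $D_1$ for $(T,S_1)$, we obtain $|D_1| \le |D_2|$.

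There is no real obstacle here; the argument is essentially one sentence. The only thing to be mildly careful about is that the definition used in the paper does allow a dominator to equal the client itself ($v=c$) and not merely be a neighbor, but this asymmetry is preserved under shrinking the client set, so the containment argument goes through unchanged. Thus the observation follows directly from the monotonicity of the covering relation, with no need to invoke the tree structure or any properties of upmost dominating sets.
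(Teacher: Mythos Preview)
Your proof is correct and matches the paper's approach: the paper states this observation without proof, calling it a ``straightforward monotonicity property,'' and your argument---that any client dominating set for $(T,S_2)$ is automatically one for $(T,S_1)$, hence $|D_1|\le|D_2|$ by optimality---is exactly the intended one-line justification.
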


\begin{lemma}
\label{lem:main}
$|D^{\main}[v]| \le |D^{*,\main}[v]|$
for every $v\in\RFT\cup\RCT$.
\end{lemma}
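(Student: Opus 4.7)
The plan is to invoke Observation \ref{obs:unique optimal upmost}: since $D^{\main}[v]$ is the output of Procedure $\PROC$ applied to $(T[v],S''[v])$, it is the unique optimum upmost dominating set for that instance when viewed as a standalone rooted tree. It therefore suffices to exhibit \emph{some} dominating set for $(T[v],S''[v])$ of cardinality at most $|D^{*,\main}[v]|$ that uses only vertices of $T[v]$. The natural candidate is $D^{*,\main}[v]$ itself, so the task reduces to proving that $D^{*,\main}[v]$ dominates every client of $S''[v]$ within $T[v]$.

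I would proceed by case analysis on the location of $c\in S''[v]$. For $c\in\main[v]\setminus\{v\}$, every neighbor of $c$ in $T$ already lies inside $T[v]$; any hypothetical dominator of $c$ lying in $D^{*,\lleaves}[v]$ would, by Lemma \ref{cor:D vs D* for lleaves}, also belong to $D^{\lleaves}[v]\subseteq D^{\lleaves}$, contradicting $c\in S''$. Hence $c$'s dominator in $D^*$ lies in $D^{*,\main}[v]$. For $c=v$, the only extra external candidate is $\parent(v)$, which by the same lemma cannot lie in $D^*$ without placing $v$ in the set dominated by $D^{\lleaves}$, again contradicting $c\in S''$; so the dominator of $v$ in $D^*$ again lies in $\main[v]$, hence in $D^{*,\main}[v]$.

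The main obstacle I anticipate is the case $c\in\lleaves[v]$: a layer-leaf client has its child-subtree roots $w'$ as external candidate dominators, and one must rule out the situation where some such $w'\in D^*$ is the sole dominator of $c$. I would exploit the first-phase design together with the subtree classification. If any child of $c$ is the root of a \fulltree, then Step \ref{step:5} forces $c=\parent(w')\in D^{\lleaves}$, contradicting $c\in S''$; so every child $w'$ of $c$ must root a \leaftree\ or \cuttree. For such a $w'$, if $w'\in D^*$, then the upmost property of $D^*$ (Observation \ref{obs: upmost: x in D has needy child}) combined with Observations \ref{obs:unique optimal upmost}, \ref{obs:main, root, parent of L tree} and \ref{obs:main of C tree} should allow one to track the first-phase $\PROC$ execution on $T[w']$ (respectively $\tT[w']$) and conclude that $c$ was in fact selected into $D^{\lleaves}[v]$ by Step \ref{step:3} or Step \ref{step:4}, once more contradicting $c\in S''$.

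Once the layer-leaf case is discharged, $D^{*,\main}[v]$ is a valid dominating set for $(T[v],S''[v])$ contained in $T[v]$, and the desired inequality $|D^{\main}[v]|\le|D^{*,\main}[v]|$ follows immediately from the optimality of $\PROC$ on this instance.
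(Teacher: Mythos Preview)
Your overall strategy coincides with the paper's. The paper proves the containment $S^{*'}[v]\subseteq S'[v]$ (every client of $S[v]$ with a $D^*$-dominator outside $\main[v]$ is already dominated by $D^{\lleaves}$) and then applies the monotonicity Observation~\ref{obs:monotonicity}; this is logically equivalent to your plan of showing that $D^{*,\main}[v]$ dominates $S''[v]$ inside $T[v]$ and invoking the optimality of $\PROC$. Your two cases $c\in\main[v]\setminus\{v\}$ and $c=v$ are exactly the paper's two cases, argued via the second containment of Lemma~\ref{cor:D vs D* for lleaves}.

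The layer-leaf case you single out is the genuine obstacle, but your proposed discharge does not work as written. When $w'$ roots an \leaftree\ or a \cuttree\ and $w'\in D^*$, the first-phase run of $\PROC$ on $T[w']$ (respectively $\tT[w']$) adds $c=\parent(w')$ to $D^{\lleaves}$ \emph{only} when $w'$ is itself a client left undominated at the root step of $\PROC$; the upmost property merely gives $w'$ a needy \emph{child}, not that $w'$ is a needy client, so $c\in D^{\lleaves}[v]$ does not follow from $w'\in D^*$. A concrete instance: take a path $v_0,\ldots,v_9$ rooted at $v_0$ with $k=5$ and $S=\{v_4,v_6\}$. Then $D^*=\{v_5\}$, the subtree $T[v_5]$ is an \leaftree, and $v_5\in D^*$; yet $\PROC$ on $(T[v_5],\{v_6\})$ never selects $v_4$ because $v_5\notin S$. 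Consequently $D^{\lleaves}=\emptyset$, $v_4\in S''[v_0]$, but $D^{*,\main}[v_0]=\emptyset$ does not dominate $v_4$ in $T[v_0]$. So the step ``$D^{*,\main}[v]$ dominates every client of $S''[v]$'' fails here, and the inequality cannot be extracted from your outline. (The paper's own argument restricts attention to $z\in\main[v]$ and likewise does not treat this layer-leaf situation.)
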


\def\PROOFTB{
Recall that $S'[v]$ is the set of clients from $S[v]$ that were dominated by
the set of dominators $D^{\lleaves}$ selected in the first phase.
Let $S^{*'}[v]$ be the set of clients from $S[v]$ that are dominated by
$D^{*,ext}[v]$.

\begin{claim}
\label{cl:contained}
$S^{*'}[v] ~\subseteq~ S^{'}[v]$.
\end{claim}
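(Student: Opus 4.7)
The plan is to enumerate which clients of $S[v]$ can possibly lie in $S^{*'}[v]$ and match each to a dominator in $D^{\lleaves}$. Observe first that $D^{*,\text{ext}}[v]$ consists of $D^*$-vertices that are either outside $T[v]$ or lie in $\lleaves[v]$. A client $c \in T[v]$ dominated by such a vertex $d$ must satisfy $c = d$ or $c$ is adjacent to $d$, and tracing the possible adjacencies between $T[v]$ and its outside/layer-leaves yields three boundary configurations: (a) $c = v$ and $\parent(v) \in D^*$; (b) $c \in \lleaves[v] \cap S$, dominated either by itself (when $c \in D^*$) or by a child of $c$ in $T[c] \setminus T[v]$ that belongs to $D^*$; (c) $c = \parent(u)$ for some $u \in \lleaves[v] \cap D^*$ with $c \in S$.

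For each configuration I locate a dominator of $c$ in $D^{\lleaves}$. Case (a) with $v \in \RFT$ is immediate from Step \ref{step:5}, which unconditionally places $\parent(v)$ into $D^{\lleaves}$. For $v \in \RCT$, Observation~\ref{obs: upmost: x in D has needy child} applied to $\parent(v) \in D^*$ yields a needy client child $v''$ of $\parent(v)$, which is itself a subtree root; if $v''=v$ then Step \ref{step:4}, via $\PROC$'s root-modification on $\tT[v]$, adds $\parent(v)$ to $D^{\lleaves}$, while if $v''$ is a sibling of $v$ then the processing of $T[v'']$ (according to its type, via Step \ref{step:3}, \ref{step:4}, or \ref{step:5}) adds $\parent(v'') = \parent(v)$. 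Cases~(b) and~(c) are handled by similar reasoning anchored at the layer-leaf $u \in \lleaves[v]$ involved, using the upmost structure of $D^*$ to force $\parent(u)$ into $D^{\lleaves}$ through the processing of $T[u]$ or of a suitable sibling subtree.

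All three cases reduce to a single sub-claim, which is the crux of the proof: if $w$ is a subtree root in $\RLT \cup \RCT$ with $w \in S$ and no child of $w$ lying in $D^*$ (i.e., $w$ is \emph{needy}), then the modified $\PROC$ on the relevant local instance ($T[w]$ for \leaftree, $\tT[w]$ for \cuttree) places $\parent(w)$ into $D^{\lleaves}$. The modification fires precisely when $w$ would be added by $\PROC$ and $w \in S$ is undominated at that moment, so it suffices to show that no child of $w$ lies in the local upmost optimum $\tilde D$ produced by $\PROC$ (Observation~\ref{obs:unique optimal upmost}). The main obstacle is this sub-claim, which I would establish by a local-global exchange argument: if some child $w' \in \tilde D$ were present, splicing $\tilde D \cap T[w']$ into $D^*$ in place of $D^* \cap T[w']$ yields an alternative global dominating set of size at most $|D^*|$, and together with the uniqueness of the upmost optimum (Observation~\ref{obs:unique optimal upmost}) and the assumption that no child of $w$ is in $D^*$, this contradicts the upmost-optimality of $D^*$. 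For the \cuttree case, the cut-property of the peak is essential to ensure that this splice does not disrupt the domination of clients outside the peak, since by construction the peak interacts with the rest of $T[w]$ only across edges whose both endpoints are non-clients.
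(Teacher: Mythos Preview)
Your overall plan is workable, but you are redoing work that the paper has already packaged into Lemma~\ref{cor:D vs D* for lleaves}. The paper's proof of this claim is two short cases: if $z=v$ is dominated externally then $\parent(v)\in D^{*,\lleaves}[w]$ for the parent subtree $T[w]$, and the second containment of Lemma~\ref{cor:D vs D* for lleaves} immediately gives $\parent(v)\in D^{\lleaves}[w]$; if $z=\parent(u)$ for some layer-leaf $u\in D^{*,\lleaves}[v]$, the same containment gives $u\in D^{\lleaves}[v]$. Your cases~(a) and~(c), together with the ``needy-root sub-claim'', amount to re-deriving that second containment instead of citing it. Your case~(b) is the one configuration the paper does not treat explicitly (its proof restricts attention to clients in $\main[v]$); the sub-case $z\in D^{*,\lleaves}[v]$ is again handled by Lemma~\ref{cor:D vs D* for lleaves}, while the sub-case where $z$ is a layer-leaf dominated by a child root genuinely needs the extra reasoning you sketch.

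There is, however, a real gap in your sub-claim argument. You assert that splicing $\tilde D\cap T[w']$ into $D^*$ in place of $D^*\cap T[w']$ yields a global dominating set of size at most $|D^*|$, but you give no reason for the size bound: $\tilde D$ is the optimum for the instance on all of $T[w]$ (or $\tT[w]$), not for the sub-instance on $T[w']$, so there is no a~priori inequality $|\tilde D\cap T[w']|\le |D^*\cap T[w']|$. Even granting the size bound, the ``contradiction with upmost-optimality of $D^*$'' is not articulated: the spliced set need not itself be upmost, so uniqueness of the upmost optimum (Observation~\ref{obs:unique optimal upmost}) does not directly apply to it. The clean route is the one the paper takes inside its proof of Lemma~\ref{cor:D vs D* for lleaves}: for an \leaftree\ $T[w]$ (respectively the peak $\tT[w]$ of a \cuttree), both $\tilde D$ and the restriction $D^*\cap(T[w]\cup\{\parent(w)\})$ are optimum upmost dominating sets for the local client set, and Observation~\ref{obs:unique optimal upmost} forces them to coincide; hence $w'\notin D^*$ implies $w'\notin\tilde D$ directly, with no splice-and-compare needed.
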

\begin{proof}
Consider some client $z\in S^{*'}[v]$, which is dominated by some external
dominator in $D^{*,ext}[v]$. We need to show that $z$ is also dominated
by some external dominator in $D^{\lleaves}$, so $z\in S^{'}[v]$.

%%%%%%%%%%%%%%%%%%%%%%%%%%%%%%%%
\begin{figure}[htb]
\centering
%\vspace{-1.5cm}
\includegraphics[width=2.3in]{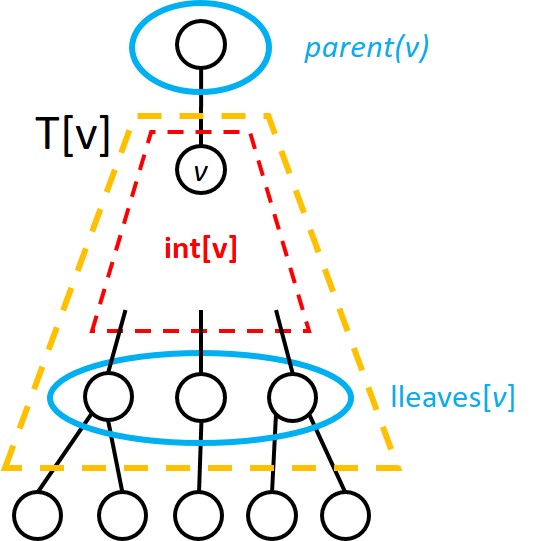}
\caption{The potential external dominators of $\main[v]$.}
\label{fig: ext-dom}
\end{figure}
%%%%%%%%%%%%%%%%%%%%%%%%%%%%%%%%%%%%%%%

\noindent
Note that the only internal nodes of $\main[v]$ that can potentially
be dominated by external nodes dominators are the root $v$,
which can be dominated by $\parent(v)$,
and the parents of nodes in $\lleaves[v]$, which can be dominated by
their children. 
Hence there are two cases to consider.

(1) $z=v$:
Then $z$ is dominated in $D^{*,ext}[v]$ by $\parent(z)$, which is its only
potential external dominator.
This implies that $\parent(z) \in D^{*,\lleaves}[w]$ for some $w$.
By the second containment in Lemma \ref{cor:D vs D* for lleaves},
also $\parent(z)\in D^{\lleaves}[w]$, so $z\in S^{'}[v]$.

(2) $z=\parent(w)$ for some $w\in D^{*,\lleaves}[v]$:
Then $z$ is dominated in $D^{*,ext}[v]$ by $w$.
By the second containment in Lemma \ref{cor:D vs D* for lleaves},
also $w\in D^{\lleaves}[v]$,
so $D^{\lleaves}$ dominates $z$ as well, hence $z\in S^{'}[v]$.
\end{proof}

Recall that $S''[v]=S[v] \setminus S'[v]$, the clients from $S[v]$
that were not dominated by the end of the first phase.
Letting $S^{*''}[v]=S[v] \setminus S^{*'}[v]$,
we have by Claim \ref{cl:contained} that
$$S^{''}[v] ~\subseteq~ S^{*''}[v].$$
The clients of $S^{*''}[v]$ must be dominated by $D^{*,\main}[v]$,
and the clients of $S^{''}[v]$ are dominated by $D^{\main}[v]$.
The lemma now follows from Obs. \ref{obs:monotonicity}.
} 
%\begin{proof}
%\PROOFTB
%\end{proof}

Lemma \ref{lem:main} and Obs. \ref{obs:main, root, parent of L tree}(b)
imply 
$|D^{\main}| \le |D^{*,\main}|$.
Combining with the first containment in Lemma \ref{cor:D vs D* for lleaves}, 
$$|D^{\cA}| - |D_{F}^{\lleaves}| ~=~ |D^{\main}| + |D_{C,L}^{\lleaves}|
~\le~ |D^{*,\main}| + |D^{*,\lleaves}| = |D^*|.$$
Denote by $t_f$ (respectively, $t_c$) the number of \fulltrees\ 
(resp., \cuttrees) in the decomposition of $T$.
Noting that $|D_F^{\lleaves}| = t_f$, we get that
$|D^{\cA}|
\le |D^*|+t_f$.
Since a full tree contains a path on which, for two consecutive vertices, at least one is in $S$, 
it is immediate that
$|D^*[v]| \ge (k-1)/4$ for every \fulltree\ $T[v],$
and therefore
$|D^*| \ge t_f \cdot (k-1)/4$.
It follows that the approximation ratio of the algorithm satisfies
$$\rho ~\le~ \frac{|D^{\cA}|}{|D^*|} ~\le~ \frac{|D^*|+t_f}{|D^*|}
~=~ 1 + \frac{t_f}{|D^*|} ~\le~ 1 + \frac{t_f}{t_f \cdot (k-1)/4}
~=~ 1 + \frac{4}{k-1}~.$$

We get the following result.

\begin{theorem}\label{ctas:tree:thm}
For every positive integer $k$, there exists a deterministic distributed local approximation algorithm for $\CDS$, with preprocessing allowed, whose online runtime on every $n$-vertex tree and every instance is at most $O(k)$ with approximation ratio of at most $1 + \frac{4}{k-1}$.
\end{theorem}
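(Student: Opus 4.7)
The plan is to assemble the theorem from ingredients already developed in the section: the preprocessing-based tree decomposition into subtrees of depth $k = 4\ell+1$, the bottom-up Procedure $\PROC$, the full/cut/leaf classification, and the accounting lemmas (Lemma \ref{cor:D vs D* for lleaves}, Lemma \ref{lem:main}, and Observation \ref{obs:main, root, parent of L tree}). The preprocessing stores at each vertex only its layer modulo $k$ together with a constant amount of structural information identifying the root of its subtree and its parent across a subtree boundary, so $O(1)$ bits suffice.

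First I would bound the runtime. Algorithm $\AA$ invokes Procedure $\PROC$ at most twice per subtree: once in the first phase (on an \leaftree\ $T[v]$, or on the peak $\tT[v]$ of a \cuttree) and once in the second phase on $S''[v]$. Since each subtree has depth at most $k$ and $\PROC$ is a single leaf-to-root sweep, every invocation finishes in $O(k)$ rounds, and all subtrees execute in parallel. The only inter-subtree coordination is the single flag indicating whether $\parent(v) \in T[w]$ was selected in the first phase; propagating this flag across subtree boundaries between the two phases costs one additional $O(k)$-round pass, yielding overall runtime $O(k)$.

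Next I would derive the approximation ratio by replaying the accounting already sketched in the analysis subsection. Summing Lemma \ref{lem:main} over $\RFT \cup \RCT$ together with Observation \ref{obs:main, root, parent of L tree}(b) over $\RLT$ gives $|D^\main| \le |D^{*,\main}|$; combining with the first containment of Lemma \ref{cor:D vs D* for lleaves} yields
\[
|D^\cA| - |D_F^\lleaves| \;=\; |D^\main| + |D_{C,L}^\lleaves| \;\le\; |D^{*,\main}| + |D^{*,\lleaves}| \;=\; |D^*|.
\]
Since Step \ref{step:5} of $\AA$ places exactly one parent-of-root per \fulltree, $|D_F^\lleaves| = t_f$, so $|D^\cA| \le |D^*| + t_f$. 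To turn this additive slack into a multiplicative bound, I would invoke the structural fact that every \fulltree\ contains a root-to-layer-leaf path of length $k$ on which no two consecutive vertices are non-clients; any dominator covers at most three consecutive vertices of such a path, so each \fulltree\ forces at least $(k-1)/4$ vertices of $D^*$. Thus $|D^*| \ge t_f (k-1)/4$, and $|D^\cA|/|D^*| \le 1 + 4/(k-1)$.

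The step I expect to require the most care is the coupling between the first and second phases: one must verify that the parents placed in $D^\lleaves$ by the first phase dominate exactly those clients that any external dominator of the optimum would dominate in $\main[v]$, so that the residual set $S''[v]$ is contained in the corresponding residual for $D^*$. This is the structural core of Lemma \ref{lem:main}, and it then combines cleanly with the monotonicity Observation \ref{obs:monotonicity} applied to the shrinkage $S \to S''$ on each subtree. Once that link is secure, the remainder is bookkeeping across the three subtree types $\RLT, \RCT, \RFT$ and a direct substitution of $k = 4\ell+1$.
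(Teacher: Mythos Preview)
Your proposal is correct and follows essentially the same approach as the paper: you invoke exactly the same accounting chain (Lemma~\ref{lem:main} plus Observation~\ref{obs:main, root, parent of L tree}(b) for $|D^{\main}| \le |D^{*,\main}|$, then the first containment of Lemma~\ref{cor:D vs D* for lleaves}, then $|D_F^{\lleaves}| = t_f$, then the $(k-1)/4$ lower bound per \fulltree), and your runtime and residual-set observations match the paper's analysis. The only minor looseness is the phrasing ``any dominator covers at most three consecutive vertices of such a path, so each \fulltree\ forces at least $(k-1)/4$ vertices''; the $3$-vertex coverage alone would give $k/3$, and you implicitly also use that on such a path at most two of any three consecutive vertices are clients, but the conclusion and intent are correct.
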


\subsection{A CTAS for \CDS ~on Planar Graphs}
\label{sec:ctas-planar}
%%%%%%%%%%%%%%%%%%%%%%%%%%%%%%%%%%%%%%%%%%%%%%%%%%%
% Direct Adaptation of Heydt et al. to CDS 
%%%%%%%%%%%%%%%%%%%%%%%%%%%%%%%%%%%%%%%%%%%%%%%%%%%
\subsubsection{Constant Approximation for \CDS~ on Planar Graphs}

The state of the art algorithm for constant round planar dominating set approximation in the $\LOCAL$ model achieves an approximation ratio of $20$ by a recent work of Heydt et al. \cite{heydt2021local}. Their algorithm and analysis extend to the client dominating set problem with slight modifications. See Algorithm \ref{alg:dom:planar:const} for the pseudocode.

\begin{algorithm}
\caption{Constant Approximation for MCDS in Planar Graphs}
\label{alg:dom:planar:const}
    \begin{algorithmic}[1]
        \State $C \gets$ client set
        \State For every $A \subseteq V(G)$, define $N_C(A) = \{w \mid w \in C \text{ and } (w, v) \in E(G) \text{ for some } v \in A \}$
        \\
        ~\hbox{\hskip 4.2cm}
        %For every $A \subseteq V(G)$ define 
        $N_C[A] = N_C(A) \cup (A \cap C)$
        \State $D_1 \gets \{v \in V(G) \mid \forall A \subseteq V(G) \setminus \{v\}, N_C[A] \supseteq N_C(v) \Rightarrow |A| \geq 4\}$
        \State For every $v \in V(G)$, compute $B_v ~=~ \left\{w \in V(G) \setminus D_1 ~\bigm|~  | N_C(v) \cup N_C(w) | \geq 10 \right\}$
        \State $D_2 \gets \left\{v \in V(G) \setminus D_1 \bigm| B_v \neq \emptyset \right\}$
        \State $D_3 \gets C \setminus N_C[D_1 \cup D_2]$
        \State Return $D_1 \cup D_2 \cup D_3$
    \end{algorithmic}
\end{algorithm}

\begin{theorem}
\label{thm:39-approx}
    Algorithm \ref{alg:dom:planar:const} provides a $39$-approximation for the MCDS problem in planar graphs.
\end{theorem}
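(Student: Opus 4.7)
The plan is to mirror the analysis of Heydt et al.\ \cite{heydt2021local} for the 20-approximation of dominating set on planar graphs, adapting each step to the client setting by replacing neighborhoods with client neighborhoods $N_C$ throughout. Correctness is immediate by construction: $D_3$ is exactly the set of clients not yet dominated by $D_1 \cup D_2$, so $D_1 \cup D_2 \cup D_3$ is a client dominating set. The real work lies in bounding $|D_1|$, $|D_2|$, and $|D_3|$ separately in terms of an optimum client dominating set $D^*$, with the three constants summing to $39$.

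For bounding $|D_1|$: for each $v \in D_1 \setminus D^*$, the set $A = D^* \cap N[N_C(v)]$ avoids $v$ and covers $N_C(v)$, so by the defining property of $D_1$ we have $|A| \geq 4$. Construct a bipartite ``witness'' multigraph with parts $D_1$ and $D^*$ in which each $v \in D_1 \setminus D^*$ sends four edges to distinct witnesses in $A$. By exploiting planarity of $G$ (and the planarity-inheritance of the witness structure embedded through common neighborhoods, as in the cited paper), bound the total number of such edges by a constant multiple of $|D^*|$, giving $|D_1| = O(|D^*|)$.

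For bounding $|D_2|$: each $v \in D_2$ comes equipped with a partner $w \in V \setminus D_1$ such that $|N_C(v) \cup N_C(w)| \geq 10$. Build the auxiliary graph whose edges are these $\{v,w\}$-partnerships; planarity of $G$ plus an Euler-formula argument (again as in Heydt et al.) forces $|D_2|$ to be at most another constant multiple of $|D^*|$. For bounding $|D_3|$: each $v \in D_3$ is a client whose $D^*$-dominator $w$ is not in $D_1 \cup D_2$. Since $w \notin D_2$ we have $B_w = \emptyset$, meaning $|N_C(w) \cup N_C(u)| < 10$ for every $u \in V \setminus D_1$, and in particular $|N_C(w)| \le 9$. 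Thus each such $w$ covers at most $9$ elements of $D_3$, charging $|D_3| \le 9|D^*|$.

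The main obstacle will be verifying that the planarity-based charging arguments of Heydt et al., originally designed for full neighborhoods $N(v)$, still produce the required constants when neighborhoods are restricted to $N_C(v)$. Most steps are structural consequences of planarity of $G$ and should transfer verbatim, but any step that implicitly assumed covering \emph{all} of a vertex's neighborhood — rather than just its client part — must be re-examined; this is also where the loss from $20$ to $39$ most plausibly enters, since clients restrict but never expand the set to be dominated, so the bounds on $|D_1|$ and $|D_2|$ remain as before, while the extra freedom of optimum solutions that ignore non-clients inflates the per-vertex charge in the $D_3$ step.
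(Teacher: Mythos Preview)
Your overall three-part decomposition matches the paper, but two of the three pieces have real gaps, and you never actually compute the constants that sum to $39$.

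\textbf{The $D_2$ bound.} Your plan --- build the ``partnership'' graph on $V\setminus D_1$ with an edge $\{v,w\}$ whenever $w\in B_v$, then invoke Euler --- cannot give a bound in terms of $|D^*|$: the vertex set of that auxiliary graph is not $D^*$, so $|E|\le 3|V|-6$ tells you nothing about $|D^*|$. The paper's argument is quite different and hinges on a structural claim you do not mention: for any $v\notin D_1$ and any $3$-element set $A_v$ covering $N_C(v)$, one has $B_v\subseteq A_v$ (proved by observing that a fourth common-client-rich vertex would force a $K_{3,3}$ minor). Once this is in hand, for $v\in D_2\setminus(\hat D\cup D^*)$ one can take $A_v\subseteq D^*$ (this is exactly what $v\notin\hat D$ buys), so every partner of $v$ lies in $D^*$; symmetry then charges each such $v$ to some $w\in D^*$ with $|B_w|\le 3$, giving $|D_2\setminus(\hat D\cup D^*)|\le 3|D^*|$.

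\textbf{The $D_3$ bound.} Note first that the ``$\cup$'' in the definition of $B_v$ in Algorithm~\ref{alg:dom:planar:const} is a typo for ``$\cap$'' (the paper's own proof speaks of \emph{common} clients). Your shortcut ``take $u=w$ in $B_w=\emptyset$'' then only works if $B_w$ is allowed to contain $w$ itself; the intended reading (a symmetric pair relation) excludes that, and indeed the paper does not get $|N_C(w)|\le 9$ but rather $|N_C(w)|\le 30$, via the detour through $D_1$: since $w\notin D_1$ there is a $3$-cover $A$ of $N_C(w)$, and if $|N_C(w)|\ge 31$ some $a\in A$ shares $\ge 10$ client neighbours with $w$, forcing $w\in D_2$. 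This yields $|D_3\setminus D^*|\le 31|D^*|$, not $9|D^*|$.

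\textbf{The arithmetic.} The paper's actual breakdown is $|D|\le |D^*| + |\hat D\setminus D^*| + |D_2\setminus(\hat D\cup D^*)| + |D_3\setminus D^*| \le (1+4+3+31)|D^*| = 39|D^*|$, with the $4$ coming from a $4$-colouring/independent-set argument bounding $|\hat D\setminus D^*|$. Your sketch leaves the first two constants unspecified and computes a third ($9$) that, if correct, would contradict your own claim that the loss from $20$ to $39$ enters at the $D_3$ step.
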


\begin{proof}
    The proof outline is almost same as that in \cite{heydt2021local}. Let $D^*= \{b_1, b_2, \dots b_{|D^*|}\}$ be some optimal solution for a given MCDS instance. Define the set 
    \begin{equation}
    \label{eq:def hatD}
    \hat{D} ~~=~~ \{v \in V(G) ~~\bigm|~~ \mbox{ for every }~ A \subseteq D^* \setminus \{v\},~~ N_C[A] \supseteq N_C(v) \Rightarrow |A| \geq 4 \}.
    \end{equation}
    
    Observe that $\hat{D}$ is defined similarly to the set $D_1$ constructed in the algorithm, except that $V(G)$ is replaced with $D^*$. Every element in $D_1$ must also belong to $\hat{D}$, so 
    \begin{equation}
    \label{eq:D1 Dhat}
    D_1 \subseteq \hat{D}    
    \end{equation}
    
    \begin{lemma}
        $|\hat{D} \setminus D^*| < 4 |D^*|$
    \end{lemma}
    
    \begin{proof}

        Suppose, for the sake of contradiction, that $|\hat{D} \setminus D^*| \geq 4 |D^*|$. Then there exists an independent set of size at least $|D^*|$ in the graph induced by $\hat{D} \setminus D^*$, as every subgraph of a planar graph is $4$-colorable.
        Let $I = \{a_1, a_2, \dots a_{|D^*|}\}$ be an arbitrary such independent set.
        For every client $c \in C$, let $f(c)$ be the smallest integer such that $b_{f(c)}$ dominates $c$.
        
        Let $G'$ be the graph obtained by contracting all edges $(c,~b_{f(c)})$ in $G$, for every $c \in C \setminus (I \cup D^*)$. The underlying simple graph induced by $I \cup D^*$ in the graph $G'$ is bipartite, and every vertex in $I$ has degree at least $4$. 
        Denoting the number of vertices and edges in this bipartite graph by $n$ and $m$, respectively,
        we get $m \geq 4 \cdot \frac{n}{2} \geq 2n$. However, every simple planar bipartite graph satisfies $m < 2n$, yielding the desired contradiction.
    \end{proof}
    
    \begin{lemma}
    \label{lem: 3D*}
        $|D_2 \setminus (D^* \cup \hat{D})| \leq 3 |D^*|$
    \end{lemma}
    
    \begin{proof}
        Consider any vertex $v \in D_2$ such that $v \not \in D^*$. By definition (See Line 6 in Algorithm \ref{alg:dom:planar:const}), $v \not \in D_1$, so by the definition of $D_1$  there must exist a set of size at most $3$ that dominates the client neighbors of $v$. Let $A_v = \{y_1, y_2, y_3\}$  be any such set for some $y_1, y_2, y_3$  (that need not be distinct). 
        
        \begin{claim}\label{clm:bvcontainment}
        For every $v$, the set $B_v$ computed by the algorithm satisfies $B_v \subseteq A_v$
        \end{claim}
        \begin{proof}
            Suppose, for the sake of contradiction, that there exists some $w \not \in A_v$ belonging to $B_v$. By the definition of $B_v$, $w$ and $v$ share (at least) $10$ common clients, $C''$. Note that $C''$ does not include $w$ and $v$. Moreover, $C''$ must also be dominated by the vertices of $A_v$, hence at least one of the vertices in $A_v$ must dominate at least $\lceil 10/3 \rceil = 4$ of these $10$ clients. Suppose this vertex is $y_1$. By the above discussions, we must have $|N_C(v) \cap N_C(w) \cap N_C(y_1)| \geq 3$, which implies the existence of $K_{3, 3}$ as a subgraph, contradicting the planarity of the graph. 
        \end{proof}
        
        The $v-w$ relation $v \in B_w$ is symmetric, so we can split $D_2$ as,
        \begin{eqnarray*}
        D^1_2 &=& \bigcup_{v \in D^* \setminus D_1} \{v\}\cup B_v~,
        \\
        D^2_2 &=& \bigcup_{v \in \hat{D} \setminus (D^* \cup D_1)} \{v\} \cup B_v~, \mbox{ and}
        \\
        D^3_2 &=& \bigcup_{v \not \in (\hat{D} \cup D^* \cup D_1)} \{v\} \cup B_v~.
        \end{eqnarray*}
        
        \begin{claim}
            $D^3_2 \subseteq D^1_2$
        \end{claim}
        \begin{proof}
            Consider a vertex $v' \in D^3_2$. Then
            there exists some vertex $v$ such that $v \not \in (\hat{D} \cup D^* \cup D_1)$ and $v'\in \{v\} \cup B(v)$. 
            Since $v\not\in \hat{D}$, by Eq. \eqref{eq:def hatD} there exists a set $A_v = \{b_1, b_2, b_3\} \subseteq D^*$ that dominates $N_C(v)$. By symmetry, if $b_i \in B_v$ then $v \in B_{b_i}$ and therefore $v$ and $B_v$ are included in $D^1_2$, so $v' \in D^1_2$. 
        \end{proof}
        
        \begin{claim}
            $D^2_2 \setminus \hat{D} = \emptyset$
        \end{claim}
         
        \begin{proof}
            Suppose, for sake of contradiction, that there exists some $w \in D^2_2 \setminus \hat{D}$. There must exist $v \in \hat{D}\setminus(D^* \cup D_1)$ such that $w \in B_v$. By symmetry $v \in B_w$. As $w \not\in \hat{D}$, there exists a set $A_w \subseteq D^*$ with $|A_w| \leq 3$ that dominates $N_C(w)$. From Claim \ref{clm:bvcontainment}, $B_w \subseteq A_w \subseteq D^*$. This implies that $v \in D^*$ which is a contradiction.
        \end{proof}
        
        Finally we have $D_2 \setminus (D^* \cup \hat{D}) \subseteq \cup_{v \in D^* \setminus D_1} B_v$ and since $|B_v| \leq |A_v| \leq 3$, we have 
        $|D_2 \setminus (D^* \cup \hat{D})| \leq 3 |D^*|$, 
        completing the proof of Lemma \ref{lem: 3D*}.
        \end{proof}
    \begin{lemma}
        If $v \not \in D_1 \cup D_2$, then $|N_C(v)| \leq 30$
    \end{lemma}
    
    \begin{proof}
        Suppose, for the sake of contradiction, that there is some vertex $v\not\in D_1 \cup D_2$ such that $N_C(v) \geq 31$. By the definition of $D_1$, as $v \not \in D_1$, there exists a set $A \subseteq V(G) \setminus \{v\}$ of size at most $3$, that dominates all clients of $v$, and therefore at least one vertex $w\in A$ dominates at least $\lceil 31 / 3 \rceil = 11$ clients. We must have $|N_C(v) \cup N_C(w)| \geq 10$ and therefore $v \in D_2$, leading to contradiction.
    \end{proof}
    
    The above lemma shows that after removing clients that are dominated by $D_1 \cup D_2$, every other vertex can dominate at most $30$ clients. Therefore, the set $D_3$ constructed in the last step of the algorithm, which takes all the remaining undominated clients to the dominating set, must be at most $31$ times the optimal, i.e., $|D_3 \setminus D^*| \leq 31$. Putting the lemmas together, we can bound size of $D = D_1 \cup D_2 \cup D_3$ as $|D| \leq |D^*| + |\hat{D}\setminus D^*| + |D \setminus (\hat{D} \cup D^*)| \leq |D^*| + |\hat{D}| + |D_2 \setminus(D^* \cup \hat{D})| + |D_3 \setminus D^*| \leq 39 |D^*|$, proving Theorem \ref{thm:39-approx}.
\end{proof}

\def\wt{{\sf wt}}

%%%%%%%%%%%%%%%%%%%%%%%%%%%%%%%%%%%%%%%%%%%%
\subsubsection{\texorpdfstring{A $1 + \epsilon$}{1+epsilon} approximation}

We adapt the distributed $1 + \epsilon$-approximation scheme of Czygrinow et al.~\cite{CHW08}, whose round complexity is $O(\left(\frac{1}{\epsilon}\right)^c\log ^* n)$ where $c = \log_{24/23} 3$. We first provide a high level overview of their algorithm and the major differences and difficulties towards adapting it to the recurrent $\CDS$ problem with preprocessing.

\textbf{Fast $\LOCAL$ Algorithm}. The graph $G$ is partitioned into several disjoint connected components (called clusters) such that (i) each cluster has diameter at most $d$, and (ii) the total number of edges crossing two clusters is at most $E$. All the cross edges are then removed and the dominating set problem is solved optimally and independently within each cluster. If the cluster diameter $d$ is small enough, then the previous step requires only $O(d)$ rounds, as the entire graph can be collected at some delegated leader who can then solve the problem locally. If the number of cross edges $E$ is small enough, then we get a good approximation of the dominating set.

For finding a good clustering, the algorithm first makes use of a constant approximation which can be obtained in constant rounds. Clustering around the dominators computed by the constant approximation results in clusters with diameter $d \leq 2$. Each cluster is then contracted into a single node. Let $G_0$ be the obtained underlying simple graph. Observe that $G_0$ has at most $39 |D^*|$ vertices, where $D^{*}$ is the optimal client dominating set. The graph $G_0$ is initially weighted with each edge having weight $1$.

Now suppose we are able to cluster $G_0$ into connected components $G'_1, G'_2, \dots G'_s$ so that the total weight of edges crossing clusters, $|E_{\mathsf{cross}}|$, is at most $\epsilon$ of its initial total (which is at most $3 \cdot 39|D^*|$ by planarity). Let $D$ be the union of set of dominators obtained by solving each graph $G'_i$ independently and optimally. We argue that $D$ is $1 + \epsilon'$ approximation. Let $D_2$ be the constant approximation obtained in the first step. Consider the set $\hat{D} = D^{*} \cup \{v \mid (u, w) \in E_{\mathsf{cross}} \text{ and } v \text{ dominates } w\}$. $\hat{D}$ is a valid dominating set and moreover, $\hat{D} \cap V(G'_i)$ dominates all clients in $V[G'_i]$ for every $i$.
Since $D$ was obtained by solving $G'_i$ optimally, we have $|D \cap V(G'_i)| \leq |\hat{D} \cap V(G'_i)$. Adding up over all clusters, we get $|D| \leq \hat{D}$.
However $|\hat{D}| \leq |D^*| + 2 \cdot (\epsilon \cdot (3 \cdot 39 |D^*|))$.
Plugging in we get $|D| \leq (1 + 234 \epsilon) |D^*|$.

A clustering of $G_0$ is computed by repeatedly applying a contraction process. The contraction process for a weighted graph $G$ is as follows. A large weight subset of the edges of $G$ is chosen and then oriented such that every node has out-degree at most $1$. Such oriented graphs are called \emph{pseudo-forests}. For a planar graph, it is possible to choose in one round a pseudo-forest that has at least $\frac{1}{6}^{th}$ the total weight of all edges. The pseudo forest is then $3$-colored using the Cole-Vishkin Algorithm. The $3$-coloring is used to split the forest into disjoint stars (graphs with diameter at most $2$), while not losing more than a quarter (in weight) of the edges of the pseudo-forest. Each star is then contracted into a single node. After contraction, it is possible that the graph has multiple edges. All multiple edges between a pair of nodes are replaced by a single edge whose weight is set to equal their total weight.
The above contraction process is applied repeatedly until the weight of the edges reduces to $\epsilon$ of the initial total. Since each contraction removes at least $\frac{1}{24}$ of the edges, it is sufficient to repeat the process $t = O(\log_{24/23}{\frac{1}{\epsilon}})$ times. 

Let the final graph obtained be $G_t$. $G_t$ provides a clustering of the original graph $G$, which can be obtained by uncontracting all the edges. The number of cross edges of the clustering is the weight of $G_t$. Each time a star is contracted, the diameter of the corresponding clusters increases by a multiplicative factor of at most $3$ and so the diameter 
of each cluster given by $G_t$ 
is $O(3^t)$.

The most time consuming step in this process is that of $3$-coloring the pseudo-forest, which takes $O(3^{i} \log^*n)$ rounds during the $i^{th}$ iteration of the contraction process. The other operations take $O(3^i)$ rounds. The total round complexity is $O(\sum\limits_{i=0}^t 3^i \log^* n) = O(3^t \log^* n)$.

\textbf{Adapting to $\CDS$.} First, we remove the edges that are not incident on any client. These edges do not contribute to the criteria for a set to be a dominating set, and they can be ignored. We then compute a constant approximation $\tilde{D}$ as per Algorithm \ref{alg:dom:planar:const}. The initial clustering is obtained by choosing for each client $c$ an arbitrary dominator of $c$ from $\tilde{D}$ and contracting the edge between them. Additionally, every vertex that is neither a client nor a dominator chooses an arbitrary neighboring client and the edge between them is merged. The remaining steps are identical to the previous procedure.

\textbf{Speeding up using a preprocessing phase.} One potential preprocessing operation that may improve the round complexity of the online stage might be to compute a proper $4$-coloring of the planar graph. Unfortunately, while a coloring of any graph remains valid after the removal of edges or vertices, it does not remain valid after contractions. An arbitrary precomputed coloring might not be of much use in coloring the contracted graphs that arise from repeated contractions. To accommodate contractions, we precompute a \textit{non-repetitive} coloring of $G$ (which is the only output of our preprocessing phase). A \textit{non-repetitive} coloring is a coloring of the graph such that for any even length simple path, the ordered set of colors in the first half of the path is different from that of the second half. Non-repetitive colorings were first proposed by Alon et al \cite{alon2002nonrepetitive}. The minimum number of colors required to realise a non-repetitive coloring is called the Thue number of the graph and is denoted by $\pi(G)$. Dujmovi\'c et al. \cite{Dujmovi__2020} showed recently that $\pi(G) \leq 768$ for all planar graphs $G$.

Suppose we have a pseudo forest $F$ that needs to be $3$-colored and suppose $F$ is obtained from $G_t$, i.e., after $t$ iterations of the contraction process. Let $\out(v)$ denote the other end of the outgoing edge of $v$ in $F$. In order to $3$-color the forest, it is sufficient to choose colors in such a way that $\out(v)$ and $v$ have different colors, for every $v$. We can associate with each node $v$ of $G_t$, a connected component (denoted $G_v$) in the original graph $G$ that contains the ends of all edges that were contracted to $v$. Choose any edge $e$ that crosses $G_v$ and $G_{\out(v)}$. Construct a spanning tree of $G_v$ and root it at the endpoint $r(v)$ of $e$ that lies in $G_v$. We now color $v$ with the ordered set of non-repetitive colors traced on the unique path from $r(v)$ to $r(\out(v))$, excluding $r(\out(v))$, in the graph $G_v \cup G_{\out(v)} \cup \{e\}$. We enumerate these colors from $1$ to $728^{d+1}$ where $d$ is the maximum diameter of the clusters. Let the computed path be $P_v$.
Observe that whenever $\out(\out(v)) \neq v$, the paths $P_v$ and $P_{\out(v)}$ can be concatenated to form a simple path in the graph $G$. If $P_v$ and $P_{\out(v)}$ have different lengths, then the colors assigned to them are different. Otherwise, by the property of a non-repetitive coloring, the ordered set of colors of $P_v$ and $P_{\out(v)}$ must be different. When $\out(\out(v)) = v$, we have a 2-cycle. In this case we color one of the nodes $\{v, \out(v)\}$ (whichever has higher id, say $v$) with its own non-repetitive color and redefine $P_{v} = \{\out(v)\}$. Now the paths $P_v$ and $P_{\out(v)}$ may be concatenated to obtain a simple path $P$. See Algorithm \ref{alg:3color:planar} for the pseudo-code.
We now have a $768^{d+1}$ coloring of the pseudo-forest $F$, which can then be reduced to a $3$-coloring using the Cole-Vishkin Algorithm. The complexity is $O(d \log^* {768^{d+1}}) = O(d \log^* d)$. This leads us to our main lemma:
\begin{lemma}
    Given a clustering of the graph $G$, Algorithm \ref{alg:3color:planar} provides a $3$-coloring of the graph obtained by contracting each cluster into a single vertex. Moreover this algorithm can be implemented as an $O(d \log^* d)$ round $\LOCAL$ protocol, where $d$ is the maximum diameter amongst the induced components of the clustering.
\end{lemma}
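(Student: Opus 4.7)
I would prove the two assertions separately: correctness of the intermediate coloring, and the round complexity.

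For correctness, I would first argue that the assignment that labels each contracted vertex $v$ by the ordered tuple of precomputed non-repetitive colors along the path $P_v$ is already a proper coloring of the pseudo-forest $F$ (before Cole--Vishkin). It suffices to show that $v$ and $\out(v)$ receive distinct tuples. I would split into three cases. First, if $|P_v| \neq |P_{\out(v)}|$, the two tuples have different lengths and are trivially distinct. Second, suppose $|P_v|=|P_{\out(v)}|$ and $\out(\out(v)) \neq v$; here I would verify that the concatenation $P_v \cdot P_{\out(v)}$ forms a \emph{simple} path in $G$ of even length, obtained by walking from $r(v)$ through the spanning tree of $G_v$ across the chosen inter-cluster edge $e$, into the spanning tree of $G_{\out(v)}$, and onward to $r(\out(v))$. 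Simplicity follows because $G_v$ and $G_{\out(v)}$ are vertex-disjoint clusters joined by the single edge $e$. Applying the defining property of the precomputed non-repetitive coloring of $G$ to this even-length simple path then forces the first half's color tuple to differ from the second half's, i.e.\ the color of $v$ differs from that of $\out(v)$. Third, in the two-cycle case $\out(\out(v))=v$, the algorithm already breaks the tie by assigning the higher-ID endpoint a singleton color drawn from a disjoint palette; since the two palettes cannot collide, the colors differ.

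Once the pseudo-forest has this proper coloring using at most $768^{d+1}$ intermediate colors (plus a constant extra palette for two-cycles), I would invoke the Cole--Vishkin color-reduction algorithm on $F$ to produce a valid $3$-coloring; Cole--Vishkin is applicable because $F$ is a pseudo-forest and the starting palette is bounded.

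For the complexity bound, I would account for the work phase by phase. Constructing a spanning tree of each cluster $G_v$, identifying $r(v)$ and $r(\out(v))$, and reading off the precomputed non-repetitive colors along $P_v$ all happen within the region $G_v \cup G_{\out(v)} \cup \{e\}$, which has diameter $O(d)$; hence this is an $O(d)$-round task in the $\LOCAL$ model on $G$. The Cole--Vishkin step terminates in $\log^*(768^{d+1})$ rounds of $F$, and since $\log^*(c^{d+1}) = O(\log^* d)$ for constant $c$, this is $O(\log^* d)$ rounds of $F$. Each round of $F$, however, must be simulated over $G$: communication between adjacent clusters requires $O(d)$ rounds because each cluster has diameter at most $d$ and a message from one cluster representative must traverse its own cluster, cross the inter-cluster edge, and traverse the neighbor. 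Thus the simulation cost is $O(d \log^* d)$ rounds, which dominates the additive $O(d)$ setup and yields the claimed bound.

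\emph{Main obstacle.} The delicate step is case (ii) of the correctness argument: I must be careful that the paths $P_v$ and $P_{\out(v)}$ concatenate into a \emph{simple} path in $G$ (so that the non-repetitive property legitimately applies) and that the excision of $r(\out(v))$ from $P_v$ is exactly what is needed so the two halves have equal length when $|P_v|=|P_{\out(v)}|$. Handling the two-cycle subcase cleanly---so that the color palette used there stays disjoint from the tuple-valued palette and does not blow up the intermediate color count beyond $O(768^{d+1})$---is the other detail that requires care.
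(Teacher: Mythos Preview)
Your outline follows the paper's argument, and the complexity accounting is correct. There is, however, a descriptive slip in case~(ii) that you should fix: by the algorithm's definition, $P_v$ already runs from $r(v)\in G_v$, across the edge $e_v$, and through $T_{\out(v)}$ up to (but excluding) $r(\out(v))$---so $P_v$ by itself spans two clusters. Hence the concatenation $P_v\cdot P_{\out(v)}$ threads through \emph{three} clusters $G_v$, $G_{\out(v)}$, $G_{\out(\out(v))}$, and simplicity requires all three to be vertex-disjoint; this is exactly where the hypothesis $\out(\out(v))\ne v$ enters, and your two-cluster justification does not yet cover it. For the two-cycle case the paper does not introduce a disjoint palette: it simply sets $\path(v)=\{r_v\}$, so the higher-ID node receives a length-$1$ tuple from the \emph{same} palette; when $P_{\out(v)}$ is also a singleton one checks that $r_v$ and $r_{\out(v)}$ must then be adjacent, and the proper-coloring property (a special case of non-repetitiveness on a path of two vertices) separates them. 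Your disjoint-palette idea would also work and keeps the color count at $O(768^{d+1})$, but it is not what the stated algorithm does.
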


Algorithm \ref{alg:3color:planar} is the main unique ingredient to our adaptation of Czygrinow et al's algorithm. Plugging this component into their algorithm directly leads to an $O_{\epsilon}(1)$ $\LOCAL$ algorithm. For concreteness, the complete clustering procedure is described in Algorithm \ref{alg:cds:planar} with some minor changes to account for the clients. Once clustering is done, we proceed in the same way, i.e., solve the $\CDS$ problem optimally and independently within each cluster. Solving $\CDS$ exactly requires NP-Hard problems to be solved in the online phase, which may be undesirable. This can be fixed by replacing the optimal solution with a $PTAS$ in planar graphs for the $\CDS$ problem by a similar adaptation of Baker's algorithm \cite{baker1994approximation}.

\begin{algorithm}
    \caption{$3$-coloring pseudo-forest}
    \label{alg:3color:planar}
    \begin{algorithmic}[1]
    % \hrule
    \Procedure{$3$-color}{}
        \Statex \textbf{Input}: 
        %We have four inputs in total,
        \Statex (i) $\col : V(G) \rightarrow [768],$ A non-repetitve coloring of the given planar graph $G$
        \Statex (ii) $\cluster : V(G) \rightarrow \mathbb{N}$, describes a partitioning of the vertices of $G$ that induce connected components of diameter at most $d$
        \Statex (iii) $G_t:$ the graph where every cluster is contracted to a single node.
        \Statex (iv) $\out : V(G_t) \rightarrow V(G_t)$, describes a pseudoforest in the graph $G_t$ \Comment{ $\out(v)$ is the other end of the unique outgoing edge from $v$}
        
        \Statex \textbf{Output}: $\col_f : V(G_t) \rightarrow [3]$, a proper $3$-coloring of the given pseudoforest
        
        \ForAll {clusters $v \in V(G_t)$ (in parallel) }
            \State $p \gets \out(v)$, the parent of $v$ in pseudo-forest of $G_t$
            \State Let $G_v, G_p$ be the connected components of $G$ that are contracted to $v, p$ in $G_t$
            \State $e_v \gets $ any edge in $G$ that crosses $G_v, G_p$ and $r_v \gets $ the end of $e$ in $G_v$
            \State $T_v \gets $ Any spanning tree of $G_v$, rooted at $r_v$
        \EndFor 
        \ForAll {clusters $v \in V(G_t)$ (in parallel)}
            \If {$\out(p) \neq v$ or $v < p$} \Comment{detect cycles of length $2$}
                \State $\path(v) \gets$ The unique path from $r_v$ to $r_p$ in the graph $T_v \cup T_p \cup \{e\}$
            \Else \Comment{Treating the case of cycle of length 2 separately}
                \State $\path(v) \gets \{r_v\}$
            \EndIf
        \EndFor
        \State $\col_f(v) \gets $ the ordered set of colors in $\path(v)$
        \State Enumerate $\col_f(v)$ using integers from $1$ to $768^{d + 1}$
        \State Reduce $\col_f(c)$ to a $3$-coloring using the Cole-Vishkin Algorithm
        \State \Return $\col_f$
    \EndProcedure
    % \hrule
    \end{algorithmic}
\end{algorithm}

\begin{theorem}
    For every planar graph $G$,
    \begin{itemize}
        \item $\SUPTIME(\CDS_{\epsilon}, G)$ is $O(\left(\frac{1}{\epsilon}\right)^{c} \log^*{\left(\frac{1}{\epsilon}\right)})$, where $c = \log_{24/23} 3$. 
        \item Realizing the above round complexity requires only $O(1)$(i.e. a constant independent of both $\epsilon$ and $G$) additional bits to be stored in each node of ~$G$.
    \end{itemize}
\end{theorem}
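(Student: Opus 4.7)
The plan is to assemble the two main ingredients already developed: Czygrinow et al.'s $1+\epsilon$-approximation scheme for planar dominating set (adapted to $\CDS$ as described via the initial client-driven clustering using Algorithm~\ref{alg:dom:planar:const}), and the preprocessing-aided 3-coloring routine (Algorithm~\ref{alg:3color:planar}) which replaces the $\log^* n$-factor of Cole--Vishkin by $\log^* d$, where $d$ is the current cluster diameter. The preprocessing phase will compute a non-repetitive coloring of $G$; since $\pi(G)\le 768$ for every planar graph \cite{Dujmovi__2020}, each vertex stores a color from a fixed alphabet of size $768$, i.e., a constant number of bits independent of both $n$, $G$, and $\epsilon$. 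This settles the second bullet immediately.

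For the round complexity, I would walk through the online algorithm iteration by iteration. Start by deleting edges not incident to clients and running Algorithm~\ref{alg:dom:planar:const} in $O(1)$ rounds to obtain an initial clustering of diameter $d_0 = O(1)$, giving a weighted contracted graph $G_0$ of total edge weight $O(|D^*|)$. Each contraction iteration (i) extracts a pseudoforest carrying a $1/6$ fraction of the edge weight, (ii) invokes Algorithm~\ref{alg:3color:planar} to properly $3$-color it, and (iii) decomposes the pseudoforest into stars and contracts them. Since each iteration triples the cluster diameter, after $t$ iterations we have $d_t = O(3^t)$, and the cost of iteration $i$ is $O(d_i \log^* d_i)$ rounds, dominated by the final one. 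Choosing $t = \lceil \log_{24/23}(1/\epsilon)\rceil$ iterations guarantees that the surviving cross-edge weight is at most $\epsilon$ times the initial total, which by the analysis recalled in the excerpt yields a $(1+O(\epsilon))$-approximation after solving each final cluster optimally (or via a planar PTAS).

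Combining these bounds, the total round complexity is a geometric sum dominated by the last iteration:
\[
\sum_{i=0}^{t} O(3^i \log^* 3^i) ~=~ O\!\left(3^t \log^* 3^t\right) ~=~ O\!\left((1/\epsilon)^{c}\,\log^*(1/\epsilon)\right),
\]
with $c = \log_{24/23} 3$, matching the claimed bound. Solving each final cluster locally takes $O(d_t) = O((1/\epsilon)^c)$ rounds after gathering the cluster at a delegated leader, which is absorbed in the same bound.

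The main obstacle I expect is verifying that Algorithm~\ref{alg:3color:planar} is correct even when only an $O(1)$-bit non-repetitive coloring of the original graph $G$ (not of the contracted graphs) is available. The subtle point is that after several rounds of contraction, one must argue that for every pseudoforest edge $(v,\out(v))$ the concatenated paths $P_v$ and $P_{\out(v)}$ form a simple path in $G$ (handling the $\out(\out(v))=v$ 2-cycle as a special case exactly as in the algorithm), so the non-repetitive property of the precomputed coloring forces $\col_f(v)\neq \col_f(\out(v))$. Once that is verified, the resulting $768^{d+1}$-coloring is reduced to a $3$-coloring by Cole--Vishkin in $O(\log^*(768^{d+1})) = O(\log^* d)$ rounds, which is exactly the factor that replaces $\log^* n$ and removes all dependence on $n$, thereby delivering the stated $\SUPTIME$ bound with only $O(1)$ bits of preprocessing storage per node.
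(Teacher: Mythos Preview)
Your proposal is correct and follows essentially the same approach as the paper: precompute a non-repetitive $768$-coloring (giving $O(1)$ bits per node), then run the Czygrinow--Hańćkowiak--Wawrzyniak clustering adapted to $\CDS$ with Algorithm~\ref{alg:3color:planar} replacing the Cole--Vishkin step, yielding the same $O(3^t\log^*3^t)=O((1/\epsilon)^c\log^*(1/\epsilon))$ bound after $t=\Theta(\log_{24/23}(1/\epsilon))$ contraction rounds. The paper additionally spells out the intermediate lemmas (heavy pseudoforest weight $\ge \tfrac{1}{6}\mathrm{wt}(G_t)$, diameter bound after the two color-class pruning steps, and the final $|D|\le(1+\epsilon)|D^{\mathrm{opt}}|$ accounting via $W_u$), but your outline already points at all of these and at the one genuinely delicate point, namely that $P_v\circ P_{\out(v)}$ is simple in $G$ so the non-repetitive coloring separates them.
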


\begin{algorithm}
\caption{Clustering for Planar CDS}
    \label{alg:cds:planar}
\begin{algorithmic}[1]
    \Statex \textbf{Input}: Client set $C$, a non-repetitive coloring of $G$ and $\epsilon$.
    \Statex \textbf{Output}: A $1+\epsilon$ approximation of the optimal set dominating $C$.
    % \Procedure{Planar-CDS}{$\mathsf{S}, \epsilon$}
    \Statex \textbf{Phase 1}: Finding a good initial clustering.
        \State Remove all edges that do not have a client incident on them.
        \State Remove isolated vertices after previous step.
        \State Compute a constant approximation $D^{\star}$ for $C$ using Algorithm \ref{alg:dom:planar:const}.
        \ForAll{nodes $v \in V(G) \setminus D^{\star}$}
            \If{$v$ has a neighbor in $D^{\star}$}
                \State $u \gets $ any neighbor in $D^{\star}$
            \Else
                \State $u \gets $ any neighbor in $C$ or $\perp$ (if such a node doesn't exist)
            \EndIf
            \State Contract the edge $e = (u, v)$, if $u$ exists
        \EndFor
        \Statex \Comment{Done in parallel and implicitly, i.e., contracted vertices know their neighbors}
        \Statex \textbf{Phase 2}: Improving the clustering
        \State $G_0 \gets $ underlying simple graph obtained at end of Phase 1.
            \State Set $\mathsf{wt}(e) \gets 1$ for all $e \in E(G_0)$
        \ForAll{$t = 0, 1, \dots \lceil \log_{24/23} \frac{234}{\epsilon} \rceil$}
            \State $\mathsf{out}(u) \gets $ any neighbor $v$ such that $\mathsf{wt}((u, v))$ is maximized
            \State $H \gets $ induced by the edges $\{(\mathsf{out}(u), u) \mid u \in G_t\}$ \Comment{Heavy pseudo-forest}
            \State $\mathsf{col} \gets $ $3$-coloring of $H$ obtained using Algorithm \ref{alg:3color:planar}.
            \ForAll{$u \in H$ with $\mathsf{col}(u) = 1$ (in parallel)}
                \State $I_u, O_u \gets \{(u, v) \mid u = \mathsf{out}(v) \}, \{(u, v) \mid v = \mathsf{out}(u)\}$
                \State Remove either $I_u$ or $O_u$ from $H$, whichever has smaller total weight
            \EndFor
            \ForAll{$u \in H$ with $\mathsf{col}(u) = 2$ (in parallel)}
                \State $I_u, O_u \gets \{(u, v) \mid u = \mathsf{out}(v), \mathsf{col}(v) = 3 \}, \{(u, v) \mid v = \mathsf{out}(u), \mathsf{col}(v) = 3\}$
                \State Remove either $I_u$ or $O_u$ from $H$, whichever has smaller total weight
            \EndFor
            \State \Comment{$H$ now consists of connected components with diameter at most $10$.} 
            \State $F \gets $ rooted spanning forest of $H$
            \State $E_F, O_F \gets $ edges of $F$ at even and odd depths respectively
            \State Remove either $E_F$ or $O_F$, whichever has smaller total weight
            \State For all edges $e \in E(H)$, contract $e$ in $G_t$
            \State $G_{t+1} \gets $ underlying simple graph obtained after contractions.
            \State For all edges $e = (u, v) \in G_{t+1}$, set $\mathsf{wt}(e) \gets $ number of edges between $u, v$ after all contractions of edges in $H$.
        \EndFor
        \State \Return $G_T$
\end{algorithmic}
\end{algorithm}

\begin{proof}
    As mentioned previously, we adapt the scheme of Czygrinow et al. The high level idea is to carefully cluster the graph into components with small diameter and essentially solve the $\CDS$ problem independently within each cluster (i.e., ignoring or removing the cross edges) by a brute-force manner.

    The clustering procedure is outlined in Algorithm \ref{alg:cds:planar}. We go through the procedure and analyze it below.

    \textbf{Phase 1}: The first observation to be made is that edges with no incident client on them can be ignored. The existence or absence of these edges does not affect the correctness of any candidate solution to the $\CDS$ instance. After this removal we may get several disconnected components, which we can solve separately.

    In the initial clustering (Lines 1-11 of Algorithm \ref{alg:cds:planar}), each cluster has diameter at most $4$. This is easy to see as there is a path of length at most $2$ to some vertex in $D^{\star}$. Note that Clients are directly dominated by some vertex in $D^{\star}$ and non-clients either have a neighboring client adjacent to them or are isolated. Each vertex in $D^{\star}$ is present in its own unique cluster.

    \textbf{Phase 2}: The objective of this phase is to improve the clustering in Phase 1. Let $G_0$ be the contracted graph obtained at the end of Phase 1. By planarity we have, $|E(G_0)| \leq 3 |V(G_0)| \leq 3 \cdot 39 |D_{\mathsf{opt}}|$.
    By definition, we have $\mathsf{wt}(G_0) = |E(G_0)| \leq 117 |D_{\mathsf{opt}}|$. Here we use $\wt(G)$ to denote the total weight of all edges in $G$.

    We now describe the clustering procedure of Phase 2 (Lines 15-32). In Line 16, obtains a heavy-weight pseudo-forest subgraph of $G_0$ by a simple local greedy procedure, choose an arbitrary incident edge with maximum weight (Line 15). 
    \def\wt{\mathsf{wt}}
    \begin{lemma}
        $\mathsf{wt}(H) \geq \frac{1}{6} \mathsf{wt}(G)$
    \end{lemma}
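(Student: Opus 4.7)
My plan is to compare the greedy pseudo-forest $H$ with a maximum-weight pseudo-forest of $G$, and to exploit planarity to bound the latter in terms of $\wt(G)$. The workhorse is the classical fact that planar graphs have pseudo-arboricity at most $3$: since every subgraph $H'$ of the planar graph $G$ satisfies $|E(H')| \leq 3|V(H')|$, the Nash--Williams / Hakimi characterisation of pseudo-arboricity implies that $E(G)$ decomposes into three edge-disjoint pseudo-forests $F_1, F_2, F_3$. By pigeonhole the heaviest of the three, call it $H^{\star}$, satisfies $\wt(H^{\star}) \geq \wt(G)/3$.

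Next I would show that $\wt(H^{\star}) \leq 2\,\wt(H)$. Orient $H^{\star}$ so that every vertex has out-degree at most $1$ (possible by definition of a pseudo-forest). For each vertex $u$ with an out-edge $e^{\star}_u$ in $H^{\star}$, the greedy rule used to define $H$ gives $\wt(e^{\star}_u) \leq \wt(u,\out(u))$, because $\out(u)$ was chosen as a neighbour maximising the weight of an incident edge. Summing over $u$,
$$\wt(H^{\star}) ~\leq~ \sum_{u \in V(G)} \wt(u,\out(u)).$$
Every edge of $H$ is counted at most twice on the right-hand side (once per endpoint that may have chosen it), so this sum is at most $2\,\wt(H)$. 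Chaining the two inequalities yields $\wt(H) \geq \wt(H^{\star})/2 \geq \wt(G)/6$, as required.

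The only delicate step is the first, the appeal to pseudo-arboricity of planar graphs; if a self-contained argument is preferred, one can instead extract the bound $\wt(H^{\star}) \geq \wt(G)/3$ directly by a fractional or exchange argument using only the planarity inequality $|E(H')| \leq 3|V(H')|$ on every subgraph. Everything else is routine double-counting.
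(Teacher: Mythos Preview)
Your proof is correct and follows essentially the same approach as the paper: decompose the planar graph into three (pseudo-)forests via Nash--Williams, pigeonhole to get one of weight at least $\wt(G)/3$, orient it with out-degree at most one, and compare each out-edge to the greedy choice $\out(u)$ with a factor-of-two loss from double counting. The only cosmetic difference is that the paper invokes arboricity (three forests) rather than pseudo-arboricity, which makes no difference here since all that is used is the out-degree-one orientation.
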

    \begin{proof}
        We make use of Nash-Williams Theorem, i.e. since $G_t$ is planar it can be decomposed into forests $F_1, F_2, F_3$. In each of these forests, there exists an orientation such that every node has out degree at most $1$. Let the outgoing edge of $u$ in the three forests be $\mathsf{out}_1(u), \mathsf{out}_2(u), \mathsf{out}_3(u)$ and let $\mathsf{out}(u)$ be the chosen outgoing edge in Line 15. WLOG, let $F_1$ be the forest with highest weight amongst the three.  By pigeon hole principle we have, $\wt(F_1) = \sum_u \wt((\mathsf{out}(u), u)) \geq \frac{1}{3} \wt(G)$.

        The chosen edges in Line 15-16 is done for each node independently. While for the forests $F_1, F_2, F_3$, $\mathsf{out}_i(u)$ corresponded to a unique edge, this is not necessarily the case for the pseudo-forest $H$ chosen in Line 16. In particular we could have $\mathsf{out}(u) = v$ and $\mathsf{out}(v) = u$ and therefore it is not the case that $\wt(H) = \sum_u \wt((\mathsf{out}(u), u))$. However each edge $(\mathsf{out}(u), u)$ is counted at most twice in the summation from which we get, 
        \begin{equation*} \begin{split}
            \wt(H) &\geq \frac{1}{2} \sum_u \wt((\mathsf{out}(u), u)) \\
            &\geq \frac{1}{2} \sum_u \wt((\mathsf{out}_1(u), u)) \ \ \ \ [\text{By greedy choice of } \mathsf{out}(u)] \\
            & \geq \frac{1}{2} \wt(F_1) \geq \frac{1}{6} \wt(G_t)
        \end{split} \end{equation*}
    \end{proof}
    
    We next address Lines 18-25. This part of the algorithm breaks down the forest $H$ into small diameter components. This is done in two steps. In the first step, for each node with color $1$, either all its incoming or the unique outgoing edge is removed (whichever has smaller weight). The second step does the same with nodes of color $2$, except it ignores edges leading to/ incoming from nodes with color $1$. Observe that at most half the total weight of edges is lost in these two steps. Hence after this step we have $\wt(H) \geq \frac{1}{12} \wt(G_t)$.

    \begin{lemma}
        In Line 26, every connected component in $H$ has diameter at most $10$.
    \end{lemma}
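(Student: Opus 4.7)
The plan is to analyze simple paths in $H$ after the two edge-removal steps, exploiting the pseudo-forest orientation together with the structural properties imposed by the color-$1$ and color-$2$ processing. First I would observe a generic fact about paths in a pseudo-forest: for any simple path $v_0, v_1, \ldots, v_L$ in $H$, the out-degree-$\le 1$ constraint forces a ``peak'' structure. Once some edge along the path points backward at $v_i$ (i.e.\ $\out(v_i)=v_{i-1}$), then $v_i$'s out-edge is already used, so the next path edge $(v_i,v_{i+1})$ must also be oriented backward, i.e.\ $\out(v_{i+1})=v_i$. Iterating, the orientation pattern along the path consists of forward edges up to some index $j$ and backward edges afterwards, meeting at a unique peak $v_j$ whose two path-edges are both incoming. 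I would also note that every component is a tree: any cycle of the pseudo-forest that contains a color-$1$ node loses one of its edges in the color-$1$ step, and a cycle with only colors $2$ and $3$ must alternate, in which case every color-$2$ cycle-node has both a nonempty $I$-set and $O$-set of color-$3$ cycle-edges, so the color-$2$ step removes at least one of them.

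Next I would use the processing conditions to restrict colors at interior positions of the path. Color-$1$ processing forces each color-$1$ node to be either a pure source (no incoming) or a pure sink (no outgoing). An interior non-peak node has one incoming and one outgoing path-edge, hence it cannot be color-$1$; the peak, in contrast, has both path-edges incoming and is therefore allowed to be color-$1$. For a color-$2$ interior node $v_i$ on the forward side, the color-$2$ step preserves either all incoming from color-$3$ or the outgoing to color-$3$ but not both, so $v_{i-1}$ and $v_{i+1}$ cannot both be color-$3$; and since color-$1$ is forbidden at interior non-peak positions, if either of $c_{i-1}, c_{i+1}$ is color-$3$ then the other must be color-$1$ and must coincide with the path endpoint or the peak, forcing $i\in\{1, j-1\}$.

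A slightly more delicate analysis for color-$3$ interior nodes uses the color-$2$ processing at the color-$2$ neighbor: if $v_{i+1}$ is color $2$ and $v_i$ is color $3$, then preservation of the path edge $(v_i, v_{i+1})$ forces $v_{i+1}$ to have no outgoing to color-$3$, hence $c_{i+2}\ne 3$ and, since $c_{i+2}\ne c_{i+1}=2$, we get $c_{i+2}=1$, pinning $v_{i+2}$ to the peak or an endpoint. Applying the symmetric argument to the left neighbor, color-$3$ interior nodes are confined to positions in $\{1, 2, j-2, j-1\}$. Combining this with the color-$2$ confinement to $\{1, j-1\}$ and the exclusion of interior color-$1$, every interior position of the forward side must lie in $\{1, 2, j-2, j-1\}$; any position in $\{3, \ldots, j-3\}$ would have no legal color, giving $j\le 5$. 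A symmetric argument bounds the backward side by $5$, so the total diameter is at most $10$. The main obstacle is the careful bookkeeping in the third step: the color-$2$ removal is conditional on which of $I_u$ or $O_u$ is smaller, so preserving a single path-edge can propagate structural constraints two steps away, and one has to separately check whether each ``forbidden'' color-$1$ or color-$3$ interior neighbor actually coincides with the peak or with a path endpoint.
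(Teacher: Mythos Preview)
Your approach is correct and matches the paper's strategy: reduce to directed paths via the peak structure of a pseudo-forest, then use the coloring constraints to bound their length. The paper's execution is shorter—once the interior nodes of a long directed path are known to be colored only $2$ or $3$ (hence alternating), pigeonhole immediately yields a color-$2$ node with two color-$3$ path-neighbors and the color-$2$ step removes one of those edges; your per-color positional case analysis and the separate tree observation are correct but unnecessary.
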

    \begin{proof}
        Orient every edge from $u$ to $\mathsf{out}(u)$. We show that there is no directed path of length at least $6$ in $H$. Because out-degree is at most $1$, on any path in $H$, the direction of the edges can change at most once. Therefore this implies that diameter is at most $10$.

        Suppose, for sake of contradiction, that there existed a directed path of length at least $6$. None of the nodes in the middle of the path can have color $1$, since these nodes must have non-zero in-degree and out-degree. There are four nodes in the middle of the path and can be colored either $2$ or $3$. By pigeon-hole principle, at least one of the nodes must have color $2$ and have non-zero in-degree and out-degree leading to nodes with color $3$. This contradicts the fact that at least one of these edges must have been removed in Line 24.
    \end{proof}

    \begin{lemma}
        In Line 30, $H$ consists of vertex disjoint stars with weight at least $\frac{1}{24} \wt(G_t)$
    \end{lemma}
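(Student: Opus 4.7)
The plan is to handle the lemma in three stages: first show $H$ is a forest at the start of Line~27, then show the depth-parity partition of a rooted forest always yields vertex-disjoint stars, and finally chain the weight bounds.

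For the forest step, recall that $H$ from Line~16 is a pseudo-forest, so each undirected connected component has at most one undirected cycle, and since the underlying graph is simple every such cycle has length at least $3$. The coloring $\col$ from Algorithm~\ref{alg:3color:planar} is a proper $3$-coloring of $H$. On any odd-length cycle, some vertex $u$ must satisfy $\col(u)=1$, for otherwise the cycle would be $2$-colored by $\{2,3\}$. Line~20 then removes either the unique outgoing edge at $u$ or all incoming edges at $u$, and either choice severs the cycle. Any cycle that survives is therefore even and uses only colors $2$ and $3$, so at any color-$2$ vertex $u$ on it both cycle-neighbors have color~$3$, placing both cycle-edges at $u$ inside $I_u\cup O_u$; Line~24 removes at least one of them, severing this cycle as well. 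Hence $H$ is a forest, $F=H$ in Line~27, and $\wt(F)=\wt(H)\ge \wt(G_t)/12$ by the previous lemma.

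For the stars step, root $F$ and classify each edge $(\parent(v),v)$ by the parity of $\depth(\parent(v))$. A vertex $v$ has two kinds of incident edges in $F$: its up-edge to $\parent(v)$, when it exists, of parity $\depth(v)-1$, and its down-edges to its children, all of parity $\depth(v)$. Restricting to edges of a single parity therefore leaves $v$ with either its up-edge or its down-edges but not both. Consequently each connected component of the restricted graph is centered at a vertex whose down-edges survive, with its children as leaves and no further extension possible (each leaf's own down-edges have the opposite parity and are dropped); singletons are permitted as degenerate stars. Thus $E_F$ and $O_F$ each yield a vertex-disjoint union of stars, and so does whichever of them is kept in Line~29.

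Finally, since Line~29 keeps the heavier of $E_F$ and $O_F$, the surviving $H$ has weight at least $\wt(F)/2\ge \wt(G_t)/24$, as required. The main obstacle is the forest step: without it, even a single surviving triangle in $H$ would force the spanning forest in Line~27 to discard roughly a $1/3$ fraction of that component's weight, and the final ratio could drop below $1/24$. The color-class case analysis above is precisely what rules this out.
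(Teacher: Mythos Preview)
Your proof is correct, and the depth-parity argument for producing stars together with the halving bound on weight is exactly what the paper does. The one substantive difference is your first stage: the paper simply invokes the previous lemma (diameter of $H$ is at most $10$) to justify computing a spanning forest in $O(1)$ rounds and then asserts that removing $E_F$ or $O_F$ loses at most half of $\wt(H)$, without ever arguing that $F=H$. You instead prove directly that Lines~18--25 destroy every directed cycle of the pseudo-forest, so $H$ is already a forest and no weight is lost in Line~27. This is a genuine strengthening: as you note, if a cycle survived, the spanning-forest step could drop an edge and the non-tree edge could also spoil the star structure after the parity removal (e.g.\ a surviving $4$-cycle can leave a path of length~$3$). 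Your case analysis on colors~$1$ and then~$2$ is the same mechanism that drives the paper's diameter-$10$ proof, just applied to cycles rather than long paths, so the two arguments are close cousins; yours simply closes a gap the paper leaves implicit.
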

    \begin{proof}
        Since the diameter of $H$ is $10$, in $O(1)$ rounds, we can compute a spanning forest of $H$. Subsequently either all the even depth or odd depth edges are removed, i.e. diameter of each connected component in $H$ is at most $2$. By the greedy choice at most $\frac{1}{2}$ the weight of $H$ is lost during this procedure.
    \end{proof}

    \def\Dopt{D^{\mathsf{opt}}}
    We now analyze the correctness of the algorithm.
    We have $\wt(G_{t+1}) \leq \frac{23}{24} \wt(G_t)$. The value of $T$ is chosen such that $\wt(G_T) \leq \frac{\epsilon}{234} \wt(G_0)$.

    Let $D$ be the $\CDS$ solution computed independently (and optimally) on the clusters given by $G_T$ and let $\Dopt$ be any optimal solution to the given instance. For a node $u \in G_T$, let $V_u$ be the set of vertices of $G_0$ that were contracted to $u$. Define $\Dopt_u = \Dopt \cap V_u$ and $D_u = D \cap V_u$. Let $W_u$ be the vertices of $G[V_u]$ that have an incident edge of $G_0$ leading to a vertex not in $V_u$. We have that $\Dopt_u \cup W_u$ dominates all clients in $G[V_u]$. Since $D_u$ is an optimal solution, we get,
    \begin{equation*}\begin{split}
        |D_u| &\leq |\Dopt_u \cup W_u| \\
        \Rightarrow \sum\limits_u |D_u| &\leq \sum\limits_u |\Dopt_u \cup W_u| \\
        \Rightarrow |D| &\leq |\Dopt| + 2 |E(G_T)| \\
        \Rightarrow |D| &\leq |\Dopt| + \frac{1}{117} |E(G_0)| \\
        \Rightarrow |D| &\leq (1 + \epsilon) |\Dopt| 
    \end{split} \end{equation*}

    We now analyze the round complexity. We leave it to the reader to verify that Phase 1 can be implemented as a $O(1)$ round distributed protocol (essentially for each line, only $1$ round of communication with neighbors is needed).

    Except Line 17, all other lines in 15-32 can be implemented as $O(1)$ round complexity in the graph $G_t$. Let $d_t$ be the maximum diameter of a clusters given by $G_t$. Any $\LOCAL$ algorithm in $G_t$ can be simulated by $G$ in $d_t$ rounds (collect $G_t$ and simulate). It is already shown that Line 17 takes $O(d_t \log^* d_t)$ time.
    Since $G_{t+1}$ is obtained by contracting stars, we have $d_{t+1} \leq 3 d_t + 2$. This gives $d_t = O(3^t)$. The overall round complexity of Phase 2, is thus, $O(\sum_{t=0}^{T} d_t \log^* d_t) = O(3^T \log^* 3^T) = O(\frac{1}{\epsilon}^c \log^*{\frac{1}{\epsilon}})$ where $c = \log_{24/23} 3$.
\end{proof}

\section{Color Completion Problems}

Consider a graph $G(V,E)$ and a coloring $c : V \mapsto \{1,\dots,k\}$.
The vertex $v$ is {\em properly colored} if each of its neighbors has a different color. 
The classical vertex coloring problem requires deciding if there exists a coloring for which all vertices are properly colored.
When some of the vertices are already assigned a predefined coloring, the resulting recurrent problem is referred to as {\em color completion} ($\PCC$). 
We use the following measures
for evaluating the number of colors used in any valid solution.

\begin{itemize}
\item 
Let $\cP_{pc}$ be the set of colors used by the precolored vertices, and denote $\chi_{pc} = |\cP_{pc}|$.
\item
Let $\cP_{\UN}$ be the set of colors used for the uncolored vertices;
denote $\chi_{\UN} = |\cP_{\UN}|$.
\item
Let $\cP_{new} = \cP_{\UN} \setminus \cP_{pc}$ be the 
{\em new} colors 
used for the uncolored vertices; denote $\chi_{new} = |\cP_{new}|$.
\item
Let $\cP_{all} = \cP_{pc} \cup \cP_{new}$ be the final set of colors 
of all vertices; denote $\chi_{all} = |\cP_{all}|$.
\end{itemize}

For a given instance of $\PCC$, let $\chi^*_{\UN}$  (respectively, $\chi^*_{new}$, $\chi^*_{all}$) be the smallest possible value of $\chi_{\UN}$ (resp., $\chi_{new}$, $\chi_{all}$) over all possible proper color completions of the precoloring.
Additionally, for a given algorithm $\cA$, let $\chi^{\cA}_{\UN}$ (respectively, $\chi^{\cA}_{new}$, $\chi^{\cA}_{all}$) be the value of $\chi_{\UN}$ (resp., $\chi_{new}$, $\chi_{all}$) in the solution computed by $\cA$ for the instance.

The efficiency of an algorithm for $\PCC$ can be measured by two parameters of interest, namely, $\chi_{new}$ and $\chi_{all}$. The difference between them becomes noticeable in instances where the colors in $\cP_{pc}$ are not contiguous. 
We denote by $\PCC_{new}(\chi)$ (resp. $\PCC_{all}(\chi)$) the problem of color completion such that $\chi_{new}$ (resp. $\chi_{all}$) is at most $\chi$.

%%%%%%%%%%%%%%%%%%%%%%%%%%%%%%%%%%%%%%%%%%
\subsection{Single Round Color Completion}
\label{sec:singleroundcc}

We first consider what can be done when the online algorithm is restricted to a single round of communication.

\begin{theorem} \label{pcc:preprocess, T=1}
Consider a graph $G$ with maximum degree $\Delta=\Delta(G)$ and chromatic number $\chi=\chi(G)$ with $\Delta > 0$.
We have $\SUPTIME(\PCC_{new}(\chi \cdot \Delta), G) = 1$.
\end{theorem}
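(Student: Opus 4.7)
My plan is to exploit preprocessing to compute an optimal proper $\chi$-coloring $\col:V\to\{1,\ldots,\chi\}$ of $G$ (permitted since preprocessing may be oracular), and to have each vertex store its class $\col(v)$. Write $V_i=\{v:\col(v)=i\}$. I will reserve, globally, for each class $i\in\{1,\ldots,\chi\}$, a palette consisting of the label $i$ together with $\Delta$ fresh fallback labels $f_{i,1},\ldots,f_{i,\Delta}$, with all $\chi(\Delta+1)$ labels pairwise distinct; of these, only the $\chi\Delta$ fallback labels will actually be charged to $\cP_{new}$ in the worst case.

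In the one permitted online round, every precolored vertex broadcasts its precolor to its neighbors and retains it. For an uncolored $v$ with $\col(v)=i$, let $F_v$ be the set of precolors observed among $v$'s (at most $\Delta$) precolored neighbors. Vertex $v$ outputs color $i$ if $i\notin F_v$; otherwise it outputs $f_{i,k^*}$, where $k^*$ is the least index with $f_{i,k^*}\notin F_v$. Such a $k^*$ exists because in the ``otherwise'' case, $i\in F_v$, hence $|F_v\setminus\{i\}|\le\Delta-1$, so at least one of the $\Delta$ fallbacks lies outside $F_v$.

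Correctness follows from four observations: (i) precolored vertices keep their (proper) precolor; (ii) two uncolored vertices in the same class $V_i$ are non-adjacent, since $V_i$ is independent under $\col$; (iii) uncolored vertices in distinct classes $V_i,V_j$ draw from disjoint palettes, hence cannot collide; and (iv) $v$'s output is forced to lie outside $F_v$, so it differs from every precolored neighbor's color.

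The main delicate step, the one that sharpens the naive bound $\chi(\Delta+1)$ down to $\chi\cdot\Delta$, is the accounting of $\cP_{new}$ class-by-class. Fix $i$. If no uncolored $v\in V_i$ has $i\in F_v$, then every uncolored vertex of $V_i$ outputs the single label $i$, contributing at most one element to $\cP_{new}$. Otherwise some uncolored $v\in V_i$ has $i\in F_v$, which is a witness that $i\in\cP_{pc}$; then $i$ itself is \emph{not} new, and the uncolored vertices of $V_i$ that do not use $i$ draw only from $\{f_{i,1},\ldots,f_{i,\Delta}\}$, contributing at most $\Delta$ new labels. In either case class $i$ contributes at most $\Delta$ elements to $\cP_{new}$ (using $\Delta\ge 1$), and summing over the $\chi$ classes yields $\chi_{new}\le\chi\cdot\Delta$, as required. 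The crucial point, which I expect to be the one worth stressing, is that the two ``expensive'' events for class $i$ (spending an extra color on $i$ itself, and being forced onto a fallback) cannot both happen simultaneously as new colors.
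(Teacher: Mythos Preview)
Your proof is correct and follows essentially the same strategy as the paper: precompute an optimal $\chi$-coloring, give each color class its own disjoint palette, and in one round have every uncolored vertex pick a palette entry avoiding its precolored neighbors. The only technical difference is in how the $\chi\Delta$ bound is secured. The paper gives each class a palette of exactly $\Delta$ colors, so $\chi_{new}\le\chi\Delta$ is immediate; to guarantee a free color it then splits on whether \emph{all} of $v$'s neighbors are precolored (if so, $v$ may use any free color from the full $\chi\Delta$-palette, which is safe since $v$ has no uncolored neighbor to collide with). You instead give each class a palette of size $\Delta+1$ (the class label $i$ plus $\Delta$ fallbacks), which removes the case split, and then recover the $\chi\Delta$ bound by your per-class accounting: whenever class $i$ is forced onto a fallback, the witness $i\in F_v$ certifies $i\in\cP_{pc}$, so the class label itself is not new. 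Both arguments work; yours trades the paper's case analysis for a counting observation.
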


\begin{proof}
The algorithm uses the color palette  
$$\cP ~=~ \{(i,j) \mid 1\le i\le \chi, ~~ 1\le j\le \Delta\}.$$
In the preprocessing stage, compute a proper {\em default coloring} $dc$ of the graph using the color palette $\cP^{def} = \{i \mid 1\le i\le \chi\}$, and let each vertex $v$ store its default color $dc(v)$ for future use. These values are not used as colors in the final coloring.

In the recurrent stage,
we are given an arbitrary precoloring $c(w) \in \cP$ for some nodes, and need to complete it to a proper coloring by selecting a color $c(v)$ for each non-precolored node $v$.
(It is assumed that the precoloring itself is proper, i.e., no two precolored neighboring vertices use the same color.)

The algorithm requires a single round of communication. Each precolored node $w$ informs its neighbors about its color $c(w)$. Now consider a non-precolored node $v$. If all neighbors of $v$ are colored, then $v$ chooses a free color from the color palette. As $\chi \cdot \Delta \geq 2 \Delta \geq \Delta + 1$, such a color is guaranteed to exist.

Otherwise, $v$ finds a {\em free color} of the form $(dc(v),j)$ for $1\le j\le\Delta$ satisfying 
$c(w) \ne (i,j)$ for all precolored neighbors $w$ of $v$. The node $v$ then selects $c(v) \gets (dc(v),j)$.

By this algorithm, the color $(i,j)$ selected by $v$ is different from the color of any precolored neighbor of $v$. Also, $(i,j)$ cannot be the selected color of any non-precolored neighbor $w$ of $v$. This is because the default color $dc(w)=i'$ of $w$ satisfies $i'\ne i$, and therefore, the selected color $c(w)$ of $w$ is of the form $(i',k)$ for some $k$, which must differ from $(i,j)$ at least on the first component. Thus, the  coloring $c$ is proper.
\end{proof}

\begin{remark}\label{rmk:w/opreprocess} In the absence of any preprocessing, Linial \cite{DBLP:journals/siamcomp/Linial92} showed that we require \\ $\Omega(\log^* n)$ rounds to color the graph even if it is just a path. 
To complement this, Linial also provides an $O(\log^* n)$ round algorithm that colors the graphs with maximum degree $\Delta$ with\\ $O(\Delta^2)$ colors.

The algorithm works by repeatedly reducing a given proper coloring with $n$ colors to one with at most $\lceil 5 \Delta^2 \log_2 n \rceil$ colors. The same algorithm can be adapted to $\PCC$ with a small change yielding at most $\lceil 23 \Delta^2 \log_2 n \rceil$ new colors. (See Section \ref{sec:pccw/opreprocess}). A consequence of the above is that one can readily adapt existing solutions of graph coloring to color completion. For example the results of Maus \cite{maus2021distributed}, Barenboim et al.\cite{BEG13} can be extended to $\PCC$, with the number of colors used replaced by $\chi_{new}$ and retaining the same round complexities.
\end{remark}

We complement the result of Thm. \ref{pcc:preprocess, T=1} with the following lower bound.
\begin{theorem}
    For every integer $\chi, \Delta$, there exists a graph $G$ with chromatic number $\chi$ and maximum degree $\Delta$ such that for every single round deterministic distributed algorithm $\mathcal{A}$, the total number of colors used by $\mathcal{A}$ over all recurrent instances of $\PCC$ is at least $\chi \cdot (\Delta - \chi + 2)$ even after an arbitrary preprocessing of $G$.
\end{theorem}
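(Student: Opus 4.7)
The plan is to construct a specific graph $G$ with $\chi(G)=\chi$ and $\Delta(G)=\Delta$ and to describe a family of $\PCC$ instances on $G$ that any deterministic single-round algorithm $\cA$ (even with arbitrary preprocessing) must collectively respond to using at least $\chi(\Delta-\chi+2)$ distinct output colors. Concretely, I would take $G$ to be a disjoint union of sufficiently many copies of the gadget $H$ depicted in the figures earlier in the section: $H$ is the clique $K_\chi$ on vertex set $\{c,a,b_1,\ldots,b_{\chi-2}\}$, augmented by $\Delta-\chi+1$ pendant vertices attached only to the designated center $c$. The clique forces $\chi(H)=\chi$ and the pendants bring the degree of $c$ up to $\Delta$, so $\Delta(G)=\Delta$ exactly, as required.

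Next, I would parameterize a family of $\chi(\Delta-\chi+2)$ instances by pairs $(j,s)$ with $j\in\{1,\ldots,\chi\}$ and $s\in\{1,\ldots,\Delta-\chi+2\}$. For each pair I dedicate a distinct copy $H_{j,s}$ of the gadget and prescribe its precoloring as follows: the $\chi-1$ clique neighbors of $c_{j,s}$ (namely $a,b_1,\ldots,b_{\chi-2}$) are precolored with the colors $\{1,2,\ldots,\chi\}\setminus\{j\}$ in a canonical assignment, while the $\Delta-\chi+1$ pendants of $c_{j,s}$ are precolored with a specific pattern drawn from an additional block of colors indexed by $s$. All other gadgets in $G$ receive an arbitrary consistent default precoloring. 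The construction is arranged so that, for each $(j,s)$, the $\Delta$ precolored neighbors of $c_{j,s}$ cover $\Delta$ distinct colors that together with any candidate palette the algorithm might commit to, single out a unique target color $t_{j,s}$ as the only feasible output at $c_{j,s}$; the resulting $\chi(\Delta-\chi+2)$ targets are pairwise distinct by design.

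The main obstacle is making this \emph{color-exhaustion} argument rigorous: for any algorithm $\cA$ whose output palette $\cP$ across all instances satisfies $|\cP|<\chi(\Delta-\chi+2)$, one must exhibit some $(j,s)$ on which $\cA$ cannot produce a valid color at $c_{j,s}$. The cleanest route is a pigeonhole/covering argument exploiting two independent degrees of freedom in the instance family: the $\chi$ choices for which color is missing from the clique, and the $\Delta-\chi+2$ choices of pendant block shift. For any palette $\cP$ smaller than the product $\chi(\Delta-\chi+2)$, at least one pair $(j,s)$ yields a precoloring whose $\Delta$ neighbor colors together with all members of $\cP$ exhaust every potentially valid color choice at $c_{j,s}$, contradicting correctness. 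Because $\cA$ is single-round and deterministic, its decision at $c_{j,s}$ is a function only of the precolors of the $\Delta$ neighbors of $c_{j,s}$ together with instance-independent preprocessing information, so no amount of preprocessing on $G$ can help circumvent the combinatorial obstruction, and the bound follows.
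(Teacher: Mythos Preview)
Your approach has a genuine gap: the instance family you describe is \emph{non-adaptive} (fixed independently of $\cA$), and such a construction cannot force the algorithm to use $\chi(\Delta-\chi+2)$ distinct colors. The claim that the precoloring of the $\Delta$ neighbors of $c_{j,s}$ ``singles out a unique target color $t_{j,s}$'' is simply false: any precoloring of the $\Delta$ neighbors forbids only $\Delta$ colors, leaving infinitely many valid outputs for $c_{j,s}$. Your fallback ``color-exhaustion'' argument fails for the same cardinality reason: a palette $\cP$ of size $\chi(\Delta-\chi+2)-1$ typically exceeds $\Delta$, so no single instance can have all of $\cP$ blocked by the $\Delta$ neighbor precolors. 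Concretely, take $\chi=2$, $\Delta=2$: your gadget is a path $p\text{--}c\text{--}a$, and in each of your four instances only $c$ is uncolored with two forbidden neighbor colors. The algorithm that outputs the smallest color in $\{1,2,5\}$ not used by a neighbor succeeds on every such instance using at most three colors, beating the claimed bound of four.

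There is also a structural issue with your gadget: by attaching pendants only to a single clique vertex $c$ and taking disjoint copies, you lose the leverage that outputs at \emph{different} clique vertices must be mutually distinct; the centers $c_{j,s}$ of different copies are non-adjacent, so nothing prevents $\cA$ from reusing colors across them. The paper's construction attaches pendants to \emph{every} clique vertex and, crucially, builds the instances \emph{adaptively} from $\cA$: starting from the empty precoloring, each subsequent instance precolors one more pendant of $v_i$ with the color $\cA$ just assigned to $v_i$, forcing $v_i$ to move to a fresh color. Distinctness across different clique vertices is then obtained by a hybrid indistinguishability argument (mix the pendant precolorings of $v_a$ and $v_c$ so that each sees its own instance in its one-round view, and use adjacency in $K_\chi$). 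Both ingredients---adaptivity and the hybrid step exploiting the clique---are missing from your outline and are essential to reach the $\chi(\Delta-\chi+2)$ bound.
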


\tikzset{mylabel/.style  args={at #1 #2  with #3}{
    postaction={decorate,
    decoration={
      markings,
      mark= at position #1
      with  \node [#2] {#3};
 } } } }

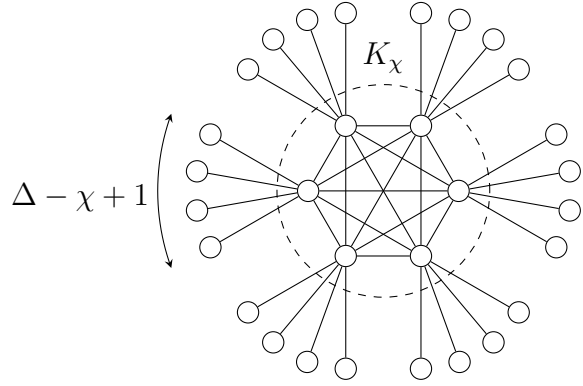
\begin{figure}
    \centering
    \begin{tikzpicture}
        \foreach \i in {1,...,6} {
            \node[draw, circle, inner sep=1mm] (A\i) at ($(\i*60:1)$) {};
        }
        \foreach \i in {1,...,6} {
            \foreach \j in {1,...,4} {
                \node[draw, circle, inner sep=1mm] (B\i\j) at ($(\i*60:1)+(\i*60+\j*20-2.5*20:1.5)$) {};
                
                \draw (B\i\j) -- (A\i);
            }
        }
        
        \foreach \i in {1,...,6} {
            \foreach \j in {\i,...,6} {
                \ifthenelse{\i<\j}{
                \draw (A\i) -- (A\j)}{};
            }
        }
        
        \node[draw, dashed, circle, inner sep=10mm, label=above:$K_{\chi}$] at (0, 0) {};
        \draw [domain=180-20:180+20, mylabel=at 0.7 above left with {$\Delta-\chi+1$},stealth-stealth] plot ({3*cos(\x)}, {3*sin(\x)});
    \end{tikzpicture}
    \caption{Graph whose single round color completion assigns at least $\chi \cdot (\Delta - \chi + 1)$ different colors across all instances. In this example $\chi = 6, \Delta = 9$.}
    \label{fig:cc:singleround:lb}
\end{figure}

\begin{proof}
    The lower bound graph is obtained by taking the clique $K_{\chi}$ and and adding $\Delta - \chi + 1$ different nodes to each node of $K_{\chi}$ (See Figure 
    \ref{fig:cc:singleround:lb}
    ). Let the vertices of the clique be $v_1, v_2 \dots v_{\chi}$ and let $v_{ij}$ denote the $j^{th}$ neighbor of $v_i$ for each $1 \leq i \leq \chi$ and $0 \leq j \leq \Delta-\chi$.
    
    Let $\mathcal{A}$ be any single round deterministic distributed algorithm that solves $\PCC$. We construct $\chi \cdot (\Delta - \chi + 2)$ instances, namely $I_{i, j}$ for each $1 \leq i \leq \chi$ and $0 \leq j \leq \Delta - \chi + 1$ as follows. We define $I_{i, 0}$ to be the instance where none of the nodes are precolored. Let $\col_{i, j}$ be the solution to the instance $I_{i, j}$ given by algorithm $\mathcal{A}$. We construct $I_{i, j}$ (for $j > 0$) from $I_{i, j-1}$ and $\col_{i, j-1}$. $I_{i, j}$ is same as the instance $I_{i, j-1}$ except that vertex $v_{i, j-1}$ is precolored with $\col_{i, j-1}(v_i)$.
    
    We shall now argue that the following $\chi \cdot (\Delta - \chi + 2)$ colors, $\col_{i, j}(v_i)$ for $1 \leq i \leq \chi$ and $0 \leq j \leq \Delta-\chi+1$ are all distinct, which proves the theorem.
    
    Consider $\col_{a, b}(v_a)$ and $\col_{c, d}(v_c)$ for some $1 \leq a < c \leq \chi$ and $0 \leq b, d \leq \Delta - \chi + 1$. To argue that these colors are different, we construct a new instance $I$ wherein $v_{a, j}$ is precolored with $\col_{a, j}(v_a)$ for every $0 \leq j < b$ and $v_{b, k}$ is precolored with $\col_{b, k}(v_b)$ for every $0 \leq k < d$.
    
    Since $\mathcal{A}$ operates in a single round, for node $v_a$, the instance $I$ is indistinguishable from instance $I_{a, b}$. Therefore the color assigned to $a$ by $\mathcal{A}$ for instance $I$ must be $\col_{a, b}(v_a)$.
    
    Similarly, with respect to node $v_c$, the instances $I_{c, d}$ and $I$ are indistinguishable and thus $c$ is assigned $\col_{c, d}(v_c)$ by $\mathcal{A}$ for instance $I$.
    
    Since $v_a, v_c$ are directly connected and $\mathcal{A}$ assigns a proper coloring to $G$ for instance $I$, and $v_a, v_c$ are adjacent in $G$, we have $\col_{a, b}(v_a) \neq \col_{c, d}(v_c)$.
    
    The only pairs left to consider are of the form $\col_{a, b}(v_a)$ and $\col_{a, d}(v_d)$ for some $b < d$. To see that these are different, consider instance $I_{a, d}$. The vertex $v_{a, b}$ is precolored with $\col_{a, b}(v_a)$ and $v_a$ is assigned $\col_{a, d}(v_a)$ by $\mathcal{A}$. Since $v_a, v_{a, b}$ are adjacent, it follows that $\col_{a, b}(v_a) \neq \col_{a, d}(v_a)$. 
\end{proof}

%%%%%%%%%%%%%%%%%%%%%%%%%%%%%%%%%%%%%%%%%%%%%%
\subsection{\texorpdfstring{$\PCC$}{{\sf CC}} with \texorpdfstring{$\Delta^{1 + \epsilon}$}{Enough} New Colors}
\label{sec:pcctradeoff}
We now describe how the single round Color Completion can be extended for multiple rounds. 

\begin{theorem}\label{pcc:preprocess, T>1}
Consider a graph $G$ with maximum degree $\Delta=\Delta(G)$ and chromatic number $\chi=\chi(G)$ with $\Delta > 0$ and let $k$ be any integer with $1 \leq k \leq \chi$.
We have, 
$$\SUPTIME(\PCC_{new}(\max(\lceil \frac{\chi}{k} \rceil \cdot \Delta, \Delta + 1)), G) \leq k$$
\end{theorem}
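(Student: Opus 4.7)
The plan is to generalize the one-round construction of Theorem~\ref{pcc:preprocess, T=1} by spreading the default-color dependency across $k$ rounds. Write $M=\lceil \chi/k\rceil$. The preprocessing is essentially the same: compute a proper $\chi$-coloring $dc$ of $G$ and store $dc(v)$ at each vertex, and additionally fix a partition of $\{1,\ldots,\chi\}$ into $k$ groups $G_1,\ldots,G_k$ of size at most $M$, with each default color $c$ receiving a group index $r(c)$ and a position $p(c)\in\{1,\ldots,M\}$ inside $G_{r(c)}$. When $M\ge 2$, the algorithm uses the structured palette $\cP=\{(a,j) : 1\le a\le M,\ 1\le j\le \Delta\}$ of size $M\Delta$; when $M=1$ (i.e., $k\ge \chi$) it uses a flat palette of $\Delta+1$ colors.

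The online stage uses the $k$ rounds as ``waves''. In round~$1$ precolored vertices broadcast $c(w)$; in round~$t$ every vertex with $dc(v)\in G_{t-1}$ broadcasts the color it just committed to, and every vertex with $dc(v)\in G_t$ commits to its color using the information received so far. By the time $v$ decides, it knows the colors of every precolored neighbor and every neighbor in $G_1\cup\cdots\cup G_{t-1}$; it does not see colors of neighbors in $G_t\cup\cdots\cup G_k$, simply because those neighbors have not broadcast yet (so $v$ does not even need to know $dc$ of its neighbors). The decision rule mirrors the one-round algorithm: if every neighbor of $v$ has already broadcast a color, $v$ picks an arbitrary unused color from the full palette $\cP$ (possible because $\vert\cP\vert\ge \Delta+1>\deg(v)$); otherwise $v$ picks a free color of the form $(p(dc(v)),j)$ from its own bucket, which is possible because at least one neighbor is still uncolored, leaving $v$ with at most $\Delta-1$ already-colored neighbors and hence at most $\Delta-1$ blocked entries among the $\Delta$ candidates in the bucket.

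For correctness I would inspect each edge $(v,w)$. Two precolored endpoints are fine by assumption; an uncolored endpoint always avoids explicitly the colors of all neighbors known to it when it decides; a neighbor that decides later will in turn observe $v$'s committed color and avoid it. The nontrivial case is when $v$ and $w$ are both uncolored and belong to the same group $G_t$: they are then processed in the same wave and cannot see one another, but they are adjacent, so $dc(v)\ne dc(w)$, hence $p(dc(v))\ne p(dc(w))$; if both fell into the bucket branch they picked from disjoint buckets and cannot collide, and if one of them fell into the ``all neighbors colored'' branch, then in particular the other is already colored, contradicting that both are uncolored. The degenerate case $M=1$ is simpler, since vertices sharing a wave then share a default color and are therefore pairwise non-adjacent, and each picks from the size-$(\Delta+1)$ palette avoiding its $\le\Delta$ already-colored neighbors.

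The main subtlety I expect is precisely the same-round/same-group interaction above; the position map $p$ is what generalizes the role played by $dc(v)$ in the one-round proof, separating simultaneously-deciding neighbors into disjoint buckets. Counting, the algorithm runs in $k$ rounds and introduces at most $\vert\cP\vert=\max(M\Delta,\Delta+1)$ new colors, matching the claimed bound.
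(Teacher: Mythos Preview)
Your proposal is correct and follows essentially the same approach as the paper's proof. The paper partitions default colors by their residue mod~$k$ (so your group $G_t$ is $\{c : c\equiv t\pmod k\}$) and uses the first coordinate $\lceil dc(v)/k\rceil$ exactly as your position map~$p$; the wave-by-wave decision rule, the bucket-versus-full-palette case split, the same-round correctness argument via distinct positions, and the special handling of $k=\chi$ all match.
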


\begin{proof}
    The preprocessing stage is same as that of the single round algorithm, where we precompute a proper $\chi$-coloring of the graph. Let $dc(v)$ be the color of $v$. In the recurrent stage, each precolored node $w$ sends its assigned precolor $c(w)$ to all its neighbors during the first round.
    
    Consider the same color palette $\mathcal{P}$ used for the single round color completion, except when $k = \chi$. In case $k = \chi$, add another color $(1, \Delta+1)$ to the palette.
    
    During round $i$ ($1 \leq i \leq k$), nodes $v$ with $dc(v) \equiv i \pmod k$ decide on their colors. If node $v$ has all neighbors precolored, then it chooses any free color of the form either (i) $(1, j)$ for some $1 \leq j \leq \Delta$ or (ii) $(1, \Delta + 1)$ if $\chi = k$ and $(2, 1)$ otherwise. If any neighbor of $v$ is not precolored, then it selects any free color of the form $(\lceil \frac{dc(v)}{k} \rceil, j)$ for some $1 \leq j \leq \Delta$. At least one free color is guaranteed to exist as number of neighboring vertices that have already fixed color before round $i$ is at most $\Delta - 1$. The node finalizes the chosen color as $c(v)$ and if $i < k$, sends $c(v)$ to all its neighbors. 
    
    We now argue that the coloring assigned is proper. It is sufficient to show that whenever a node $v$ adopts a color $c(v)$, $c(v)$ is different from $c(w)$ for all neighbors $w$ of $v$. We always choose $c(v)$ so that it is different from the colors of all neighbors $c(w)$ where $w$ was colored at a previous round. It remains to consider those neighbors of $v$ that are colored in the same round as $v$. Let $w$ be an arbitrary such neighbor. We have $dc(v) \neq dc(w)$ as $dc$ is a proper coloring. Since $dc(w) \equiv dc(v) \pmod{k}$, we must have $\lceil \frac{dc(v)}{k}  \rceil \neq \lceil \frac{dc(w)}{k} \rceil$ and therefore the chosen colors must be different.
\end{proof}

We compare Theorem \ref{pcc:preprocess, T>1} with the algorithm of Maus \cite{maus2021distributed} that colors a graph using $O(\Delta^{1 + \epsilon})$ colors within $\Delta^{\frac{1}{2} - \frac{\epsilon}{2}}$ rounds, i.e. the algorithm uses at most $\frac{c \Delta^2}{k^2}$ colors in $k$ rounds for some constant $c$ and every $k$ with $1 \leq k \leq \sqrt{\Delta}$. 
Comparing the number of colors, the algorithm of Maus uses fewer colors whenever $\sqrt{\Delta} > k > \frac{c \Delta}{\chi}$. 

%%%%%%%%%%%%%%%%%%%%%%%%%%%%%%%%%%
\subsection{{\sf CC} Without Preprocessing} \label{sec:pccw/opreprocess}
The classical algorithm of Linial \cite{DBLP:journals/siamcomp/Linial92} adopts a coloring in one round with at most $\lceil 5 \Delta^2 \log n \rceil$ colors. The proof is based on the existence of a family of sets that intersect at ``few elements''. The existence of such a family of sets is shown with the help of a probabilistic argument. Specifically, for any given pair of integers $n, \Delta$, there exist $n$ sets $F_1, F_2, \dots F_n$, each a subset of $[m]$ for some integer $m \leq \lceil 5 \Delta^2 \log n \rceil$, that satisfy the following property:
\begin{gather*}
\mathbf{P_0:}     \hskip 2em \forall \{i_0, i_1, i_2, \dots i_{\Delta}\} \subseteq [n], \ \ \left|F_{i_0} \setminus \bigcup_{j=1}^{\Delta} F_{i_j}\right| > 0.
\end{gather*}

The existence of these sets implies a distributed $1$-round
algorithm 
for classical coloring, since a vertex $v$ with a unique identifier $id(v)$ can choose any color from $F_{id(v)} \setminus \bigcup_{u \in \Gamma(v)} F_{id(u)}$. The coloring is proper since the sets satisfy the given property and the maximum color chosen is $m \leq \lceil 5 \Delta^2 \log n \rceil$.

To adapt this algorithm to Color Completion, it is sufficient to modify the property constraint as follows:
\begin{gather*}
\mathbf{P_\Delta:}    \hskip 2em \forall \{i_0, i_1, i_2, \dots i_{\Delta}\} \subseteq [n], \ \ \left|F_{i_0} \setminus \bigcup_{j=1}^{\Delta} F_{i_j}\right| > {\mathbf{\Delta}}.
\end{gather*}

Applying the same probabilistic argument, we can show the following.
\begin{lemma}
For sufficiently large $n$, there exists an integer $m \leq \lceil 23 \Delta^2 \log_2 n \rceil$ and 
sets $F_1, F_2, \dots F_n \subset [m]$, that satisfy property $\mathbf{P_\Delta}$.
\end{lemma}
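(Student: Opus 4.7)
The plan is to use the probabilistic method, following Linial's original strategy for property $\mathbf{P_0}$ but strengthening the quantitative estimate so that the leftover of the set difference exceeds $\Delta$ rather than merely being nonempty. First I would define a random family $F_1, \ldots, F_n \subseteq [m]$ by independently placing each element $x \in [m]$ into each $F_i$ with the same probability $p$. The natural choice is $p = 1/(\Delta+1)$, which approximately maximises $p(1-p)^\Delta$, the probability that a fixed element $x \in [m]$ witnesses $x \in F_{i_0} \setminus \bigcup_{j=1}^\Delta F_{i_j}$ for a given $(\Delta+1)$-tuple of indices.

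Next I would fix an arbitrary ordered tuple $(i_0, i_1, \ldots, i_\Delta)$ of distinct indices in $[n]$ and analyse the random variable $X := \bigl|F_{i_0} \setminus \bigcup_{j=1}^\Delta F_{i_j}\bigr|$. Since the elements of $[m]$ are handled independently, $X$ is binomial with parameters $m$ and $q := p(1-p)^\Delta \ge 1/(e(\Delta+1))$, and in particular its mean satisfies $\mu := \mathbb{E}[X] \ge m/(e(\Delta+1))$. With $m = \lceil 23 \Delta^2 \log_2 n \rceil$ this yields $\mu = \Omega(\Delta \log_2 n)$, so for all sufficiently large $n$ the target threshold $\Delta$ lies well below $\mu/2$.

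I would then invoke a Chernoff lower-tail bound to control $\Pr[X \le \Delta]$, obtaining a bound of the form $\exp(-\Omega(\mu)) = \exp(-\Omega(\Delta \log_2 n)) = n^{-\Omega(\Delta)}$. The last step is a union bound over the at most $n\cdot \binom{n-1}{\Delta} \le n^{\Delta+1}$ ordered $(\Delta+1)$-tuples of distinct indices. Showing that the resulting product is strictly less than $1$ for $n$ large enough establishes that with positive probability the random family simultaneously satisfies property $\mathbf{P_\Delta}$ for every tuple; the probabilistic method then delivers the required deterministic realisation.

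The main obstacle is squarely quantitative: I must trace through the Chernoff constants carefully enough to confirm that the specific value $23$ survives once the change of base from $\log_2$ to natural logarithm, the $1/e$ factor in the lower bound on $q$, and the $(\Delta+1)\ln n$ contribution of the union-bound exponent have all been absorbed. The structural core of the argument, however, is essentially that of Linial's lemma, merely with a larger deviation threshold and a correspondingly inflated constant in front of $\Delta^2 \log_2 n$.
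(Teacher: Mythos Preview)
Your proposal is correct and follows essentially the same probabilistic-method argument as the paper: independent random inclusion with probability about $1/\Delta$, a lower-tail estimate on the binomial count, and a union bound over the $n^{\Delta+1}$ tuples. The only cosmetic difference is that the paper bounds the tail directly via $\sum_{j\le \Delta}\binom{m}{j}(1-\tfrac{1}{4\Delta})^{m-j}\le \Delta\binom{m}{\Delta}(1-\tfrac{1}{4\Delta})^m$ rather than invoking a named Chernoff inequality, and then solves the resulting inequality $m>4\Delta(\Delta{+}1)\ln n+4\Delta^2\ln m$ explicitly to extract the constant $23$.
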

\begin{proof}
Given $n$ and $m$ as in the lemma, select the sets $F_i$ randomly as follows.
For each integer $x = 1, 2, \dots m$ and each $i = 1, 2, \dots n$, add $x$ to $F_i$ with probability $1/\Delta$.

For a given set $\{i_0, i_1, \dots i_{\Delta}\} \subseteq [n]$, the probability that a particular $x \in [m]$ belongs to $F_{i_0}$ but not the remaining $\Delta$ sets is $\frac{1}{\Delta} \cdot \left(1 - \frac{1}{\Delta}\right)^{\Delta} \geq \frac{1}{4\Delta}$.
Hence, the probability that fewer than $\Delta + 1$ of the elements in $[m]$ belong to $F_{i_0}$ but not the remaining $\Delta$ sets is at most
%\begin{equation*}
    $\sum_{j=1}^{\Delta} \binom{m}{j} \left(1 - \frac{1}{4 \Delta}\right)^{m-j}$.
%\end{equation*}
As long as $m > 2\Delta$, the terms are increasing, i.e., $\binom{m}{j+1} x^{m-j-1} > \binom{m}{j} x^{m-j}$. Therefore, we can bound the summation by
\begin{equation}
    \sum_{j=1}^{\Delta} \binom{m}{j} \left(1 - \frac{1}{4 \Delta}\right)^{m-j} \leq \Delta \binom{m}{\Delta} \left( 1 - \frac{1}{4 \Delta}\right)^m.
\end{equation}
Finally, the probability that the chosen sets do not satisfy the property for at least one of the subsets $\{i_0, i_1 \dots i_{\Delta}\}$ is at most
\begin{equation}
    \binom{n}{\Delta + 1} \cdot (\Delta + 1) \cdot \Delta \cdot \binom{m}{\Delta} \cdot (1 - \frac{1}{4 \Delta})^m \leq n^{\Delta + 1} \cdot m^{\Delta} \cdot e^{-\frac{m}{4\Delta}} \cdot \frac{1}{\Delta!}~.
\end{equation}
If the final expression above is strictly less than $1$, then the existence is guaranteed. This occurs whenever $m > 4 \Delta(\Delta + 1) \ln n + 4 \Delta^2 \ln m$. To find such a value of $m$, suppose $c_1 \Delta^2 \ln n < m < c_2 \Delta^2 \ln n$, then $\ln m < \ln{c_2} + \ln{\Delta^2\ln n} < \ln{c_2} + 3\ln n$, using which we can get a weaker (and easily solvable) lower bound for $m$,
\begin{equation*} \begin{split}
    4\Delta(\Delta + 1) \ln n + 4 \Delta^2 \ln m &< 4 \Delta(\Delta + 1) \ln n + 4 \Delta^2 \ln{c_2} + 12 \Delta^2 \ln{n} \\
    &< 20 \Delta^2 \ln n + 4 \Delta^2 \ln c_2
\end{split} \end{equation*}
Therefore, if we can choose $c_1, c_2$ so that $20 + 4 \frac{\ln{c_2}}{\ln n} < c_1 < c_2$ we are done. Considering $n \geq 3$, we can choose any $c_2$ such that $c_2 - (20 + \frac{4 \ln c_2}{\ln 3})$ exceeds $0$. The smallest such value is around $c_2 = 33$, therefore an upper bound on $m$ (and also the maximum number of colors) is at most $33 \Delta^2 \ln n \approx 23 \Delta^2 \log_2{n} $.
\end{proof}

\begin{theorem}
    Color Completion can be solved with $\chi_{new} \leq \chi_{all} \leq \lceil 23 \Delta^2 \log_2{n} \rceil$ colors in one LOCAL round.
\end{theorem}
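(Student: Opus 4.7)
The plan is to lift the preceding lemma directly to a one-round algorithm, in the spirit of Linial's classical coloring scheme. Given the guaranteed family of sets $F_1,F_2,\ldots,F_n\subseteq [m]$ with $m\le \lceil 23\Delta^2\log_2 n\rceil$ satisfying property $\mathbf{P_\Delta}$, the algorithm uses the palette $\cP=[m]$, and each vertex with identifier $id(v)\in [n]$ will draw its color from the set $F_{id(v)}$.

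The online protocol consists of a single round: every precolored vertex $w$ broadcasts its precolor $c(w)$ to its neighbors. Each uncolored vertex $v$ then computes its forbidden set
\[
B(v) ~=~ \bigcup_{u\in \Gamma(v),\, u\,\text{uncolored}} F_{id(u)} ~\cup~ \{c(u) : u\in \Gamma(v),\, u\,\text{precolored}\},
\]
and selects any color in $F_{id(v)}\setminus B(v)$.

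The crucial step is to show such a color always exists. I would invoke property $\mathbf{P_\Delta}$ with $i_0=id(v)$ and with $i_1,\ldots,i_\Delta$ being the identifiers of $v$'s uncolored neighbors, padded by arbitrary distinct elements of $[n]\setminus\{id(v)\}$ when $v$ has fewer than $\Delta$ uncolored neighbors. The property then yields
\[
\left| F_{id(v)} \setminus \bigcup_{j=1}^{\Delta} F_{i_j} \right| ~>~ \Delta,
\]
so at least $\Delta+1$ candidates survive after removing colors contributed by uncolored neighbors. Since $v$ has at most $\Delta$ precolored neighbors, excluding their specific precolors removes at most $\Delta$ further colors, leaving at least one valid choice for $c(v)$.

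Propriety follows from a short case analysis on each edge $\{v,w\}$: if both endpoints are uncolored, $v$'s chosen color lies in $F_{id(v)}$ but outside $F_{id(w)}$, whereas $w$'s color lies in $F_{id(w)}$, so the two colors must differ; an edge with exactly one precolored endpoint is handled by the explicit exclusion in $B(v)$; an edge between two precolored vertices is proper by assumption on the input. Since every color lies in $[m]$, this yields $\chi_{new}\le\chi_{all}\le \lceil 23\Delta^2\log_2 n\rceil$. The main obstacle is conceptually minor: it is precisely the bookkeeping between the two kinds of forbidden colors, and the ``$+\Delta$'' slack in property $\mathbf{P_\Delta}$ over Linial's $\mathbf{P_0}$ was tailored exactly to absorb the specific precolors of the precolored neighbors.
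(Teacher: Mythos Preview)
Your proof is correct and matches the paper's intended approach: the paper itself does not spell out the algorithm beyond noting that it suffices to strengthen Linial's property $\mathbf{P_0}$ to $\mathbf{P_\Delta}$, and you have filled in exactly the expected details (avoid $F_{id(u)}$ for uncolored neighbors, avoid the single precolor for each precolored neighbor, and use the extra $\Delta$ slack to absorb the latter). One minor remark: your final sentence asserting $\chi_{all}\le m$ implicitly assumes the precolors already lie in $[m]$; only $\chi_{new}\le m$ follows unconditionally from your argument, but this matches the statement as phrased in the paper.
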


%%%%%%%%%%%%%%%%%%%%%%%%%%%%%%%%%%%%%%%%%%%%%%%%%%%%%%%%
\subsection{\texorpdfstring{$\PCC$}{{\sf CC}} with fewer than \texorpdfstring{$\Delta+1$}{Delta + 1} colors}
\label{sss: fewer colors}
%%%%%%%%%%%%%%%%%%%%%%%%%%%%%%%%%%%%%%%%%%%%%%%%%%%%%%%%

We next discuss coloring algorithms based on a preprocesing stage, which use fewer than $\Delta+1$ colors when possible. 

%%%%%%%%%%%%%%%%%%%%%%%%%%%%%%%%%%%%%%%%%%%%%%%%%%%%%%%%
\subsubsection{A recurrent algorithm}
%%%%%%%%%%%%%%%%%%%%%%%%%%%%%%%%%%%%%%%%%%%%%%%%%%%%%%%%
~  %Forcing line break

Our main result is an algorithm that, for a graph $G$ with chromatic number $\chi$, uses preprocessing, and in the recurrent stage solves any instance of $\PCC$ with at most $\chi$ new colors in $\chi$ rounds. 
The algorithm operates as follows.

\noindent    
\textbf{Preprocessing}.
The preprocessing stage computes a proper-$\chi$ coloring of the graph $G$.
This is stored implicitly, i.e., each node $v$ stores a single color (a positive integer) $dc(v)$. We call this coloring the initial coloring of $G$.

\noindent     
\textbf{Online algorithm}. We call the algorithm the ``priority recoloring'' algorithm. The set of nodes with the same initial coloring form an independent set which implies that nodes belonging to this set may be colored independently. We use the standard greedy algorithm to simultaneously color nodes with the same initial color in a single round. The initial colors are only computed to partition the original set of nodes into $\chi$ independent sets. 

\noindent     
The input of each recurrent instance is a subset $S$ of the nodes that were precolored, i.e., each $v \in S$ has a precolor $c(v)$. For convenience, consider $c(v) = 0$ for all $v \not \in S$.
%
%\noindent     
The required output is a color completion of the precoloring: each node $v \not \in S$ outputs a color $c(v) \in \mathbb{N}$ such that the colors assigned to all vertices form
a proper coloring of the graph $G$.

The online algorithm $\cA$ operates as follows.

\begin{itemize}
    \item For $r = 1, 2, \dots \chi$ rounds, do
        \begin{itemize}
            \item If $dc(v) = r$ and $c(v) = 0$ then, $c(v) \gets \min(\mathbb{N} \setminus \Gamma(v))$, where
            $\Gamma(v) = \{c(w) | (w, v) \in E(G)\}$
        \end{itemize}
\end{itemize}

For a given instance of the problem, $\chi^*_{\UN}$  (respectively, $\chi^*_{new}$, $\chi^*_{all}$) is the smallest possible value of $\chi_{\UN}$ (resp., $\chi_{new}$, $\chi_{all}$) over all possible proper color completions of the precoloring, and  $\chi^{\cA}_{\UN}$ (respectively, $\chi^{\cA}_{new}$, $\chi^{\cA}_{all}$) is the value of $\chi_{\UN}$ (resp., $\chi_{new}$, $\chi_{all}$) in the solution computed by the priority algorithm.

\begin{observation}
\label{obs:all=pc+new}
For any coloring, $\chi_{all} = \chi_{pc} + \chi_{new}$.
In particular, 
$\chi^*_{all} = \chi_{pc} + \chi^*_{new}$
and
$\chi^{\cA}_{all} = \chi_{pc} + \chi^{\cA}_{new}$.
\end{observation}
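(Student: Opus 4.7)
The plan is to unpack the definitions and observe that $\cP_{new}$ was deliberately defined as $\cP_{\UN}\setminus\cP_{pc}$, which makes the union $\cP_{all}=\cP_{pc}\cup\cP_{new}$ automatically \emph{disjoint}. So the first equality is just the cardinality of a disjoint union, and the two ``in particular'' statements will follow by specializing the identity to the optimum completion and to $\cA$'s output, respectively.

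More concretely, I would proceed as follows. First, from the defining equations $\cP_{new}=\cP_{\UN}\setminus\cP_{pc}$ and $\cP_{all}=\cP_{pc}\cup\cP_{new}$ it is immediate that $\cP_{pc}\cap\cP_{new}=\emptyset$, hence $\chi_{all}=|\cP_{pc}\cup\cP_{new}|=|\cP_{pc}|+|\cP_{new}|=\chi_{pc}+\chi_{new}$, proving the first claim for any proper color completion. Second, I would note that $\chi_{pc}$ is a property of the given instance alone (it depends only on the precoloring) and is therefore a constant across all proper color completions of that instance. Consequently, minimizing $\chi_{all}$ over all completions is the same problem as minimizing $\chi_{new}$ over all completions, up to the additive constant $\chi_{pc}$; hence $\chi^*_{all}=\chi_{pc}+\chi^*_{new}$. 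Third, applying the general identity to the particular completion produced by the algorithm $\cA$ yields $\chi^{\cA}_{all}=\chi_{pc}+\chi^{\cA}_{new}$.

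Honestly, there is no real obstacle here: the observation is a definitional consequence, and its role in the paper is to license us to talk interchangeably about minimizing $\chi_{new}$ and $\chi_{all}$ in subsequent analysis. The only subtlety worth stating explicitly is that $\chi_{pc}$ is determined by the instance rather than by the algorithm's choices, which is what makes the ``min'' commute with the $+\chi_{pc}$ term in the second line. No case analysis, induction, or combinatorial argument is required.
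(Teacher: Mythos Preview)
Your proposal is correct and matches the paper's treatment: the paper states this observation without proof, as it is an immediate consequence of the definitions (the disjointness of $\cP_{pc}$ and $\cP_{new}$, plus the fact that $\chi_{pc}$ depends only on the instance). Your explicit unpacking of why the minimum commutes with the additive constant $\chi_{pc}$ is a fine elaboration of what the paper leaves implicit.
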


\begin{lemma}
\label{lem:bound chi alg new}
$\chi^{\cA}_{new} \le \chi$.
\end{lemma}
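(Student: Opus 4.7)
The plan is to show that the $\chi$ rounds of the online algorithm introduce at most one previously-unseen color each, giving a total of at most $\chi$ new colors overall.

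Let $V_r = \{v \in V : dc(v) = r\}$ for $r = 1, \dots, \chi$; since $dc$ is a proper $\chi$-coloring, each $V_r$ is an independent set. Let $T_r$ be the set of colors assigned by $\cA$ in round $r$ to the non-precolored vertices of $V_r$. Since every color used by $\cA$ lies in some $T_r$, we have $\cP^{\cA}_{new} \subseteq (T_1 \cup \dots \cup T_\chi) \setminus \cP_{pc}$, so it suffices to bound $|(T_1 \cup \dots \cup T_\chi) \setminus \cP_{pc}|$ by $\chi$.

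The first key step is to observe where the colors visible to a node $v \in V_r$ can come from when $v$ is being colored in round $r$. Because $V_r$ is independent, no neighbor of $v$ lies in $V_r$; every neighbor is either precolored or lies in some $V_{r'}$ with $r' \neq r$. Moreover, a neighbor in $V_{r'}$ with $r' > r$ has not yet been assigned a color (its $c$-value is still $0$ and contributes nothing to $\Gamma(v)$). Hence at the moment $v$ is colored, $\Gamma(v) \subseteq \cP_{pc} \cup T_1 \cup \dots \cup T_{r-1}$.

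The second step is the greedy observation: since $c(v) = \min(\mathbb{N} \setminus \Gamma(v))$, every integer in $\{1, 2, \dots, c(v)-1\}$ must lie in $\Gamma(v)$, hence in $\cP_{pc} \cup T_1 \cup \dots \cup T_{r-1}$. Applying this to a vertex that achieves the largest color $b^\star := \max(T_r)$ in round $r$, we conclude $\{1, \dots, b^\star - 1\} \subseteq \cP_{pc} \cup T_1 \cup \dots \cup T_{r-1}$. Every color in $T_r \setminus \{b^\star\}$ is strictly smaller than $b^\star$, so it too already belongs to $\cP_{pc} \cup T_1 \cup \dots \cup T_{r-1}$. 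Therefore
\[
\bigl|\, T_r \setminus (\cP_{pc} \cup T_1 \cup \dots \cup T_{r-1}) \,\bigr| \;\le\; 1.
\]

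Summing this bound telescopically over $r = 1, \dots, \chi$ gives $|(T_1 \cup \dots \cup T_\chi) \setminus \cP_{pc}| \le \chi$, completing the proof. The only subtlety worth guarding against is the parallel nature of each round: one might worry that two nodes of $V_r$ could independently introduce two distinct new colors. The argument above rules this out precisely because $V_r$ is independent, so nodes being colored in round $r$ never see each other's fresh colors when applying the greedy rule, and the ``largest wins'' argument still pins down the only candidate new color.
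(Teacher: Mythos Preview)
Your proof is correct and follows essentially the same approach as the paper's: both argue by induction on the round index that each round can introduce at most one color outside $\cP_{pc}\cup T_1\cup\dots\cup T_{r-1}$, relying on the facts that $V_r$ is independent and that the greedy rule $c(v)=\min(\mathbb{N}\setminus\Gamma(v))$ forces all smaller integers to already be present among previously assigned colors. The only cosmetic difference is that the paper names the potential new colors explicitly as the $\chi$ smallest integers $f_1<\dots<f_\chi$ not in $\cP_{pc}$ and asserts that round $k$ stays within $\cP_{pc}\cup\{f_1,\dots,f_k\}$, whereas you phrase it as a counting/telescoping bound via the ``largest color in $T_r$'' observation; the underlying reasoning is the same.
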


\begin{proof}
For every integer $k\ge 1$, let $\mathbb{N}_k=\{1,\ldots,k\}$.
Let $M=\max\cP_{pc}$, and let $FREE=\mathbb{N}_{M+\chi} \setminus\cP_{pc}$ be the set of free colors (not used in the precoloring) up to $M+\chi$.
Note that the cardinality of the set $FREE$ is at least $\chi$. 
Let $\hat F = \{f_1,\ldots,f_{\chi}\}$ consist of the smallest $\chi$ integers in the set $FREE$.

By induction on $k$ from 1 to $\chi$, one can verify that during iteration $k$ of the algorithm, the colors the algorithm uses for the uncolored vertices of default color $dc(v)=k$ are taken from $FREE \cup \{f_1,\ldots,f_k\}$. Hence $\cP^{\cA}_{\UN} \subseteq \cP_{pc} \cup \hat F$, implying that $\chi^{\cA}_{new} \le |\hat F| = \chi$.
\end{proof}

%We get the following.

\begin{theorem}
\label{lem:bounding chi all}
    Consider a graph $G$ with chromatic number $\chi = \chi(G)$. With preprocessing allowed, there exists an algorithm  $\cA$ that can solve an instance of $\PCC$ with $\chi_{all}^{\cA} \leq \chi + \chi^*_{all}-1$ colors and with $\chi_{new}^{\cA} \leq \chi$ in $\chi$ units of time.
\end{theorem}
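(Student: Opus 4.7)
The plan is to take $\cA$ to be the priority recoloring algorithm defined just above the theorem. The runtime bound of $\chi$ rounds is immediate from the pseudocode: the algorithm has a single outer loop executing exactly $\chi$ times, one iteration per default-color class $r\in\{1,\ldots,\chi\}$, and each iteration is one synchronous round (a vertex only reads colors from its neighbours). The bound $\chi^{\cA}_{new}\le\chi$ is exactly Lemma \ref{lem:bound chi alg new} and requires no further work.

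For the total-color bound, I would apply Observation \ref{obs:all=pc+new} to both $\cA$'s output and to an optimum completion, which gives
\[
\chi^{\cA}_{all} \;=\; \chi_{pc}+\chi^{\cA}_{new}
\qquad\text{and}\qquad
\chi^*_{all} \;=\; \chi_{pc}+\chi^*_{new},
\]
and subtract to obtain the clean identity $\chi^{\cA}_{all}-\chi^*_{all}=\chi^{\cA}_{new}-\chi^*_{new}$. Plugging Lemma \ref{lem:bound chi alg new} into this identity yields $\chi^{\cA}_{all}\le\chi^*_{all}+\chi-\chi^*_{new}$, which already implies the claimed $\chi^{\cA}_{all}\le\chi+\chi^*_{all}-1$ in the generic case $\chi^*_{new}\ge 1$, i.e., whenever at least one new color is unavoidable in any completion.

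The main obstacle is the degenerate case $\chi^*_{new}=0$, in which the precoloring admits a proper completion using only colors already appearing in $\cP_{pc}$; here the desired conclusion reduces to the strengthened inequality $\chi^{\cA}_{new}\le\chi-1$. To establish this I would refine the induction behind Lemma \ref{lem:bound chi alg new}: for the algorithm to use the largest candidate new color $f_\chi$ at iteration $\chi$, some vertex $v$ with $dc(v)=\chi$ must see each of $1,\ldots,f_\chi-1$ on a neighbour, and unwinding the greedy rule one iteration at a time produces a nested sequence of vertices $v_\chi=v,v_{\chi-1},\ldots,v_1$ at successive default-color layers, each of which was forced to introduce a distinct $f_j$. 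I would then use the hypothesized all-precolor completion to replace, at the earliest link of this chain, the greedy choice by the precolor assigned there by the optimum, and propagate the change upward to contradict the assumption that $v_\chi$ had no legal precolor option in the optimum. Carrying out this re-routing carefully across the $\chi$ default-color layers is the delicate step, since one has to verify that the substituted colors remain consistent with the precoloring and with the (fixed) greedy choices made in layers the chain does not touch.
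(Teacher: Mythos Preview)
Your approach for the generic case $\chi^*_{new}\ge 1$ is exactly what the paper does implicitly: the theorem is stated without a dedicated proof and is meant to follow from Lemma~\ref{lem:bound chi alg new} together with Observation~\ref{obs:all=pc+new}. So on that part you match the paper.

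The gap is in the degenerate case $\chi^*_{new}=0$. Your plan is to prove the strengthened inequality $\chi^{\cA}_{new}\le\chi-1$ via a chain/unwinding argument, but this inequality is simply false for the priority recoloring algorithm, and the paper's own Figure~\ref{fig:bad-bound-alg-new} is a counterexample. There $\chi=4$, $\chi_{pc}=10$, $\chi^*_{new}=0$, yet the algorithm uses the four new colors $7,8,9,10$, so $\chi^{\cA}_{new}=4=\chi$. Plugging into Observation~\ref{obs:all=pc+new} gives $\chi^{\cA}_{all}=14$ while $\chi+\chi^*_{all}-1=4+10-1=13$, so the ``$-1$'' bound fails on this instance. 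Your chain argument therefore cannot be completed; at the ``delicate step'' where you propose to substitute the optimum's precolor and propagate, the substitution will in general collide with precolored neighbours outside the chain (exactly the phenomenon Figure~\ref{fig:bad-bound-alg-new} is built to exhibit).

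In short: Lemma~\ref{lem:bound chi alg new} and Observation~\ref{obs:all=pc+new} give only $\chi^{\cA}_{all}\le\chi_{pc}+\chi\le\chi^*_{all}+\chi$, and that is all the paper actually establishes. The ``$-1$'' in the theorem statement appears to be a slip; it is not derivable from the stated lemmas and is contradicted by the paper's own hard example. Your write-up should either drop the $-1$ or note explicitly that the sharper bound holds only under the additional hypothesis $\chi^*_{new}\ge 1$.
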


%%%%%%%%%%%%%%%%%%%%%%%%%%%%%%%%%%%%%%%%%%%%%%%%%%%%%%%%
\subsubsection{Hard examples and negative results}
%%%%%%%%%%%%%%%%%%%%%%%%%%%%%%%%%%%%%%%%%%%%%%%%%%%%%%%%
~  %Forcing line break

A natural question is how tight these bounds are. 

Note first that the priority recoloring algorithm does not necessarily yield a good approximation for $\chi_{new}$ (i.e., a bound of the form $\chi^{\cA}_{new} \le \rho\cdot\chi^*_{new}$ for some approximation ratio $\rho$). To see this, consider the example of Fig. \ref{fig:bad-bound-alg-new}. In this example, $\chi^{\cA}_{new} =4$ while $\chi^*_{new}=0$.
%
%%%%%%%%%%%%%%%%%%%%%%%%%%%%%%%%
\begin{figure}[htb]
\centering
%\vspace{-1.5cm}
\includegraphics[width=3.5in]{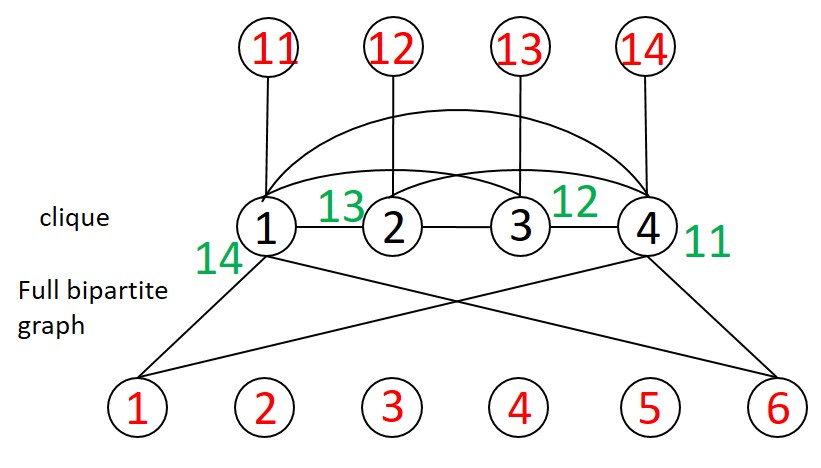}
\caption{Poor approximation for $\chi_{new}$.
Black numbers denote the optimal coloring in the preprocessing stage ($\chi=4$).
The red numbers represent the precoloring ($\chi_{pc}=10$).
The green numbers are a coloring of the clique nodes that optimizes the number of new colors (yielding $\chi^*_{new}=0$).
Note that the priority algorithm will use the new colors 7, 8, 9, 10, so $\chi^{\cA}_{new}=4$.}
\label{fig:bad-bound-alg-new}
\end{figure}
%%%%%%%%%%%%%%%%%%%%%%%%%%%%%%%%%%%%%%%
%
In this example, the problem can be attributed in part to the fact that the precoloring uses two non-contiguous blocks of colors, namely, $\{1,\ldots,6\}\cup\{11,\ldots,14\}$.
However, it is possible to construct an example where the priority coloring algorithm performs poorly despite the fact that the precoloring uses a single contiguous block of colors. Consider the graphs constructed recursively as shown in Figures \ref{fig:bad:G2} and \ref{fig:bad:Gk}.

{\bf Initial coloring:} The numbers on the graphs show the initial coloring. Note that the initial colors of the nodes in the cliques $K_{\chi-2}$ are not specified, they must be completed so that they are consistent with those mentioned in the figure. 

{\bf Pre coloring:} The nodes in the cliques ($K_{\chi-2}$) are precolored with colors from $1, \dots \chi - 2$.

For the graph $G_{\chi - 2}$, The priority recoloring algorithm uses $2\chi - 2$ total colors and $\chi - 2$ new colors, however the optimal solution uses only $\chi$ total colors and $2$ new colors. The optimal solution can be obtained by the priority recoloring algorithm if a different initial coloring is chosen, in particular replace color $x$ by color $\chi + 1 - x$ in the same graph and for that initial coloring the priority recoloring algorithm gives an optimal solution.

%%%%%%%%%%%%%%%%%%%%%%%%%%%%%%%%%%%%%%%
\begin{figure}[htb]
\vspace{50pt}
\begin{tikzpicture}
\drawgz{0.5}{0.5}{4}{1}
\drawgz{3.5+0.2}{-1}{3}{2}
\drawgz{6.5+0.2}{-2.5}{2}{3}
\drawgz{9+1}{-1}{2}{4}
\draw (c1) -- (c2);
\draw (c1) -- (c4);
\draw (c2) -- (c3);
\draw[rounded corners, dashed] (2.2, -4) rectangle (8.3, 0.5) {};
\node (text1) at (5.3, -3.5) {$G_1$};
\draw[rounded corners, dashed] (8.5, -3) rectangle (11.75, 0.5) {};
\node (text0) at (8.5+1.625, -2.5) {$G_0$};
\node (text1) at (5.3, -4.5) {$G_2$};
\end{tikzpicture}
\caption{Constructing $G_2$ from $G_0, G_1$ (Initial coloring)} 
\label{fig:bad:G2}
\end{figure}
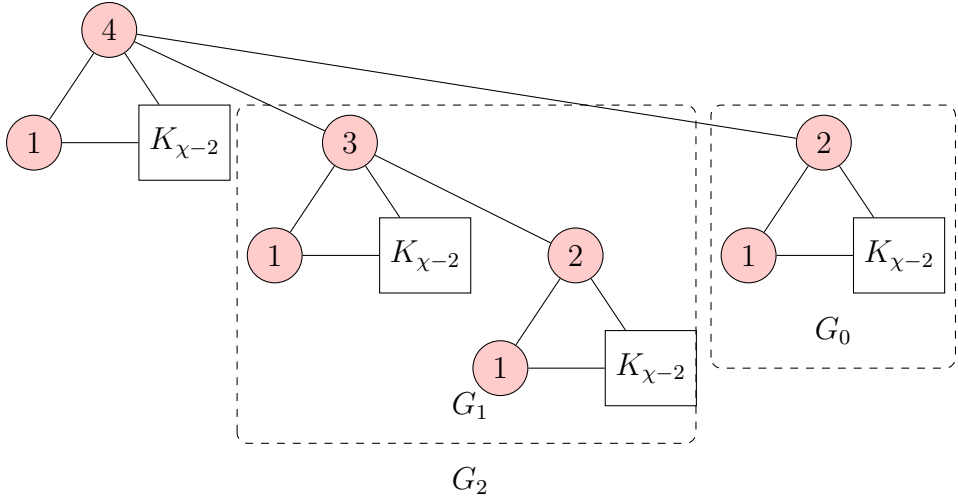
%%%%%%%%%%%%%%%%%%%%%%%%%%%%%%%%%%%%%%%

%%%%%%%%%%%%%%%%%%%%%%%%%%%%%%%%%%%%%%%
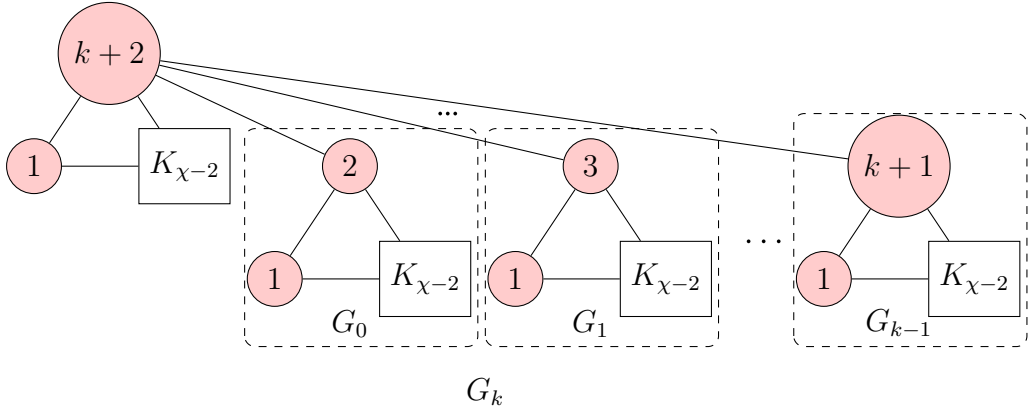
\begin{figure}
\begin{tikzpicture}
\drawgz{0.5}{0.5}{$k+2$}{1}
\drawgz{3.7}{-1}{2}{2}
\drawgz{6.9}{-1}{3}{3}
\node at (9, -1)[circle,fill,inner sep=0.5pt]{};
\node at (9.2, -1)[circle,fill,inner sep=0.5pt]{};
\node at (9.4, -1)[circle,fill,inner sep=0.5pt]{};
\drawgz{11}{-1}{$k+1$}{4};
\draw (c1) -- (c2);
\draw (c1) -- (c3);
\draw (c1) -- (c4);
\node at (6-1.1, 0.7)[circle,fill,inner sep=0.5pt]{};
\node at (6.1-1.1, 0.7)[circle,fill,inner sep=0.5pt]{};
\node at (6.2-1.1, 0.7)[circle,fill,inner sep=0.5pt]{};

\draw[rounded corners, dashed] (3.7-1.4, -1-1.4) rectangle (3.7+1.7, -1+1.5) {};
\node (text0) at (3.7, -1-1.1) {$G_0$};
\draw[rounded corners, dashed] (6.9-1.4, -1-1.4) rectangle (6.9+1.7, -1+1.5) {};
\node (text1) at (6.9, -1-1.1) {$G_1$};
\draw[rounded corners, dashed] (11-1.4, -1-1.4) rectangle (11+1.7, -1+1.7) {};
\node (text2) at (11, -1-1.1) {$G_{k-1}$};
\node (text3) at (5.5, -3) {$G_k$};
\end{tikzpicture}
\caption{Constructing $G_k$ from $G_0, G_1, \dots G_{k-1}$ (Numbers denote Initial coloring) (a) The following precoloring instance is bad : color all the $K_{\chi - 2}$ cliques with colors from $1, 2, \dots \chi-2$ and leave the rest uncolored.}
\label{fig:bad:Gk}
\end{figure}
%%%%%%%%%%%%%%%%%%%%%%%%%%%%%%%%%%%%%%%

However, combining Lemma \ref{lem:bound chi alg new} and Obs. \ref{obs:all=pc+new} we get the following.

\begin{corollary}
$\chi^{\cA}_{all} \le \chi_{pc} + \chi$.
\end{corollary}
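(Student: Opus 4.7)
The plan is to derive the bound by a single substitution, combining the two results cited immediately before the corollary. Specifically, Observation \ref{obs:all=pc+new} gives the exact identity $\chi^{\cA}_{all} = \chi_{pc} + \chi^{\cA}_{new}$, which splits the total colors used in the algorithm's output into the (fixed) colors inherited from the precoloring and the newly introduced ones. Lemma \ref{lem:bound chi alg new} then bounds the latter quantity: $\chi^{\cA}_{new} \le \chi$.

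The one step of the proof is therefore to substitute the inequality from the lemma into the equality from the observation, yielding
\[
\chi^{\cA}_{all} \;=\; \chi_{pc} + \chi^{\cA}_{new} \;\le\; \chi_{pc} + \chi,
\]
which is exactly the claimed bound. There is no real obstacle here; the corollary is essentially a one-line consequence of the preceding lemma and observation, recorded separately because it is the ``total colors'' counterpart to the ``new colors'' bound established in Lemma \ref{lem:bound chi alg new}, and because it is the form of the bound needed when comparing the priority recoloring algorithm's global color usage to $\chi^*_{all}$ in the discussion following Theorem \ref{lem:bounding chi all}.
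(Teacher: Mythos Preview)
Your proof is correct and matches the paper's approach exactly: the paper states the corollary as an immediate consequence of combining Observation~\ref{obs:all=pc+new} with Lemma~\ref{lem:bound chi alg new}, which is precisely the substitution you carry out.
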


Since $\chi^*_{all} \ge \max\{\chi_{pc},\chi\}$, we get an approximation of ratio 2 for $\chi_{all}$.

\begin{corollary}
$\chi^{\cA}_{all} \le 2\chi^*_{all}$.
\end{corollary}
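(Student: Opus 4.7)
The plan is to derive this directly by combining the previous corollary with two elementary lower bounds on $\chi^*_{all}$. Specifically, the preceding corollary already gives $\chi^{\cA}_{all} \le \chi_{pc} + \chi$, so it suffices to show that $\chi^*_{all} \ge (\chi_{pc}+\chi)/2$, which follows from $\chi^*_{all} \ge \max\{\chi_{pc},\chi\}$.

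First I would justify the two lower bounds on $\chi^*_{all}$. The inequality $\chi^*_{all} \ge \chi_{pc}$ holds because $\cP_{all}$ is defined as $\cP_{pc}\cup\cP_{new}$, so any valid completion must use at least the $\chi_{pc}$ colors already occurring in the precoloring. The inequality $\chi^*_{all} \ge \chi$ holds because every proper coloring of $G$ uses at least $\chi=\chi(G)$ distinct colors by definition of the chromatic number, and $\cP_{all}$ is exactly the color set of a proper coloring of $G$.

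Combining these, $\chi^*_{all} \ge \max\{\chi_{pc},\chi\} \ge (\chi_{pc}+\chi)/2$. Applying the preceding corollary,
\[
\chi^{\cA}_{all} ~\le~ \chi_{pc} + \chi ~\le~ 2\max\{\chi_{pc},\chi\} ~\le~ 2\chi^*_{all}.
\]

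There is no real obstacle here — the entire content is in the two lower bounds on $\chi^*_{all}$, both of which are immediate from the definitions; the corollary is essentially a one-line consequence of the previous corollary.
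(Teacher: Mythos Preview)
Your proof is correct and follows exactly the same approach as the paper: the paper states ``Since $\chi^*_{all} \ge \max\{\chi_{pc},\chi\}$, we get an approximation of ratio 2 for $\chi_{all}$,'' and then records the corollary. You have simply made explicit the justifications for the two lower bounds $\chi^*_{all}\ge\chi_{pc}$ and $\chi^*_{all}\ge\chi$, which the paper leaves implicit.
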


\begin{theorem}{(Lower bound for $\chi^{\cA}_{new}$).}
\label{thm:lb-chi alg new}
For every deterministic distributed algorithm $\cA$ that solves $\PCC$ with the guarantee that $\chi^{\cA}_{new} < \chi^{*}_{new} + \chi$, there exists a graph $G$ such that even with preprocessing allowed, there exists an instance of $\PCC$ for which $\cA$ takes $\Omega(D)$ units of time, where $D$ is the diameter of the graph $G$.
\end{theorem}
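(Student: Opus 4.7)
The plan is to prove the contrapositive via a standard indistinguishability argument. Suppose for contradiction that $\cA$'s online stage always completes in $t$ rounds with $t = o(D)$; I will construct, for suitably large $D$, a diameter-$D$ graph $G$ of chromatic number $\chi$ together with two $\PCC$ instances $I_1,I_2$ on $G$ such that $\cA$ uses at least $\chi^{*}_{new}+\chi$ new colors on at least one of them, contradicting the hypothesis on $\cA$.

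The graph $G$ will be a long concatenation of copies of the bad gadget from Figure \ref{fig:bad:Gk} (taken sufficiently recursive to amplify the color gap up to $\chi$), built so that the chromatic number is exactly $\chi$, the diameter is $\Theta(D)$, and $G$ admits several distinct proper $\chi$-colorings that are cyclic rotations of one another along the chain. Both instances $I_1$ and $I_2$ will precolor every $K_{\chi-2}$ clique in the chain with the colors $\{1,\dots,\chi-2\}$ identically, and they will agree on every vertex except inside a small constant-radius region attached to one end of the chain. The two endpoint precolorings will be chosen so that each instance admits an optimum completion using only the colors $\{1,\dots,\chi\}$ (hence $\chi^{*}_{new}(I_j)=0$), but these two optimum completions use \emph{different} rotations of the cyclic color pattern, and therefore disagree at every vertex sufficiently far from the endpoint.

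I then fix a vertex $v^{*}$ whose entire $t$-neighborhood lies outside the differing region. Because the $t$-views of $v^{*}$ in $(G,I_1)$ and $(G,I_2)$ are identical, including any preprocessed information stored at nodes in this neighborhood, indistinguishability forces $\cA$ to output the same color $c^{\cA}(v^{*})$ in both instances. This single color is incompatible with at least one of the two palette-respecting optimum completions---say that of $I_2$. In $I_2$, the commitment at $v^{*}$ therefore propagates ``the wrong way'' along the chain; by the recursive structure of the bad gadget (as in the discussion following Figure \ref{fig:bad:Gk}, where each nested level of $G_k$ forces one additional fresh color under any wrong-phase completion), every proper coloring agreeing with $c^{\cA}(v^{*})$ must introduce at least $\chi$ colors outside $\cP_{pc}$. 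Hence $\chi^{\cA}_{new}\ge\chi=\chi^{*}_{new}+\chi$ on $I_2$, the desired contradiction.

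The main obstacle is the ``rigidity'' step: upgrading the bad example for the priority recoloring algorithm into a lower bound against arbitrary algorithms. The paragraph surrounding Figure \ref{fig:bad:Gk} only shows that priority recoloring is forced into $\chi-2$ extra colors; I must show that in $I_2$, \emph{every} proper completion that agrees with $\cA$'s commitment at $v^{*}$ is genuinely forced to use $\chi$ fresh colors, not merely the particular completion $\cA$ happens to pick. Establishing this will likely require augmenting each sub-gadget with additional ``forcing'' cliques or precolored vertices that pin down the unique viable palette-respecting continuation in the intended orientation, and verifying that these augmentations preserve $\chi(G)=\chi$ and keep the diameter linear in the number of gadgets.
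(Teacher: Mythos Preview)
Your proposal has a genuine gap, and the paper's approach is both simpler and avoids the obstacle you yourself identify.

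You base the construction on the recursive gadget $G_k$ of Figure~\ref{fig:bad:Gk} and hope that a single wrong commitment at $v^*$ will cascade into $\chi$ fresh colors. But that gadget was designed only to fool the \emph{priority recoloring} algorithm, and you correctly flag that upgrading it to a rigidity statement against \emph{arbitrary} completions is the main obstacle---one you leave unresolved. In fact it is not clear this can be done with the $G_k$ gadget at all: once $\cA$ is allowed to deviate from the greedy rule, a wrong choice at one vertex need not propagate, because the non-clique vertices of the gadget have low degree and plenty of slack. Your proposed fix (``augmenting each sub-gadget with additional forcing cliques'') is a substantial new construction that you would need to carry out and verify; as written, the proof is a plan whose hard step is missing.

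The paper bypasses rigidity entirely with a much simpler construction: an uncolored clique $K_\chi$ at one end, a long precolored path, and at the far end a vertex with $\chi$ precolored leaves $c_1,\dots,c_\chi$. The clique forces $\cA$ to output some set $\{\gamma_1,\dots,\gamma_\chi\}$ of $\chi$ distinct colors, and this set is determined independently of the $c_i$'s (they are $\Omega(D)$ away). Now, rather than comparing two instances, consider the family of $\binom{2\chi}{\chi}$ instances obtained by choosing the $c_i$'s from a pool $S$ of size $2\chi$; since $|S\setminus\{\gamma_1,\dots,\gamma_\chi\}|\ge\chi$, there is an instance whose precolor palette is disjoint from $\cA$'s clique colors. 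On that instance every $\gamma_i$ is a new color, so $\chi^{\cA}_{new}=\chi$, whereas coloring the clique with $c_1,\dots,c_\chi$ gives $\chi^*_{new}=0$. The key idea you are missing is this: do not try to force $\cA$ into a bad coloring via structural rigidity; instead let $\cA$ commit first (on a clique, which automatically consumes $\chi$ colors), then choose the precoloring palette adversarially to make all of those colors new.
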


%%%%%%%%%%%%%%%%%%%%%%%
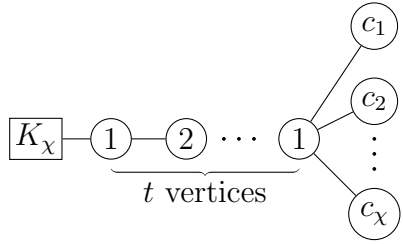
\begin{figure}[htb]
\vspace{50pt}
    \centering
    \begin{tikzpicture}
        \node[draw, rectangle, inner sep = 2pt] (k) at (1, 0) {$K_{\chi}$};
        
        \node[draw, circle, inner sep = 2pt] (a) at (2, 0) {$1$};
        \node[draw, circle, inner sep = 2pt] (b) at (3, 0) {$2$};
        
        \draw (k) -- (a) -- (b);
        
        \node[draw, circle, fill=black, inner sep=0.1pt] at (3.5, 0) {};
        \node[draw, circle, fill=black, inner sep=0.1pt] at (3.7, 0) {};
        \node[draw, circle, fill=black, inner sep=0.1pt] at (3.9, 0) {};
        \node[draw, circle, inner sep=2pt] (f) at (4.5, 0) {$1$};

        \node[draw, circle, inner sep = 2pt] (c) at (5.5, 1.5) {$c_1$};
        \node[draw, circle, inner sep = 2pt] (d) at (5.5, 0.5) {$c_2$};

        \node[draw, circle, fill=black, inner sep=0.1pt] at (5.5, -0) {};
        \node[draw, circle, fill=black, inner sep=0.1pt] at (5.5, -0.2) {};
        \node[draw, circle, fill=black, inner sep=0.1pt] at (5.5, -0.4) {};

        \node[draw, circle, inner sep=2pt] (e) at (5.5, -1) {$c_{\chi}$};
        
        \draw (f) -- (c);
        \draw (f) -- (d);
        \draw (f) -- (e);
        
        \draw [decorate, decoration = {calligraphic brace, mirror}] (2,-0.4) --  (4.5,-0.4);
        \node at (3.25, -0.7) {$t$ vertices};
        
    \end{tikzpicture}
    \caption{Lower bound graph for $\PCC$. $K_{\chi}$ denotes a clique of size $\chi$ and one node of $K_{\chi}$ is connected to the end of the path with $t$ vertices.}
    \label{fig:lower_bound:pcc}
\end{figure}
%%%%%%%%%%%%%%%%%%%%%%%

\begin{proof}
    Consider the graph $G$ shown in Figure \ref{fig:lower_bound:pcc}. The given labels to the nodes denote the precoloring and the none of the nodes in the clique $K_{\chi-1}$ are precolored. The diameter of the graph is $t + 2$.
    
    Consider the set of instances where the precolors $c_1, c_2, \dots c_{\chi}$ are chosen to be distinct integers from the set $S = \{3, 4, \dots 2 \chi + 2\}$. There are in total $\binom{2\chi}{\chi}$ different instance precolorings.
    
    For any deterministic algorithm $\cA$ that runs in $o(t)$ time, the output, consisting of the colors chosen by $\cA$ for the nodes in the clique $K_{\chi}$, must be same for each of the $\binom{2\chi}{\chi}$ instances described above. Let these colors be $\cP_{\UN} = \{\gamma_1, \gamma_2, \dots \gamma_{\chi}\}$. Since $|S| = 2\chi$, $|S \setminus \cP_{\UN}| \geq \chi$ which implies that there exists an instance (colors $c_1, c_2, \dots$ chosen from $S \setminus \cP_{\UN}$) such that $\cP_{c} \cap \cP_{\UN} = \emptyset$ and consequently for that instance, $\chi^{\cA}_{new} = |\cP_{\UN}| = \chi$.
    
    However it is optimal to color the nodes of the clique with the colors $c_1, c_2 \dots c_{\chi}$ which gives $\chi^*_{new} = 0$.
    
    Thus there exists an instance for which $\chi^{\cA} = \chi^*_{new} + \chi$.
\end{proof}

Note that the proof shows also that for any such algorithm $\cA$, there are some instances for which
$\chi^*_{all} = \chi+2$ but $\chi^{\cA}_{all} = 2\chi+2$ and therefore there cannot exist a deterministic CTAS to minimize $\chi_{new}$. 
Randomization also does not help. In the graph constructed above, for any randomized algorithm that takes $o(t)$ rounds, the distribution of the colors assigned to the vertices of the clique $K_{\chi-1}$ must be independent of the values of $c_1, c_2, \dots c_{\chi}$. Furthermore, there must exist a set of $\chi$ colors $T$, such that the probability that the algorithm chooses $T$ is no more than $\frac{1}{\binom{2\chi}{\chi}}$. For the input where $c_1, c_2, \dots c_{\chi}$ are chosen to be from $S \setminus T$, the same bounds for $\chi^{\mathcal{A}}_{all}$ and $\chi^{*}_{all}$ can be achieved. Therefore any algorithm that operates in $o(D)$ rounds and places fewer that $\chi^*_{all} + \chi$ colors cannot succeed with probability more than $\frac{1}{\binom{2\chi}{\chi}}$. This implies the following.

\begin{corollary}
There is no deterministic CTAS for the $\PCC$ problem that minimizes $\chi_{new}$. Furthermore, there is no randomized CTAS that succeeds with any fixed probability. 
\end{corollary}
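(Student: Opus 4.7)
\emph{Proof plan for the corollary.} I view the corollary as a strengthening of Theorem~\ref{thm:lb-chi alg new} using the same hard graph from Fig.~\ref{fig:lower_bound:pcc}: a clique $K_\chi$, one of whose vertices is attached, via an alternating $1,2$-precolored path of length $t$, to $\chi$ tail vertices precolored with colors $c_1,\dots,c_\chi$ drawn from $S := \{3,\dots,2\chi+2\}$. In every such instance, the optimum satisfies $\chi^*_{new} = 0$, realised by coloring the clique with $\{c_1,\dots,c_\chi\}$.

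For the deterministic claim, suppose $\cA$ were a CTAS. By definition, for each $\epsilon>0$ its online stage runs in $f(\epsilon)=O_\epsilon(1)$ rounds. I would choose $t$ large enough that $f(\epsilon)=o(t)=o(D)$, at which point the argument of Theorem~\ref{thm:lb-chi alg new} applies: the clique never sees the $c_i$, so $\cA$ outputs a fixed palette $\cP_{\UN}$ of $\chi$ colors (none equal to $1$, by properness) on $K_\chi$, independent of the $c_i$. Since $|S\setminus \cP_{\UN}|\ge \chi$, the adversary can choose $\{c_1,\dots,c_\chi\}\subseteq S\setminus\cP_{\UN}$, forcing $\cP_{\UN}\cap\cP_{pc}=\emptyset$, so $\chi^{\cA}_{new}=\chi$ while $\chi^*_{new}=0$---an unbounded ratio, contradicting the CTAS guarantee.

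For the randomized claim, the same locality reasoning shows that the random palette $\cP_{\UN}$ produced by an $o(t)$-round algorithm is statistically independent of the precolors $c_1,\dots,c_\chi$. I would then average over the $\binom{2\chi}{\chi}$ candidate inputs $\cP_c\in\binom{S}{\chi}$ to bound
\[
\mathbb{E}_{\cP_c}\Pr\bigl[\cP_{\UN}\subseteq \{2\}\cup \cP_c\bigr] ~\le~ \frac{\chi+1}{\binom{2\chi}{\chi}}.
\]
This uses that for any realised $\cP_{\UN}$ of size $\chi$ with $\cP_{\UN}\subseteq\{2\}\cup S$ one has $|\cP_{\UN}\cap S|\ge \chi-1$, so at most $\chi+1$ candidate sets $\cP_c$ contain $\cP_{\UN}\cap S$ (and the event is impossible otherwise). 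Pigeonholing produces a concrete input $\cP_c^\star$ with $\Pr[\chi^{\cA}_{new}=0\mid \cP_c^\star]\le (\chi+1)/\binom{2\chi}{\chi}$; for any fixed desired success probability $p>0$, taking $\chi$ large enough drops this below $1-p$, so $\cA$ fails on $\cP_c^\star$ with probability exceeding $1-p$, precluding a randomized CTAS.

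The main subtlety will be making the independence assertion $\cP_{\UN}\perp\cP_c$ rigorous inside the \SUPPORTED model: preprocessed information is a function of $G$ alone, and the only instance-dependent signal reachable by the clique in $o(t)$ rounds is the precolor stream, whose $c_i$-bearing endpoints lie more than $t$ hops away from the clique. Once this separation is formalized, the rest is combinatorial bookkeeping over the $\binom{2\chi}{\chi}$ candidate inputs.
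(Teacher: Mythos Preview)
Your proposal follows essentially the same route as the paper: the deterministic impossibility is read off directly from the hard instance of Theorem~\ref{thm:lb-chi alg new}, and the randomized case is handled by observing that the clique's output distribution is independent of the distant precolors $c_1,\dots,c_\chi$, combined with an averaging argument over the $\binom{2\chi}{\chi}$ candidate inputs. (The paper averages over possible clique palettes rather than over inputs, but this is the same argument read in the dual direction.)

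One small inaccuracy: your parenthetical ``none equal to $1$, by properness'' is false---only the single clique vertex adjacent to the path is forbidden from receiving color $1$; the remaining $\chi-1$ clique vertices carry no such constraint, so $1\in\cP_{\UN}$ is perfectly possible. Consequently the success event in the randomized analysis is $\cP_{\UN}\subseteq\{1,2\}\cup\cP_c$ rather than $\cP_{\UN}\subseteq\{2\}\cup\cP_c$, giving $|\cP_{\UN}\cap S|\ge\chi-2$ and at most $\binom{\chi+2}{2}$ favourable choices of $\cP_c$, not $\chi+1$. This still vanishes against $\binom{2\chi}{\chi}$, so your conclusion is unaffected. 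In the deterministic case your inequality $|S\setminus\cP_{\UN}|\ge\chi$ holds regardless (since $|\cP_{\UN}|=\chi$), and because $\chi^*_{new}=0$, even $\chi^{\cA}_{new}\ge\chi-2>0$ already yields an unbounded ratio; so the slip there is harmless too.
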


\noindent
Another implication of 
Thm. \ref{thm:lb-chi alg new}
is that 
without preprocessing, solving $\PCC$ with $\chi^{\cA}_{new} < \chi^{*}_{new} + \chi$ requires time $\Omega(D)$. 

\begin{theorem}\label{pcc:contiguousprecoloring:lb}
    For every integer $\chi \geq 2$ and deterministic algorithm $\mathcal{A}$ that solves $PCC$ with the guarantee that $\chi^{\mathcal{A}}_{new} \leq \chi^*_{new} + 1$, there exists a graph $G$ with chromatic number $\chi$ and a pre-coloring of $G$ for which $\mathcal{A}$ takes $\chi$ units of time, even with arbitrary preprocessing allowed. 
\end{theorem}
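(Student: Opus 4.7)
The plan is to adapt the indistinguishability argument from the proof of Theorem \ref{thm:lb-chi alg new} to a graph whose diameter is tightly $\chi + O(1)$, so that the round lower bound becomes exactly $\chi$ rather than $\Omega(D)$.

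Specifically, I would use the graph of Figure \ref{fig:lower_bound:pcc} with the path-length parameter set to $t = \chi - 1$, yielding chromatic number $\chi$ and placing the $\chi$ pendants $c_1, \ldots, c_\chi$ at distance exactly $\chi$ from the vertex $u_0$ of the clique $K_\chi$ that lies on the path (and farther from every other clique vertex). The path is precolored with the alternating pattern $1, 2, 1, 2, \ldots$, and $K_\chi$ is left uncolored. Let $S = \{3, 4, \ldots, 2\chi + 2\}$ be a pool of $2\chi$ colors disjoint from $\{1, 2\}$, and define an instance family $\mathcal{I}$ by letting the $\chi$ pendant precolors range over all $\binom{2\chi}{\chi}$ size-$\chi$ subsets of $S$. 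For every instance in $\mathcal{I}$, assigning the $\chi$ pendant colors bijectively onto the $\chi$ vertices of $K_\chi$---with $u_0$ receiving any of them, since every element of $S$ already differs from $v_1$'s precolor $1$---yields a proper completion introducing no new colors, so $\chi^*_{new} = 0$.

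Any deterministic algorithm $\cA$ running in at most $\chi - 1$ rounds has every clique vertex's view bounded by distance $\chi - 1 < \chi$, hence independent of which pendant precolors were chosen; this forces $\cA$ to output a fixed $\chi$-element color set $\cP_{\UN}^K$ on $K_\chi$ across every instance in $\mathcal{I}$. Since $|S \setminus \cP_{\UN}^K| \geq 2\chi - \chi = \chi$, pigeonhole produces an instance $I^\star \in \mathcal{I}$ whose pendant precolors lie entirely in $S \setminus \cP_{\UN}^K$; for $I^\star$ one obtains $\chi^{\cA}_{new} = |\cP_{\UN}^K \setminus \cP_{pc}| \geq |\cP_{\UN}^K \setminus \{1, 2\}| \geq \chi - 2$. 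Combined with $\chi^*_{new} = 0$ this contradicts $\chi^{\cA}_{new} \leq \chi^*_{new} + 1$ whenever $\chi \geq 4$.

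The main obstacle is the boundary cases $\chi \in \{2, 3\}$, where the gap $\chi - 2 \leq 1$ is insufficient to violate the $+1$ guarantee. I would resolve this by symmetrically augmenting the construction: attach to each vertex of $K_\chi$ a precolored pendant of color $1$ and another of color $2$. This forces $\cP_{\UN}^K \cap \{1, 2\} = \emptyset$ and strengthens the bound to $\chi^{\cA}_{new} \geq \chi$ while preserving both $\chi^*_{new} = 0$ (the $\chi$ pendant colors in $S$ can still be placed on $K_\chi$) and the distance-$\chi$ separation between $K_\chi$ and the varying pendants, completing the argument uniformly for every $\chi \geq 2$.
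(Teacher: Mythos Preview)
Your proof is correct but takes a genuinely different route from the paper.

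The paper builds a \emph{chain of $l$ cliques} $K_\chi$, where between consecutive cliques every pair of vertices is joined except those in the same ``position''. This rigidity forces that any proper coloring using at most $\chi+1$ colors can change at most one position between adjacent cliques. Precoloring the first clique with a cyclic shift of whatever the algorithm (obliviously) outputs at the far end then requires all $\chi$ positions to flip somewhere along the chain; the paper argues that at least one such flip must occur at a clique beyond distance $\chi-1$ from clique~1, which is impossible if the algorithm halts in fewer than $\chi$ rounds.

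You instead recycle the construction of Theorem~\ref{thm:lb-chi alg new} (Figure~\ref{fig:lower_bound:pcc}) with the path length tuned to $t=\chi-1$, so that the varying pendants sit at distance exactly $\chi$ from the clique. The same indistinguishability-plus-pigeonhole argument then forces the clique to use $\chi$ colors disjoint from the pendant precolors, giving $\chi^{\cA}_{new}\ge\chi-2$ (and $\ge\chi$ after your pendant augmentation), which already exceeds $\chi^*_{new}+1=1$.

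Your approach is more elementary and ties neatly back to the earlier theorem; the price is the separate patch for $\chi\in\{2,3\}$. The paper's chain-of-cliques construction exploits a different structural phenomenon (one-position-per-step propagation) and is uniform in $\chi$, though its write-up carries some index confusions. One small wording fix: $u_0$ does not ``lie on the path'' but is \emph{adjacent} to its first vertex; your distance computation is nonetheless correct.
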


\begin{figure}
    \centering
    \begin{tikzpicture}
        \node[draw, circle, inner sep = 2pt] (n11) at (1, 0) {};
        \node[draw, circle, inner sep = 2pt] (n12) at (1, 1) {};
        \node[draw, circle, inner sep = 2pt] (n13) at (1, 2) {};
        \node[draw, circle, inner sep = 2pt] (n14) at (1, 3) {};
        
        \node[draw, circle, inner sep = 2pt] (n21) at (3, 0) {};
        \node[draw, circle, inner sep = 2pt] (n22) at (3, 1) {};
        \node[draw, circle, inner sep = 2pt] (n23) at (3, 2) {};
        \node[draw, circle, inner sep = 2pt] (n24) at (3, 3) {};
        
        \node[draw, circle, inner sep = 2pt] (n31) at (5, 0) {};
        \node[draw, circle, inner sep = 2pt] (n32) at (5, 1) {};
        \node[draw, circle, inner sep = 2pt] (n33) at (5, 2) {};
        \node[draw, circle, inner sep = 2pt] (n34) at (5, 3) {};
        
        \draw (n11) -- (n12);
        \draw (n11) .. controls (0.5,1) and (0.5,1) .. (n13);
        \draw (n11) .. controls (-0,1.5) and (-0,1.5) .. (n14);
        \draw (n12) .. controls (0.5,2) and (0.5,2) .. (n14);
        \draw (n12) -- (n13);
        \draw (n13) -- (n14);
        
        \draw (n21) -- (n22);
        \draw (n21) .. controls (0.5+2,1) and (0.5+2,1) .. (n23);
        \draw (n21) .. controls (-0+2,1.5) and (-0+2,1.5) .. (n24);
        \draw (n22) .. controls (0.5+2,2) and (0.5+2,2) .. (n24);
        \draw (n22) -- (n23);
        \draw (n23) -- (n24);
        
        \draw (n31) -- (n32);
        \draw (n31) .. controls (0.5+4,1) and (0.5+4,1) .. (n33);
        \draw (n31) .. controls (-0+4,1.5) and (-0+4,1.5) .. (n34);
        \draw (n32) .. controls (0.5+4,2) and (0.5+4,2) .. (n34);
        \draw (n32) -- (n33);
        \draw (n33) -- (n34);
        
        \draw[red] (n11) -- (n22);
        \draw[red] (n11) -- (n23);
        \draw[red] (n11) -- (n24);
        
        \draw[red] (n12) -- (n21);
        \draw[red] (n12) -- (n23);
        \draw[red] (n12) -- (n24);
        
        \draw[red] (n13) -- (n21);
        \draw[red] (n13) -- (n22);
        \draw[red] (n13) -- (n24);
        
        \draw[red] (n14) -- (n21);
        \draw[red] (n14) -- (n23);
        \draw[red] (n14) -- (n22);
        
        \draw[red] (n21) -- (n32);
        \draw[red] (n21) -- (n33);
        \draw[red] (n21) -- (n34);
        
        \draw[red] (n22) -- (n31);
        \draw[red] (n22) -- (n33);
        \draw[red] (n22) -- (n34);
        
        \draw[red] (n23) -- (n31);
        \draw[red] (n23) -- (n32);
        \draw[red] (n23) -- (n34);
        
        \draw[red] (n24) -- (n31);
        \draw[red] (n24) -- (n33);
        \draw[red] (n24) -- (n32);
        
        \node at (5+0.3, 1.5)[circle,fill,inner sep=0.5pt]{};
        \node at (5+0.6, 1.5)[circle,fill,inner sep=0.5pt]{};
        \node at (5+0.9, 1.5)[circle,fill,inner sep=0.5pt]{};
        
        \node[draw, circle, inner sep = 2pt] (nl1) at (7, 0) {};
        \node[draw, circle, inner sep = 2pt] (nl2) at (7, 1) {};
        \node[draw, circle, inner sep = 2pt] (nl3) at (7, 2) {};
        \node[draw, circle, inner sep = 2pt] (nl4) at (7, 3) {};
        
        \draw (nl1) -- (nl2);
        \draw (nl1) .. controls (0.5+6,1) and (0.5+6,1) .. (nl3);
        \draw (nl1) .. controls (-0+6, 1.5) and (-0+6,1.5) .. (nl4);
        \draw (nl2) .. controls (0.5+6,2) and (0.5+6,2) .. (nl4);
        \draw (nl2) -- (nl3);
        \draw (nl3) -- (nl4);
        
    \end{tikzpicture}
    \caption{Lower bound graph when $\chi = 4$}
    \label{fig:pcc:contiguouscoloring:lb}
\end{figure}
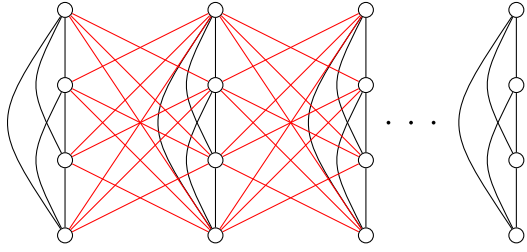

\begin{proof}
    Consider a series of $l$ cliques of size $\chi$. Let $v_{i,j}$ be the $i^{th}$ vertex of the $j^{th}$ clique for $1 \leq i \leq \chi$ and $1 \leq j \leq l$.
    In addition to the $l \cdot \binom{\chi}{2}$ edges between vertices of each clique, add an edge between $v_{i, j}$ and $v_{i+1, k}$ for all $1 \leq i < l$ and $j \neq k$. In particular all pairs of edges between vertices of clique $i$ and clique $i+1$ are connected, except $v_{i, j}$ and $v_{i+1, j}$. See Figure \ref{fig:pcc:contiguouscoloring:lb} for an example with $\chi = 4$.
    
    It is easy to verify that the graph has chromatic number $\chi$. The color assignment $c(v_{i,j}) = j$ is a proper $\chi$-coloring.
    
    The diameter of the graph is $l-1$ and is similar to a path with $l$ vertices except that each vertex is replaced by a clique and between cliques maximum number of edges are added such that chromatic number of the graph remains same.
    
    The only way to color the graph using $\chi$ colors is to assign the vertices $v_{i, 1}, v_{i, 2}, \dots v_{i, l}$ the same color for every $1 \leq i \leq \chi$. 
    Consider a precoloring where only vertices of clique $1$ are colored. In these instances, the color of the vertices in clique $l$ must be same regardless of the pre-colors assigned to vertices of clique $1$. Suppose $c_j$ is the color assigned to $v_{i, l}$, consider the precoloring instance with $c_{pre}(v_{i, 1}) = c_{i\mod{\chi}+1}$ (next color in cyclic order). There is no possible way to complete the coloring in $o(D)$ time without using an additional color.
    
    Suppose the algorithm assigns color $\chi + 1$, then between any two adjacent cliques ($i, i+1$), there can be at most one $j$ such that $c(v_{i,j}) \neq c(v_{i+1, j})$. Therefore at least one of the cliques $\chi+1, \chi+2, \dots l$ must have a different output when the input is changed. However this cannot occur if the algorithm takes less than $\chi$ units of time.
\end{proof}

%%%%%%%%%%%%%%%%%%%%%%%%%%%%%%%%%%%%%%%%%%%%%%%%
\section{Recurrent Locally Checkable Labellings (\LCL)}
%%%%%%%%%%%%%%%%%%%%%%%%%%%%%%%%%%%%%%%%%%%%%%%%

Locally Checkable Labellings (LCL) were first proposed by Naor and Stockmeyer \cite{DBLP:journals/siamcomp/NaorS95}. Informally, an LCL problem on a graph $G$ asks for an assignment $\Gamma_{out}$ of \emph{labels}, to the vertices of $G$ that satisfy a set of rules that are verifiable ``locally". These are problems whose solutions can be verified by an $O(1)$ round distributed algorithm in the $\LOCAL$ model. Whenever the solution is incorrect, \textit{at least} one of the nodes in the graph identifies so (not necessarily all of them).

\begin{definition}[Locally Checkable Labellings (LCL)]
    An LCL problem for a graph $G$ is described by a 5-tuple $(r, \Sigma_{in}, \Sigma_{out}, \Gamma_{in}, \mathcal{C})$ where 
    \begin{enumerate}
        \item $\Sigma_{in}$ is a set of \emph{input labels},
        \item $\Gamma_{in} : V(G) \rightarrow \Sigma_{in}$, is an \emph{assignment}  of input labels to each vertex of $G$
        \item $\Sigma_{out}$ is a set of \emph{output labels}
        \item $\mathcal{C}$ is a set of \emph{rules}. Each element of $\mathcal{C}$ is a labelled centered graph $H$ with a designated center $w\in V(H)$, and a labelling $\Gamma: V(H) \mapsto \Sigma_{in} \times \Sigma_{out}$. 
        The distance of every node in $H$ from $w$ is at most $r$.
        
        For a given vertex $u \in V(G)$, let $G_r(u)$ be the graph induced by vertices $v$ of $G$ that are at a distance at most $r$ from $u$.
        
        A given labelling $\Gamma_{out} : V(G) \rightarrow \Sigma_{out}$ is \emph{valid} if and only if for every vertex $u \in V(G)$, there is a graph $H \in \mathcal{C}$ and an isomorphism $\phi : V(G_r(u)) \rightarrow V(H)$ such that,
        \begin{itemize}
            \item $\phi(u)$ is the designated center of $H$
            \item $(\Gamma_{in}(u), \Gamma_{out}(u)) = \Gamma(\phi(u))$
        \end{itemize}
    \end{enumerate}
\end{definition}

Problems such as computing (an arbitrary) Dominating Set, Vertex Cover, Maximal Matching, $\Delta + 1$ Coloring can be represented as LCLs. The examples mentioned previously do not require input labels (i.e., we can construct LCL's where every vertex has the same input label). Problems such as finding 
a
client dominating set or a color completion (i.e., variants of the classical problems with PFO or PCS instances) can also be captured by the above definition, however they \textit{crucially} require input labels, i.e. $|\Sigma_{in}| > 1$ for these LCL's.

To realise the Client Dominating Set as an LCL, consider $\Sigma_{in}$ to be $\{\textsf{client}, \textsf{non-client}\}$ and $\Sigma_{out} = \{\textsf{server}, \textsf{non-server}\}$. The input labelling $\Gamma^{in}$, assigns the input labels accordingly as per the client set $C$ given by the $\CDS$ instance. The set of rules $\mathcal{C}$ consists of all centered graphs with radius $1$ and degree at most $\Delta(G)$ wherein one of the following holds: (i) the center is labelled a $\textsf{server}$, (ii) one of the neighbors of the center is labelled as a $\textsf{server}$ or (iii) the center has input label \textsf{non-client}. Restricting $\Sigma_{in} = \{\textsf{client}\}$ captures the classical Dominating Set problem. Note that LCL's are often \textbf{not} optimisation problems, i.e. we often can't minimize/maximize any set of labels as such problems are often not locally verifiable.

%%%%%%%%%%%%%%%%%%%%%%%%%%%%
\subsection{Subgraph LCL's without Input Labels on Paths}
\label{sec:subgraphlclw/oinput}
In this section we consider a subset of recurrent LCL's, named \emph{subgraph LCL's without input labels}, which were studied by Foerster et al. \cite{foerster2019power}.
In subgraph LCL's, the online instances ask for a valid labelling for some (edge induced) subgraph of the given graph $G$. 
This class of LCL's is easier to solve, but already captures several classical problems, such as finding a dominating set, maximal matching, maximal independent set, $(k, l)$-ruling sets etc.

We consider subgraph LCL on a path $P_n$. 
Before getting to the solution, we first remark that one may consider without loss of generality only LCL's with radius $1$. Given an LCL problem of radius $r$, one may construct an equivalent LCL with radius $1$ at the cost of increasing the output label size and the set of rules.

From a prior work (Theorem 3 in Foerster et al. \cite{chang2019exponential}), we may infer that if the round complexity of $\Pi$ in the $\LOCAL$ model is $o(n)$, then it must be $O(1)$ in the $\SUPPORTED$ model. This result is non-constructive, i.e., it argues that given a $o(n)$ round distributed algorithm, one can transform it into an $O(1)$ round algorithm. Additionally, it does not help categorize LCL problems that are $\Theta(n)$ in the $\LOCAL$ model. Some LCL problems (such as $2$-coloring) are $\Theta(n)$ in the $\LOCAL$ model, but clearly $O(1)$ in the $\SUPPORTED$ model. One can also construct LCL's that remain $\Theta(n)$ in the $\SUPPORTED$ model. Furthermore, the proof offers no insight about the additional amount of memory per node that is needed for the preprocessing stage. The following theorem addresses the above questions. Note that as done in prior work, we treat the size of the description of $\Pi$ as constant in the round complexity (in particular, $|\Sigma_{out}|$ and $|\Sigma_{in}|$ are constants).

\begin{theorem}
    Let $\Pi$ be a subgraph LCL with $|\Sigma_{in}| = 1$ and let $P_n$ be a path on $n$ vertices, then 
    \begin{itemize}
        \item $\SUPTIME(\Pi, P_n)$ is either $\Theta(1)$ or $\Theta(n)$
        \item $\SUPSPACE(\Pi, P_n)$ is $O(1)$
        \item $\SUPTIME(\Pi, P_n)$ and an optimal solution for $\Pi$ can be found in time polynomial in size of \ $\Pi$ by a centralized algorithm.
    \end{itemize}
\end{theorem}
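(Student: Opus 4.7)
The plan is to reduce $\Pi$ to a radius-$1$ LCL over an enlarged alphabet (standard; when $r$ is treated as a constant, this blows up $|\Sigma_{out}|$ only polynomially) and then study it via a finite auxiliary graph. Build the directed \emph{transition graph} $H_\Pi$ whose vertices are pairs $(a,b)\in\Sigma_{out}^2$ consistent with some allowed triple, and with an edge from $(a,b)$ to $(b,c)$ whenever $(a,b,c)$ is a legal internal triple; encode the endpoint rules via designated source and sink sets. Valid labellings of a sub-path of length $m$ are then in bijection with source-to-sink walks of length $m-2$ in $H_\Pi$, turning $\Pi$ into a question about walks in a graph of size $O(|\Sigma_{out}|^2)$.

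Call $\Pi$ \emph{periodically labellable} if there exist constants $M,p$ and finite words $w_L,w_C,w_R$ with $|w_C|=p$ such that, for every length $m\geq M$, a word of the form $w_L\cdot w_C^k\cdot w_R$ (for some $k$ determined by $m\bmod p$, possibly with a short rotation of $w_C$ to match $m$ exactly) lies in $L_\Pi\cap\Sigma_{out}^m$. Since $H_\Pi$ is finite, deciding periodic labellability reduces to searching, over candidate source/sink pairs, for cycles of length at most $|V(H_\Pi)|$ that are reachable from the source and co-reachable to the sink and whose residues match the requirement; this runs in time polynomial in $|H_\Pi|$, and the search simultaneously outputs the witness triple $(w_L,w_C,w_R)$ when it exists.

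For the $O(1)$ upper bound, assume $\Pi$ is periodically labellable with witness $(w_L,w_C,w_R)$. The preprocessing stores at each vertex $v$ of $P_n$ only $\mathrm{pos}(v)\bmod p$, which is $\lceil\log_2 p\rceil=O(1)$ bits. During the online stage each vertex walks $O(|w_L|+|w_C|+|w_R|)=O(1)$ hops into the sub-path, detects whether it lies in the prefix region, the suffix region, or the periodic interior (by checking where its sub-path neighbors end), and outputs the prescribed letter of the witness word, matching its stored residue with the phase of $w_C$. This yields $\SUPTIME(\Pi,P_n)=O(1)$ and $\SUPSPACE(\Pi,P_n)=O(1)$. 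For the non-periodic case, the trivial algorithm that collects the entire sub-path in $O(n)$ rounds and solves $\Pi$ centrally gives the matching $O(n)$ upper bound.

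The harder direction is the $\Omega(n)$ lower bound when $\Pi$ is not periodically labellable. Suppose an algorithm $\cA$, equipped with arbitrary preprocessing, runs in $t=o(n)$ rounds on every sub-path. The output of any deep interior vertex $v$ of a sub-path depends only on the preprocessing values at the $2t+1$ vertices in $v$'s $t$-ball and on the sub-path's local structure there. Restricting attention to preprocessing values whose influence can actually reach $v$, pigeonhole produces an arbitrarily long window in $P_n$ over which the preprocessing pattern repeats with some period $q$. Running $\cA$ on long sub-paths contained in this window, the induced sequence of outputs on the interior must itself be eventually periodic (with period dividing a multiple of $q$), while the behaviors near the two endpoints are determined by bounded-radius views; assembling these three pieces gives words $(w_L,w_C,w_R)$ that witness periodic labellability of $\Pi$, contradicting the assumption. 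The main obstacle will be executing this extraction cleanly: the endpoint patterns we read off from one witness sub-path must remain compatible with \emph{all} sub-path lengths that $\cA$ is required to solve, which is handled by sliding the observation window across enough residues modulo the extracted period and invoking locality of $\cA$ to show that any length-compatible walk in $H_\Pi$ matches the extracted witness.
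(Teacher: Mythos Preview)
Your overall architecture mirrors the paper's proof closely: reduce to a radius-$1$ LCL, build the same directed transition graph on $\Sigma_{out}^2$, and characterize $O(1)$-solvability via a cycle in that graph together with suitable walks connecting it to the source and sink sets. The preprocessing idea (store position modulo the cycle length) and the online algorithm (periodic labels in the interior, special prefix/suffix near the endpoints) also match. Two issues, one minor and one genuine, separate your proposal from a working proof.

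\textbf{Minor imprecision in the criterion.} Your definition of ``periodically labellable'' asks for a single triple $(w_L,w_C,w_R)$, patched by a ``short rotation of $w_C$'' to handle all lengths $m$. But $|w_L\cdot w_C^k\cdot w_R|$ hits only one residue class modulo $p=|w_C|$, and rotating $w_C$ does not change its length. The paper's criterion is sharper: a cycle $C$ and a vertex $v\in C$ such that the set of walk-lengths from $S$ to $v$ (and from $v$ to $T$) covers \emph{every} residue modulo $|C|$. Concretely, one needs $|C|$ distinct prefix walks $L_1,\dots,L_{|C|}$ and $|C|$ suffix walks $R_1,\dots,R_{|C|}$, one per residue class; the paper bounds each by $O(|\Sigma_{out}|^4)$ via a cycle-removal argument. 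Your upper bound goes through once you amend the criterion this way.

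\textbf{Genuine gap in the lower bound.} Your argument pigeonholes on the preprocessing: ``pigeonhole produces an arbitrarily long window in $P_n$ over which the preprocessing pattern repeats with some period $q$.'' This fails because preprocessing is \emph{arbitrary}: node $i$ may simply store the integer $i$, and then no $t$-ball pattern ever repeats. The paper avoids this entirely. It fixes the center of $P_n$ and runs the assumed $o(n)$-round algorithm on $\alpha=|\Sigma_{out}|^2$ different online instances $I_1,\dots,I_\alpha$ obtained by trimming $i-1$ vertices from each end. The center's $t$-ball (including its preprocessing) is identical across all $I_i$, so the algorithm's output on a central window of length $\alpha+1$ is the same in all instances. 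Pigeonhole is then applied to the \emph{output} pairs in this window (there are only $|\Sigma_{out}|^2$ of them), yielding a cycle $C$; and the $\alpha$ instances, having $\alpha$ consecutive lengths, supply walks from $S$ to the cycle vertex of all residues modulo $|C|$. In short: pigeonhole on outputs, not on preprocessing, and vary the instance rather than the position to generate the residue coverage. Your ``sliding the observation window'' remark at the end gestures toward this, but as written the argument rests on a step that cannot hold against arbitrary preprocessing.
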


\begin{proof}
    As remarked earlier, we may assume that the radius $r$ for the LCL problem is $1$. Therefore, on a path we can represent $\mathcal{C}$ as consisting of centered paths of length 1, 2 or 3, whose (ordered) label sets form a subset of $\Sigma_{out} \cup \Sigma^2_{out} \cup \Sigma^3_{out}$ (recall that $|\Sigma_{in}| = 1$ and can be ignored).
    Note that the tuples in $\mathcal{C}$ are ordered, in particular $(a, b)$ is different from $(b, a)$.
    For tuples of length $2$, we assume the first element is the label of the center and for tuples of size $3$, we assume that the middle element is the center. For example, $(a, b) \in \mathcal{C}$ represents a path of length $2$ with the center labeled $a$. Similarly $(a, b, c)$ represents a path of length $3$ with the center labelled $b$ and the endpoint vertices labelled $a$ and $c$. 
   
    Construct a directed graph $G_d$ defined as follows. 
    Its vertex set is $V(G_d)=\Sigma^2_{out}$, and $E(G_d)$ contains a directed edge from $(a, b)$ to $(b, c)$ if and only if $(a, b, c) \in \mathcal{C}$. 
    Define the \emph{starting} and \emph{terminal} vertices of $G_d$ as
    \begin{eqnarray*}
    S &=& \{(a,b)\in \Sigma^2_{out} \mid (a,b)\in\mathcal{C}\},
    \\
    T &=& \{(a,b)\in \Sigma^2_{out} \mid (b,a)\in\mathcal{C}\}.
    \end{eqnarray*}
  
    The key observation underlying our proof is that finding a solution to the LCL problem on a path of length $n$ is equivalent to finding a walk in $G_d$ of length $n$ that begins at a starting vertex and ends at a terminal vertex.
    
    \begin{claim}\label{clm:lcl:walk}
        For every path $P_n$ with $n \geq 3$, an assignment of output labels $(s_0, s_1, \dots s_{n-1})$ is valid if and only if $(s_0, s_1), (s_1, s_2) \dots (s_{n-2}, s_{n-1})$ is a walk in $G_d$ that begins at a starting vertex in $S$ and ends at a terminal vertex in $T$.
    \end{claim}
    \begin{proof}
        ($\Rightarrow$) By correctness of the solution we must have $(s_0, s_1), (s_{n-1}, s_{n-2}) \in \mathcal{C}$. By definition, $(s_0, s_1) \in S$ and $(s_{n-2}, s_{n-1}) \in T$. 
        
        By correctness of the LCL we have, $(s_{i-1}, s_{i}, s_{i+1}) \in \mathcal{C}$ and therefore there exists an edge between $(s_{i-1}, s_i)$ and $(s_i, s_{i+1})$ for every $1 < i < n-1$. Hence the given sequence represents a walk in $G_d$.
        
        ($\Leftarrow$) As starting and terminal vertices are in $S$ and $T$, respectively, we have $(s_0, s_1) \in \mathcal{C}$ and ~$(s_{n-1}, s_{n-2}) \in \mathcal{C}$. Therefore the rules are satisfied for the ends of the path. For the intermediate vertices we have that there is an edge from $(s_{i-1}, s_i)$ to $(s_i, s_{i + 1})$ for every $1 < i < n-1$, and therefore $(s_{i-1}, s_i, s_{i+1}) \in \mathcal{C}$. Hence the sequence is a valid assignment of labels.
    \end{proof}
    
    \begin{definition}[walkspan]
        Given a directed graph $G_d$ and two vertices $u, w \in V(G_d)$, define ~$\mathrm{walkspan}(u, w)$ as the set of lengths of walks in $G_d$ that start at $u$ and end at $w$. We extend the definition to subsets $U, W \subseteq V(G_d)$ in the natural way, i.e., $\mathrm{walkspan}(U, W) = \bigcup_{u \in U, w \in W} \mathrm{walkspan}(u, w)$.
    \end{definition}
    
Let $\alpha=|\Sigma_{out}|^2$. For a set of integers $S$ and a positive integer $k$, let $S/k = \{j \pmod k \mid j \in S\}$.

    \begin{lemma}
        If ~$\SUPTIME(\Pi, P_n) = o(n)$ then $G_d$ contains a cycle $C$ and a vertex $v \in C$ such that 
        \begin{equation}\label{eqn:criteria}
            \mathrm{walkspan}(S, v) / |C| ~=~ \mathrm{walkspan}(v, T) / |C| ~=~ \{0, 1, \dots |C|-1\}.
        \end{equation}
   \end{lemma}
   
   \begin{proof}
   Let $\mathcal{A}$ be any distributed algorithm for the online phase  that solves $\Pi$ in $o(n)$ (recall that $\mathcal{A}$ can use any information obtained out of an arbitrary preprocessing phase).
   Let the given path be $P_n = (v_1, v_2, \dots v_n)$, ordered from its left end to its right end. We assume $n>6\alpha$.
    Consider the subpath ~$Q = (v_{n/2-\alpha/2}, ..., v_{n/2 + \alpha/2 + 1})$ i.e., a path of length at least $\alpha + 1$, around the center of $P$. 
   Now construct $\alpha$ instances for the online phase, namely, $I_1, I_2, \dots I_{\alpha}$, where 
   $I_i=(v_i, v_{i+1}, \dots v_{n-i+1})$ for $1 \le i \le \alpha$, namely, each $I_i$ is obtained from $P_n$ by removing the $i-1$ first and last vertices.
   
   Note that since $n>6\alpha$, the first (respectively, last) vertex of $Q$ is at distance $\Omega(n)$ from $v_{\alpha}$ (resp., $v_{n-\alpha+1}$). Hence, each of the instances $I_i$ fully contains the subpath $Q$, and moreover, its start segment (from $v_i$ to the first vertex of $Q$) and end segment (from the last vertex of $Q$ to $v_{n-i+1}$) are of length $\Omega(n)$.
    This implies that during the execution of the online algorithm on any given recurrent instance $I_i$, the vertices in $Q$ cannot distinguish between any of the instances constructed above (i.e., they will see exactly the same inputs, and consequently perform exactly the same steps, on each of these instances). Consequently, for every vertex $v_j$ in $Q$, the output of $\mathcal{A}$ is the same for every instance $I_i$. Let the output be $\bar\psi = (s_0, s_1, \dots s_{n-1})$.
    
   As $|Q| > \alpha$, there exists a subpath of $Q$, say $\bar Q$, whose assigned labels 
   $s_t, s_{t+1}, \dots s_{j-1}, s_j$ form a simple cycle, i.e., such that $s_{j-1}=s_t$ and $s_j = s_{t+1}$. 
   By the correctness of these labels, we have that $(s_t, s_{t+1}), (s_{t+1}, s_{t+2}) \dots (s_{j-1}, s_{j})$ is a cycle in $G_d$. Denote this cycle by $C$ and let the first vertex of $\bar Q$ be $v_{\ell}$. We show that $C$ and $(s_t, s_{t+1})$ are the desired cycle and vertex satisfying the properties of the lemma.
    
    Note that in all the instances $I_1, I_2, \dots I_{\alpha}$, the labels of the vertices $v_{\ell}, v_{\ell+1}$ assigned by $\mathcal{A}$ remain the same (i.e., $s_t, s_{t+1}$ respectively). 
    Consider instance $I_i=(v_i, v_{i+1}, \dots v_{n-i+1})$.
    Let the assigned labels by $\mathcal{A}$ to this path be $\psi=$ $(s'_1, s'_2, \dots s'_{\ell-i+1}, s'_{\ell-i+2} \dots s'_{n-i+1})$. We have $s'_{\ell-i+1} = s_t$ and $s'_{\ell-i+2} = s_{t+1}$ (Here $s'_{\ell-i+1}, s'_{\ell-i+2}$ are the labels of $v_{\ell}, v_{\ell+1}$ respectively). Note that $\psi$ is valid. Therefore, by Claim \ref{clm:lcl:walk}, $(s'_1, s'_2) \in S$ and $(s'_1, s'_2), (s'_2, s'_3), \dots (s'_{\ell-i+1}, s'_{\ell-i+2})$ is a walk of length $\ell-i+1$ in $G_d$ that ends at $(s'_{\ell-i+1}, s'_{\ell-i+2})=(s_t, s_{t+1})$. It follows that for every $i = 1, 2, \dots \alpha$, there exists a walk that (i) starts from some vertex in $S$, (ii) ends at vertex $(s_t, s_{t+1})$ and (iii) is of length $\ell-i+1$. We have shown that $\mathrm{walkspan}(S, (s_t, s_{t+1}))$ contains $\alpha \geq |C|$ contiguous integers and hence $\mathrm{walkspan}(S, (s_t, s_{t+1})) / |C| = \{0, 1, \dots |C|-1\}$. By a symmetric argument we can show that $\mathrm{walkspan}((s_i, s_{t+1}), T) / |C| = \{0, 1, \dots |C|-1\}$.
   \end{proof}

    \begin{lemma}
    If $G_d$ contains a cycle $C$ and a vertex $v \in C$ that satisfies Equation \eqref{eqn:criteria}, then $\Pi$ is solvable in $O(\alpha^2) = O(|\Sigma|^4)$ rounds in the \SUPPORTED model.
   \end{lemma}
   
   \begin{proof}
   We first compute the shortest length walks of each congruence class in $\mathrm{walkspan}(S, v)$ and $\mathrm{walkspan}(v, T)$ modulo $|C|$. We show that the shortest such walk has length at most $2 \alpha^2$.
        
 Consider any walk $W = (w_0, w_1, \dots)$ in $G_d$.
 Decompose the walk into an alternating sequence of simple paths 
 and cycles, i.e., $W = P_0 \circ C_0 \circ P_1 \circ C_1 \dots$. Such a decomposition can be obtained by finding the smallest prefix of the walk that contains a simple cycle, say $P_0 \circ W_0$. Remove the vertices of $P_0 \circ W_0$ except the last vertex, and repeat recursively for the remaining part of the walk.
 We would like to shorten the walk $W$, while maintaining two invariants: (i) the remainder $|W|\pmod{|C|}$ obtained when the length of the walk is divided by $|C|$, and (ii) the fact that $W$ starts at a vertex from $S$ and ends at a vertex in $T$. We first observe that removing cycles in the walk does not affect invariant (ii). To achieve (i) we use the following well known number-theoretic fact.
        
    \begin{proposition}
       For any sequence of $n$ (not necessarily distinct) integers $a_1, a_2, \dots a_n$, there exists a subset of these integers whose sum is divisible by $n$.
    \end{proposition}
        
    \begin{proof}
       Define $s_i = (a_1 + a_2 + \dots + a_i) \pmod n$ for $i = 1, 2, \dots n$ and define $s_0 = 0$. By the pigeon-hole principle, there exist $0 \leq i < j \leq n$ such that $s_i = s_j$. The desired set is $\{a_{i+1}, a_{i+2}, \dots, a_j\}$.
    \end{proof}
        
    Apply the following shortening process to $W$. While there are at least $|C|$
    cycles in the walk decomposition, choose any subset of the cycles whose total length is divisible by $|C|$ and remove them. At the end of this process, we are left with a sequence of simple paths and cycles with at most $|C| - 1 < \alpha$ cycles and at most $|C| \leq \alpha$ paths.
    Each simple cycle and path contains at most $\alpha$ vertices and therefore the length of the final shortened walk $W$ is at most $2 \alpha^2$.
    
    We are now ready to describe the distributed recurrent algorithm for the solving $\Pi$, consisting of a preprocessing stage and an online procedure.
    
    \textbf{Preprocessing Stage.} In the preprocessing phase, we first compute a candidate cycle and vertex pair $C, v$ satisfying Equation \eqref{eqn:criteria}. Since $\Pi$ is global knowledge, $C, v$ can be reconstructed by each node in the online stage, as long as they use the same deterministic algorithm to find it. We only require the length of the cycle, $|C|$. Split the path into blocks of size exactly $|C|$, except possibly the last block. Color each node using two colors $0, 1$ such that two adjacent nodes have the same color if and only if they belong to the same block. Let this coloring be $\psi$. In addition to the above decomposition, we also orient the edges such that every node has outdegree at most $1$. This gives a consistent left to right orientation to the nodes of the path. We require only $1$ bit to be stored in each node, namely which of its neighbors has the outgoing edge. In total we have only two bits of information given to each node during the preprocessing stage, one bit for orientation and another bit for the block decomposition.
    
    \textbf{Online Stage.} Each node computes a candidate $C, v$ that satisfies Equation \eqref{eqn:criteria} using the same deterministic algorithm.  We also compute the shortest length walks $L_1, L_2, \dots L_{|C|}$ from a vertex in $S$ to $v$ and the walks $R_1, R_2 \dots R_{|C|}$ from $v$ to a vertex in $T$ such that $|L_i| \equiv |R_i| \equiv i \pmod {|C|}$. We discuss later how all of the above information can be obtained by centralized algorithms that run in time polynomial in $|\Sigma_{in}|$ (This bound does not affect round complexity, but shows that nodes only perform computation that is polynomial in $|\Sigma_{in}|$). 
    
    Let $I$ be the online instance which is a set of subpaths of the path $P$. We solve each subpath independently. Let $P_s$ be a subpath in $I$. We may assume that $P_s$ has at least $\alpha^2 + 2\alpha$ nodes, otherwise the instance can be solved by a single node that collects subpath $P_s$. 
    
    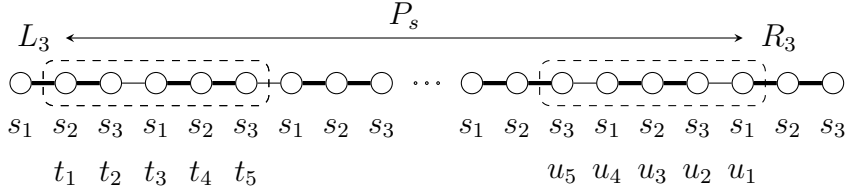
\begin{figure}
    \centering
    \begin{tikzpicture}[scale=0.6]
    
        \foreach \i in {1,...,9} {
            \tikzmath{\j=\i-1;}
            \pgfmathparse{int(Mod(\j, 3))};
            \tikzmath{\m=\pgfmathresult;}
            \node[draw, circle, inner sep=1mm] (A\i) at ($(\i,0)$) {};
            \pgfmathsetmacro{\thickness}{ifthenelse(\pgfmathresult==0, "", "ultra thick")};
            \ifthenelse{\i>1}{\draw[\thickness] (A\i) -- (A\j)}{};
            
            \pgfmathparse{int(\pgfmathresult+1)};
            \node at ($(\i,-1)$) {$s_{\pgfmathresult}$};
        }
        
        \foreach \i in {1,...,3} {
            \node[draw, circle, inner sep = 0.1mm] at ($(\i*0.25+9.5, 0)$) {};
        }
        
        \foreach \i in {10,...,18} {
            \tikzmath{\j=\i+1;}
            \tikzmath{\jj=\i-1;}
            \node[draw, circle, inner sep=1mm] (A\i) at ($(\j,0)$) {};
            \pgfmathparse{int(Mod(\jj-9, 3))};
            \pgfmathsetmacro{\thickness}{ifthenelse(\pgfmathresult==0, "", "ultra thick")};
            \ifthenelse{\i>10}{\draw[\thickness] (A\i) -- (A\jj)}{};
            
            \pgfmathparse{int(\pgfmathresult+1)};
            \node at ($(\j,-1)$) {$s_{\pgfmathresult}$};
        }

        \draw[stealth-stealth] (2, 1) -- (17, 1);
        \node at (9.5,1.5) {$P_s$};
        
        \foreach \i in {1,...,5} {
            \node at ($(\i+1, -2)$) {$t_{\i}$};
        }
        
        \foreach \i in {1,...,5} {
            \node at ($(18-\i, -2)$) {$u_{\i}$};
        }
        
        \draw[rounded corners, dashed] (2-0.5,0.5) rectangle (6+0.5,-0.5);
        \node at (1.3, 1) {$L_3$};

        \draw[rounded corners, dashed] (2-0.5,0.5) rectangle (6+0.5,-0.5);
        \draw[rounded corners, dashed] (13-0.5,0.5) rectangle (17+0.5,-0.5);
        \node at (17.8, 1) {$R_3$};
    \end{tikzpicture}
    \caption{Given path $P_n$ decomposed into blocks of size $k=3$. A subpath $P_s$ chosen for some online instance. Labels $s_i$ are obtained from the cycle $C$ of the corresponding graph $G_d$.}
    \label{fig:lcl:alg}
\end{figure}
    
    Decompose the subpath $P_s$ into blocks by removing edges whose ends (say $u, v$) have different colors ($\psi(u) \neq \psi(v)$). This decomposes $P_s$ into blocks of size exactly $k = |C|$, except possibly the first and last blocks (i.e. those containing the ends of the subpath). Each block is also oriented from left to right (using the orientation remembered from the preprocessing phase). Let the labels of the cycle $C$ be $(s_1, s_2), (s_2, s_3) \dots (s_k, s_1), (s_1, s_2)$ with $(s_1, s_2)$ denoting the vertex $v$. Label the $i^{th}$ vertex of the block (numbered from left) with $s_i$. This is already ``almost" a valid labelling for $P_s$, because except for a ends of the subpath $P_s$, all nodes see a graph from $\mathcal{C}$ in their local $1$-neighborhood. Let the number of vertices in the first and last blocks of $P_s$ be $a', b'$ respectively and let $a, b$ be integers such that $1 \leq a, b, \leq k$ and $a \equiv a'+1 \pmod{k}$, $b \equiv b'-1 \pmod{k}$. $L_a = \{(t_1, t_2) (t_2, t_3) \dots (t_{|L_a|-1}, s_1) (s_1, s_2)\}$. Replace the labels of the first $|L_a|-1$ nodes of $P_s$ with $t_1, t_2, \dots t_{|L_a|-1}$. Similarly replace the labels of the last $|R_b|-1$ vertices with the labels obtained from the walk $R_b$. Both can be done distributively in $|L_a| + |R_b| = O(\alpha^2)$ rounds by having the ends of the subpath $P_s$ relay this information to the nodes. As the length of the path is at least $4 \alpha^2 + 2\alpha$, the first $|L_a|-1$ and last $|R_b|-1$ vertices of subpath $P_s$ are disjoint. The resulting labelling traces a walk in $G_d$ from a vertex in $S$ to a vertex in $T$ and by Claim \ref{clm:lcl:walk} is a valid labelling for the LCL $\Pi$.
    
    Figure \ref{fig:lcl:alg} shows an example of an LCL whose cycle, $C = \{(s_1, s_2), (s_2, s_3), (s_3, s_1)\}$, contains three vertices. The online subpath $P_s$ is such that the precomputed block decomposition decomposes the first and last blocks into sizes $a'=2, b'=1$ respectively. We have $a=3, b=3$ to be the desired walk lengths. In this example $L_3, R_3$ both are walks on $6$ vertices. $L_3 = \{(t_1, t_2), (t_2, t_3), \dots (t_5, s_1), (s_1, s_2)\}$. Similarly $R_3$ is the walk $\{(s_1, s_2), (s_2, u_5), (u_5, u_4) \dots (u_2, u_1)\}$.
    
We conclude with justifying the algorithm and its time complexity analysis.
First, notice that we do not need to precompute the cycle $C$ in the preprocessing stage. Given the description of $\Pi$, we can verify in the online execution if there exists a cycle $C$ and vertex $v$ satisfying Equation \eqref{eqn:criteria}. This can be done in a single round (in which the number of computational steps performed locally at each vertex is polynomial in $|\Sigma_{in}|$),
as $C$ has $\alpha = |\Sigma|^2$ vertices.
To compute (online) the desired walks (or establish that they do not exist), note that we only need to consider walks of length at most $\alpha^2 = |\Sigma|^4$. 
   Consider the graph $G_w$ with vertices $(u, j)$ for $u \in G_d$ and $0 \leq j < \alpha^2$. Draw an edge from $(u_1, j)$ to $(u_2, j+1)$ if there exists an edge $(u_1, u_2)$ in $G_d$. Add a vertex ${\sf src}$ to the graph and an edge from ${\sf src}$ to $(u_1, 0)$ if $u_1 \in S$. Determining if there exists a walk in the graph from a vertex in $S$ to $v$ of length $j$ is equivalent to finding a path in $G_w$ from ${\sf src}$ to $(v, j)$. 
   
   Finally to test if a cycle of length $c$ exists, find the smallest $j \leq \alpha$ such that (i) there exists a walk from $S$ to $v$ and $v$ to $T$ of length $0, 1, \dots j-1 \pmod j$, and (ii) there exists a walk from $v$ to $v$ of length $j$. This can again be determined in a single round, spending a polynomial number of computation steps locally at each vertex.
   \end{proof}
   
   We have thus shown that $\SUPTIME(\Pi, P_n)$ is either $\Theta(1)$ or $\Theta(n)$, and $\SUPSPACE(\Pi, P_n) = O(1)$, i.e., to achieve optimal round complexity, it is sufficient to remember $O(1)$ bits. Precomputing more information does not help to reduce round complexity of this particular problem.
   
   \textbf{Remark.} The bound $O(|\Sigma|^4)$ on the round complexity cannot be improved further in general, i.e., there are LCL's that require these many rounds. We construct them by first constructing the graph $G_d$. Consider the graph $G_d$ which consists of two directed cycles that share exactly one common vertex and both cycles have length $\Omega(|\Sigma|^2)$ and are prime. Exactly one vertex is contained in $S$ and this vertex is also in $T$ and this vertex is NOT the common vertex of the two cycles. In order to realise all congruence classes, we need a walk of size at least $p_1 p_2 - \mathsf{max}(p_1, p_2)$ which is $\Omega(|\Sigma|^4)$.
\end{proof}

\newcommand{\inpl}{\Sigma_{in}}
\newcommand{\outpl}{\Sigma_{out}}

%%%%%%%%%%%%%%%%%%%%%%%%%%%
\subsection{Recurrent LCL's on Paths}
\label{sec:reclcl:paths}

In this section we look at a broader class of recurrent \emph{in-labeled LCL's}, namely, LCL's augmented with input labels, wherein the online instances specify different input labels. The set of rules $\mathcal{C}$ and the set of output labels $\outpl$ remain the same across instances. The only components of the input that vary across instances are the input labels $\Gamma^{in}$. Subgraph LCL's, studied in Sect. \ref{sec:subgraphlclw/oinput}, can be represented in this framework by encoding the adjacent edges that are present in the input labels for each vertex, hence in-labeled LCL's are a generalisation of subgraph LCL's. Problems such as finding a Client Dominating Set, Color Completion, Maximal Matching and in general variants of classical local problems with PFO and / or PCS fall into this category.

We show that even for these instances, finding the optimal round complexity is either $\Theta(1)$ or $\Theta(n)$, thus extending the distributed speed up theorem in Foerster et al. \cite{foerster2019power}. However, so far we were unable to find a characterization as obtained in the previous subsection. Therefore, we are left with a couple of intriguing open questions. First, we do not know any bounds on the constant of the running time in terms of the size of the LCL $\Pi$, namely,  $|\inpl|+ |\outpl|$. Second, we do not know if it is possible to decide the online round complexity in polynomial time given the description of the LCL. 

\begin{theorem}
    Let $\Pi$ be a recurrent LCL problem whose online instances differ only in the assignment of input labels. Then $\SUPTIME(\Pi, P_n)$ is either $\Theta(1)$ or $\Theta(n)$ and $\SUPSPACE(\Pi, P_n) = O(1)$.
\end{theorem}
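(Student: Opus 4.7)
My plan is to extend the state-graph analysis of Section~\ref{sec:subgraphlclw/oinput} so that it accounts for varying input labels. Assume without loss of generality that the radius is $r=1$, and define a directed graph $G_d$ with vertex set $V(G_d)=(\inpl\times\outpl)^2$, placing an edge from $((\sigma_1,s_1),(\sigma_2,s_2))$ to $((\sigma_2,s_2),(\sigma_3,s_3))$ iff the centered 3-vertex labelled path with these input/output labels lies in $\mathcal{C}$; let $S,T\subseteq V(G_d)$ be the analogous starting/terminal sets derived from the 2-vertex rules in $\mathcal C$. An exact analogue of Claim~\ref{clm:lcl:walk} then shows that valid output labelings of $P_n$ for an input sequence $(\sigma_1,\dots,\sigma_n)$ correspond bijectively to $S$-to-$T$ walks in $G_d$ whose edge-label projection to $\inpl$ equals $(\sigma_1,\dots,\sigma_n)$. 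The trivial $O(n)$ upper bound follows from collecting the entire instance at one vertex, so only the lower half of the dichotomy requires work.

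To establish the $\Theta(1)$--$\Theta(n)$ gap, I will assume that some preprocessing $\cA_{pre}$ and online algorithm $\cA$ solve $\Pi$ in $T(n)=o(n)$ rounds and extract from $\cA$ a universal constant block length $k=k(|\inpl|,|\outpl|)$ and a deterministic block rule $\phi\colon\inpl^k\to\outpl^k$ with the property that, for every input $\vec\sigma$, applying $\phi$ block-by-block (together with $O(1)$-sized patches at the two endpoints realized by fixed $S$-to-$v$ and $v$-to-$T$ walks in $G_d$) yields a valid labeling. The extraction proceeds by a cut-and-paste argument: for $n$ large enough there exist two deep-interior positions $i<j$ whose $T(n)$-windows of preprocessing labels lie in the same \emph{behavior class}, i.e.\ induce the same output as a function of the $T(n)$-window of input labels. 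Swapping the input segment between $i$ and $j$ into an arbitrary instance produces another valid labeling by locality, and iterating this swap over all short input patterns yields a block-periodic family of walks in $G_d$, from which $\phi$ can be read off.

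Given $k$ and $\phi$, the target $O(1)$-round, $O(1)$-space algorithm is constructed in direct analogy to the subgraph case. The preprocessing stores only the block residue $i\bmod k$ and an orientation bit per node, hence $\SUPSPACE(\Pi,P_n)=O(1)$. At runtime each block gathers its $k$ input labels in $O(k)$ rounds, applies $\phi$ locally, and performs a constant-round reconciliation between adjacent blocks using walks in $G_d$ enumerated online from the constant-size description of $\Pi$; the two endpoints are patched by constant-length walks from $S$ and into $T$ in the same fashion as in the proof of the subgraph-LCL theorem.

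The main obstacle lies in the extraction of $\phi$: the preprocessing alphabet is a priori unbounded and $T(n)$ may grow with $n$, so naive pigeonholing on raw preprocessing labels does not suffice because the number of behavior classes, bounded by $|\outpl|^{|\inpl|^{2T(n)+1}}$, may far exceed $n$. I plan to circumvent this by pigeonholing on behavior classes rather than raw labels and, crucially, by exploiting the fact that $\cA$ must succeed uniformly for every sufficiently large $n$: a diagonal/compactness argument over an increasing sequence of instance sizes forces a common limiting block structure, which yields a constant $k$ independent of $n$ and delivers the desired block rule $\phi$.
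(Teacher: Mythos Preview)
Your approach departs substantially from the paper's, which is far simpler. The paper does not extend the state-graph machinery of Section~\ref{sec:subgraphlclw/oinput} at all; instead it uses a direct simulation argument (essentially the speedup trick of Chang--Kopelowitz--Pettie and Foerster et al.): pick a fixed $n_0$ with $4T(n_0) < n_0$, and on any longer path let each vertex $u_i$ run $\mathcal{A}(n_0)$ as if it lived in a length-$n_0$ subpath $\tilde P_i\subseteq P_n$ containing it. Only $O(1)$ bits of preprocessing (essentially a position modulo $n_0$) are stored, since $\preprocessinf(n_0,\cdot)$ has constant total size and can be recomputed locally. Consistency between neighbours follows because their $T(n_0)$-views inside $\tilde P_u$ and $\tilde P_v$ coincide. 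No structural analysis of $\Pi$ or of $\mathcal{A}$ is needed beyond the existence of $n_0$.

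Your route has a concrete gap at the reconciliation step. In the no-input case of Section~\ref{sec:subgraphlclw/oinput}, block boundaries and endpoints are patched by \emph{fixed} walks in $G_d$ whose only constraint is their length modulo $|C|$. With input labels this fails: any walk you traverse in your $G_d$ must project onto the \emph{given} input sequence, so neither the ``fixed $S$-to-$v$ and $v$-to-$T$ walks'' nor the inter-block fixes can be chosen independently of the instance. Your block rule $\phi\colon\inpl^k\to\outpl^k$ has the same defect: the last output of $\phi(\vec\sigma)$ and the first output of $\phi(\vec\sigma')$ must form a valid transition in $G_d$ for \emph{every} pair of adjacent input blocks $\vec\sigma,\vec\sigma'$, but $\phi$ sees only its own block, so nothing in your extraction argument forces this compatibility. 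Repairing this pushes $\phi$ to depend on neighbouring blocks' inputs, at which point you are essentially rediscovering the simulation argument. Finally, your claim that $k=k(|\inpl|,|\outpl|)$ is stronger than what the paper obtains---it explicitly states that no bound on the constant in terms of $|\inpl|+|\outpl|$ is known---and the diagonal/compactness sketch you give does not deliver such a bound either. I would recommend dropping the state-graph extension here and adopting the simulation argument directly.
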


\newcommand{\preprocessinf}{{\sf Inf}_{pre}}

\begin{proof}
    Suppose $\SUPTIME(\Pi, P_n) = T(n) = o(n)$. Let $\mathcal{A}(n)$ be the online algorithm and denote the information obtained during the preprocessing phase for a path of length $n$ by $\preprocessinf(n)$.
    
    Using $\mathcal{A}(n)$ and $\preprocessinf(n)$, we design a modified preprocessing phase, yielding the information $\preprocessinf'(n)$, and a modified online algorithm $\mathcal{A'}(n)$ with improved online round complexity.
    
    Since $T(n) = o(n)$, there must exist a constant $n_0$ such that $4T(n_0) < n_0$.
    We only describe the procedure for paths whose lengths are more than $2n_0$. The remaining cases can be identified and dealt with separately. 
    
    Let the path be $P_n=(u_1, u_2, \dots u_n)$, numbered from left to right for $n > 2n_0$. 
    Let $\preprocessinf'(n, i)$ be the output given to $u_i$ (i.e., the vertex at distance $i-1$ from the beginning of $P_n$). Let $\tilde{P}_i$ be a subpath of $P_n$ of length $n_0$, that contains $u_i$ and as many vertices as possible that are at a distance at most $n_0/2$ or fewer. More formally suppose $u_i$ is in the left half of the path, then construct $\tilde{P}_i$ by taking $\min(n_0/2, i)$ vertices to the left of $u_i$ and as many vertices to the right of $u_i$ until its length becomes $n_0$.
    
    \textbf{Preprocessing Stage.} 
    Define $\preprocessinf'(n, u_i) = \preprocessinf(a, b) || \{i \pmod {n_0}\}$, where $||$ denotes concatenation.
    That is, in addition to $\preprocessinf(a, b)$ we also precompute the distance of $u_i$ from the left end of the path (modulo $n_0$). Observe that $i \pmod {n_0}$ provides a labelling of the vertices such that no two nodes within a distance of $n_0$ have the same label.
    
    \textbf{Online Stage.} Label the vertices of the path such that no label repeats within a distance $n_0$. This labelling was already computed in the preprocessing phase, by computing the distance of the vertices modulo $n_0$. Define the modified online algorithm as $\mathcal{A}'(n) = \mathcal{A}(n_0)$, i.e. vertices $u_i$ simply execute $\mathcal{A}$ pretending they are in the path $\tilde{P}_i$ and have labels $i \pmod{n_0}$.
    
    To verify correctness, recall that $4T(n_0) < n_0$, and therefore, when a vertex $u$ executes $\mathcal{A'}$ and assumes a label $\sigma$, it has enough information within its locality to determine label $\sigma'$ of every node within a distance of $T(n_0)$, as all these vertices communicate only with vertices at a distance of $2T(n_0)$ from $u$. Suppose $v$ is at distance at most $T(n_0)$ from $u$. The inputs that $v$ sees within its neighborhood in the path $\tilde{P}_v$ are identical to the inputs that it sees in the path $\tilde{P}_u$, and so the two executions must be the same. Since $\tilde{P}_u$ is labelled correctly for all $u$, the path $P_n$ must also be correctly labelled. 
    
    Finally note that the round complexity of $\mathcal{A}'$ is $T'(n)=$ $T(n_0)$, which is clearly constant, and the total space used per vertex is also a function of only 
    $n_0=O(1)$,
    so $\SUPSPACE(\Pi, P_n) = O(1)$.
\end{proof}

The proof of the above theorem is almost the same as that of Theorem 6 in \cite{chang2019exponential} (for the $\LOCAL$ model) and Theorem 3 in \cite{foerster2019power} (for the $\SUPPORTED$ model). Note that the above theorem is stronger than Theorem 3 of \cite{foerster2019power}, which only translates $o(n)$ time algorithms in $\LOCAL$ to $O(1)$ time algorithms in $\SUPPORTED$, 
whereas our argument also translates $o(n)$ time algorithms in $\SUPPORTED$ to $O(1)$ time algorithms in $\SUPPORTED$. 

%%%%%%%%%%%%%%%%%%%
\section{Maximal Matching and Maximal Independent Set}
%%%%%%%%%%%%%%%%%%%%%%%%%

In this section we explore some results on the sub-graph maximal matching and sub-graph maximal independent set problems. By sub-graph maximal matching, we mean a recurrent problem wherein the online instances are simply sub-graphs of the original graph. Similarly for sub-graph MIS. The removed edges can still be used for communication.

\subsection{Maximal Matching}
%%%%%%%%%%%%%%%%%%%%%%%%%%%%%%%%%%%%%%%%%%%%%%%%

We look at trees and bounded arboricity graphs. Balliu et al. \cite{Balliu+focs19} showed that in the $\LOCAL$ model, computing a maximal matching for $\Delta$ regular trees deterministically (or with probability at least $1 - \frac{1}{\Delta^\Delta}$) requires $\Omega(\Delta)$ rounds whenever $\Delta >> \log n$. This lower bound trivially extends to the sub-graph maximal matching problem in the $\LOCAL$ model.

In this section we show that for bounded arboricity graphs, maximal matching can be solved in $O(a)$ rounds where $a$ is the arboricity. In particular for trees, this gives an $O(1)$ round algorithm. We note that this already indicates some separation for the general sub-graph maximal matching problem.

\begin{theorem} \label{mm:trees}
    Sub-graph Maximal Matching on trees can be done in $O(1)$ rounds in the $\SUPPORTED$ model.
\end{theorem}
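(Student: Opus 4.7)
My plan is to use the preprocessing phase to root $T$ at an arbitrary vertex $r_0$ and to store at every vertex $v$ the color $c(v)=d(v)\bmod 3$, where $d(v)$ is the depth of $v$ in the rooted tree; since any edge of $T$ joins vertices whose depths differ by exactly one, the classes $V_0,V_1,V_2$ form a proper $3$-coloring of $T$. Each node only needs to store its $2$-bit color, which is well within the $\SUPPORTED$ model's requirements.

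For the online stage on input subgraph $H$, I will run three sequential phases, one per color class. In phase $i\in\{0,1,2\}$, every still-unmatched vertex $v\in V_i$ sends a match request to its lowest-ID still-unmatched $H$-neighbor; every still-unmatched vertex $u\notin V_i$ that receives requests accepts the lowest-ID requester, and the accepted edge joins the output matching $M$. Each phase costs only a constant number of communication rounds (propose, accept, broadcast the resulting status), so the total online complexity is $O(1)$.

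To verify correctness I will establish both the matching property and maximality. The matching property is immediate from the ``accept-lowest-ID'' rule combined with the still-unmatched filter, which removes any matched vertex from both the requester and acceptor pools in subsequent phases. For maximality, I will assume for contradiction that some $(v,u)\in H$ has both endpoints unmatched at the end, and derive a contradiction by case analysis on $(c(v),c(u))$ and on whether $u$ is $v$'s parent or child in the rooted tree. The key structural observation, specific to the depth-mod-$3$ coloring, is that every vertex has exactly one neighbor (its parent) in its parent's color class, but potentially many neighbors in its children's color class; hence a request directed at a child-color-class neighbor cannot lose a tie-break, while a request to the parent that loses a tie-break forces some sibling of $v$ to match with that parent.

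The main obstacle I anticipate is completing the cascade of subcases in the maximality argument: when $v$'s request in phase $c(v)$ is redirected away from $u$ toward $v$'s parent and is then rejected on tie-breaking, I must show that in the subsequent phase $c(u)$, $u$ either proposes to some unmatched neighbor that is forced to accept (via the one-parent-per-color-class structural fact) or proposes to $v$ itself and thereby forces $v$ to be matched via tie-breaking among $v$'s children. Carrying this cascade through the six cyclic color-pair orientations and the two parent/child configurations closes every ``escape route'' and yields the theorem.
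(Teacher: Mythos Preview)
Your proposal is correct, but the paper takes a considerably simpler route that avoids the case analysis you anticipate. The paper's preprocessing stores only $\depth(v)\bmod 2$ together with the parent pointer; in the online phase there are just two phases, and in phase $j$ every unmatched vertex $v$ with $\depth(v)\equiv j$ sends a request \emph{only to its parent} (if that edge lies in $H$), and each parent accepts an arbitrary requesting child. Maximality is then immediate: for any $H$-edge $(v,p)$ with $p=\parent(v)$, if $v$ is unmatched when its phase arrives it requests $p$, and $p$ necessarily accepts some child, so $p$ is matched.

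The essential difference is where the tree structure is exploited. You encode it indirectly via the $3$-coloring (so that parent-class and child-class neighbors are distinguishable) and then let vertices propose to an arbitrary lowest-ID neighbor; this forces you to chase the cascade ``$v$ proposed to its parent and lost, so in a later phase $u$ proposes \ldots'' through up to three phases and several sub-cases. The paper instead uses the parent pointer explicitly as the proposal target, which collapses the entire maximality argument to one line and needs only two phases. Your approach does work (the key facts you identify---a child receiving a proposal from its parent has no competing proposer, and every $H$-child of $u$ that does not propose to $u$ must get matched to its own child---are exactly what close the cascade), so nothing is wrong; it is simply doing more work than necessary because it forgoes the direct use of the rooted orientation.
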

\begin{proof}  
    \textbf{Preprocessing Stage.} Root the tree at an arbitrary vertex and calculate depth of every vertex modulo $2$, denote this by $\depth(v)$ for vertex $v$.
    
    \textbf{Online Stage.} We deal with vertices $v$ that have the same $\depth(v)$ simultaneously. During phase $j$ (for $j = 0, 1$), vertices $v$ with $\depth(v) = j$ and that are not yet matched, send a request to their parent (if there exists an edge from $v$ to its parent in the online stage). Among every vertex $p$ with $\depth(p) = j - 1 \pmod{2}$ that has received a request, $p$ chooses an arbitrary (perhaps one with smallest id) and adds the edge to the maximal matching and removes all incident edges from the graph. When a node $v$ receives an acceptance, it removes all incident edges from the graph (for the next phase). 

    \textbf{Correctness.} When an edge is added to the matching (say $(v, p)$), let $u$ be an arbitrary child of $v$, and $q$ be the parent of $p$.
    Because $\depth(p) \neq \depth(v)$, $(p, w)$ will not be added to the matching in this round. Similarly the edge $(u, v)$ will not be added. So all edges added to the matching in a phase do not violate the constraints for a valid matching.

    At the end of two phases, consider an edge $(v, p)$ that is not part of the matching, where $p$ is the parent of $v$. The reason that edge $(v, p)$ did not get added to the matching was that when $v$ sent a request to $p$ to add an edge, either $p$ accepted another request or $p$ was already matched. In either case, there exists an edge incident on $p$ that is part of the matching, and hence $(v, p)$ cannot be added to the matching.

    \textbf{Round Complexity.} Each phase takes $3$ rounds (one to send the request, one to receive its acceptance into the matching and one to notify neighbors and remove edges from the graph) and there are only $2$ phases.
\end{proof}

\begin{theorem}\label{mm:arb}
    Sub-graph Maximal Matching on graphs of arboricity $a$ can be computed in $O(a)$ LOCAL rounds in the SUPPORTED model.
\end{theorem}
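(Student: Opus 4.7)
My plan is to lift the tree algorithm of Theorem \ref{mm:trees} using a Nash--Williams forest decomposition of the graph, processing the forests sequentially.

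\textbf{Preprocessing stage.} By the Nash--Williams theorem, a graph of arboricity $a$ can be edge-partitioned into forests $F_1, F_2, \ldots, F_a$. Since the preprocessing stage may be arbitrary, I would compute such a decomposition and, for each forest $F_i$, root every tree at an arbitrary vertex and store at each vertex $v$ (i) the identity of its parent $p_i(v)$ in $F_i$ (or $\bot$ if $v$ is a root of $F_i$), and (ii) the parity $\depth_i(v) \bmod 2$ of its depth in $F_i$. This is $O(a)$ bits per vertex. Note that each vertex has at most one parent per forest, so its total out-degree in the oriented union is at most $a$.

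\textbf{Online stage.} Given an online instance consisting of a subgraph $H$, I process the forests one at a time, for $i = 1, 2, \ldots, a$. At the start of iteration $i$, call a vertex \emph{free} if it has not yet been incorporated into the partial matching $M$. Let $F_i'$ denote the subforest of $F_i$ induced by edges of $H$ whose two endpoints are both still free. I then simply invoke the two-phase tree procedure of Theorem \ref{mm:trees} on $F_i'$, using the precomputed parities $\depth_i(\cdot)$ to drive the two phases. Newly matched vertices are added to $M$ and thereby become non-free for subsequent iterations. Each iteration takes $O(1)$ rounds (a constant number per phase for sending requests, accepting one, and notifying neighbors), and there are $a$ iterations, giving an online round complexity of $O(a)$.

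\textbf{Correctness.} The matching $M$ remains valid throughout: within iteration $i$, the tree procedure only matches pairs of free vertices, so no vertex gets two incident matching edges. For maximality, consider any edge $e = (u,v) \in E(H)$; it belongs to a unique forest, say $F_i$. If either $u$ or $v$ is already matched before iteration $i$ begins, we are done. Otherwise, $e$ is present in $F_i'$, and by the maximality guarantee of the tree algorithm applied in iteration $i$, at least one endpoint of $e$ must be matched by the end of that iteration and hence in $M$.

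\textbf{Main obstacle.} The only subtle point is ensuring that, when the tree subroutine is invoked on $F_i'$, each vertex indeed knows locally which of its $F_i$-edges are ``active'' (i.e., endpoints still free). This is handled because matching a vertex is a local event: whenever a vertex $v$ becomes matched it notifies all its $F_i$-neighbors for the relevant $i \ge \text{current}$, which costs only an additive constant per iteration and does not affect the $O(a)$ bound. Thus the rooted-forest structure and parity labels from the preprocessing stage are sufficient to carry out the $a$ tree-matching rounds correctly.
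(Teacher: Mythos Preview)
Your proposal is correct and follows essentially the same approach as the paper: decompose the edge set into $a$ forests via Nash--Williams, then process the forests sequentially, in each iteration invoking the sub-graph tree matching procedure of Theorem~\ref{mm:trees} on the current forest restricted to still-free vertices and to edges present in $H$. The paper's proof is slightly terser (it phrases the bookkeeping as ``remove all incident edges of newly matched vertices'' rather than tracking a free/non-free status), but the algorithm and analysis are the same.
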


\begin{proof}
    We use the algorithm of Theorem \ref{mm:trees}.
    
    \textbf{Preprocessing Stage.} Compute a forest decomposition of the edge set of the graph. Invoke preprocessing stage of proof of Theorem \ref{mm:trees}.

    \textbf{Remark.} Computing optimal forest decomposition is NP-hard, however one can obtain a forest decomposition into $2a$ forests in polynomial time, so the preprocessing can be made efficient at the expense of only a constant factor in round complexity.

    \textbf{Online Stage.} We compute maximal matching of the subgraph induced by each forest simultaneously. In phase $j$ (for $j = 1, 2, \dots a$), execute sub-graph maximal matching for the $j^{th}$ forest in the decomposition, using the online stage of proof of Theorem \ref{mm:trees}. Before proceeding to the next phase, for every vertex $v$ which got a new incident edge added to the matching, remove all its incident edges from the graph.
\end{proof}

\subsection{Bipartite graphs are the hardest instances to breach the \texorpdfstring{$\Delta$}{Delta} barrier}

Towards finding optimal algorithms for sub-graph Maximal Matching, we show the following lemma.

\begin{lemma} 
    If subgraph MM can be solved in $o(\Delta)$ for all bipartite graphs, then subgraph-MM can be solved in $o(\Delta)$ for all graphs.
\end{lemma}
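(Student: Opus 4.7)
The plan is to reduce subgraph MM on an arbitrary graph $G$ with maximum degree $\Delta$ to $O(\log \Delta)$ invocations of the assumed bipartite subgraph MM algorithm $\cA_B$, exploiting the SUPPORTED model's preprocessing phase to carry out a bipartite edge-cover of $G$.

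First, in preprocessing I would compute a proper vertex coloring $\col:V(G)\to[\chi]$ with $\chi\le\Delta+1$ colors (unrestricted preprocessing makes this trivial). I encode each color in binary with $b^{\star}=\lceil\log_2\chi\rceil=O(\log\Delta)$ bits. For each bit position $b\in\{0,1,\ldots,b^{\star}-1\}$, the $b$-th bit induces a bipartition $V(G)=V_0^b\cup V_1^b$ and hence a bipartite subgraph $B^b\subseteq G$ consisting of exactly those edges of $G$ whose endpoints lie on opposite sides of the $b$-th bipartition. Since any two distinct colors differ in at least one bit, every edge of $G$ lies in some $B^b$, so $\{B^b\}_b$ is a bipartite edge-cover of $G$. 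For each $b$ I then invoke the preprocessing routine of $\cA_B$ on $B^b$ and store the resulting per-vertex state together with the $O(\log\Delta)$-bit color label.

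In the online stage, given a sub-graph instance $H\subseteq G$, I iterate over $b=0,1,\ldots,b^{\star}-1$, maintaining a running matching $M$ initialized to $\emptyset$. At iteration $b$, let $H_b=H\cap B^b$ and let $H_b'$ be the subgraph obtained from $H_b$ by deleting the vertices already saturated by $M$; I invoke $\cA_B$ on $H_b'$ (which is bipartite as a subgraph of $B^b$, and has maximum degree at most $\Delta$) to obtain a maximal matching $M_b$ of $H_b'$, and set $M\gets M\cup M_b$. Every iteration adds edges that are vertex-disjoint from the current $M$, so $M$ remains a matching throughout. Maximality is then proved by contradiction: if some $(u,v)\in E(H)$ had both endpoints unsaturated at the end, then $\col(u)\ne\col(v)$ differ in some bit $b$, so $(u,v)\in B^b\cap H=H_b$; at the moment iteration $b$ ran, $u$ and $v$ were both unmatched (since they remain so through all later iterations), hence $(u,v)\in H_b'$, contradicting the maximality of $M_b$.

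The hard part will be the round complexity. A direct accounting gives $O(\log\chi)\cdot T(\Delta)=O(\log\Delta)\cdot o(\Delta)$ rounds, which is strictly $o(\Delta)$ only when $T(\Delta)=o(\Delta/\log\Delta)$. For bipartite algorithms with $T(\Delta)=O(\Delta^{1-\epsilon})$ for some fixed $\epsilon>0$ --- the typical shape of sublinear bounds in this setting --- the logarithmic overhead is absorbed into the polynomial saving and the overall bound is still $o(\Delta)$. To preserve arbitrary $o(\Delta)$ bounds the main technical challenge is to collapse the $O(\log\Delta)$ sequential invocations into $O(1)$, for instance by running many bipartite subgraphs in parallel on vertex-disjoint partitions, or by constructing in preprocessing a single bipartite auxiliary graph $B_G$ from which a maximal matching of $H$ can be extracted in $O(1)$ additional rounds; designing this finer reduction is where I expect the real work of the lemma to lie.
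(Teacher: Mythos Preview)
Your reduction is structurally sound and the maximality argument is correct, but the proposal does not prove the lemma: as you yourself note, $O(\log\Delta)$ sequential calls to an $o(\Delta)$ bipartite routine need not yield $o(\Delta)$ total. For instance, if $T(\Delta)=\Delta/\log\log\Delta$, your bound is $\Theta(\Delta\log\Delta/\log\log\Delta)=\omega(\Delta)$. The ideas you sketch for closing the gap --- running bipartite pieces in parallel, or a single auxiliary bipartite graph --- are not worked out, and the first collides with the fact that maximality is inherently sequential: each call must see which vertices earlier calls already saturated.

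The paper's argument avoids the $\log\Delta$ overhead by a different decomposition. It fixes a parameter $d$ and partitions $V(G)$ into $d$ classes $V_1,\ldots,V_d$ so that each induced subgraph $G[V_i]$ has maximum degree at most $\Delta/d$ (a standard defective-coloring / local-search argument). In the online stage it first computes, \emph{in parallel}, a maximal matching inside each $G[V_i]$ using the trivial $O(\Delta(G[V_i]))$-round algorithm; this costs only $O(\Delta/d)$ rounds because the intra-part degrees are small. After deleting matched vertices, it runs $d$ sequential phases, phase $i$ invoking the bipartite routine on the cut $(V_i,\,V\setminus V_i)$. Writing the bipartite cost as $\Delta/f(\Delta)$ with $f(\Delta)\to\infty$, the total is $\Delta/d + d\cdot\Delta/f(\Delta)$; setting $d=\sqrt{f(\Delta)}$ gives $O(\Delta/\sqrt{f(\Delta)})=o(\Delta)$ for every $f(\Delta)=\omega(1)$. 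The idea you are missing is precisely this two-tier split --- intra-part edges handled cheaply via degree reduction, cross-part edges handled by the bipartite hypothesis --- together with the freedom to \emph{balance} the number of parts against the bipartite saving rather than fixing it at $\Theta(\log\Delta)$.
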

\begin{proof}
    Suppose subgraph-MM for bipartite graphs can be solved in $O(\frac{\Delta}{f(\Delta)})$ rounds for some function $f(\Delta) $ that is $\omega(1)$.
    Note that MM can be solved in $O(\Delta)$ rounds in the \SUPPORTED model (compute $O(\Delta)$ edge coloring and then compute MIS using said coloring as per Theorem \ref{thm:mis:col}). We make use of the following well known proposition.
    \begin{proposition} 
        For every integer $d$ with $1 \leq d \leq \Delta$, there exists a partition of the vertex set of $G$ into $d$ disjoint sets, $V_1, V_2, \dots V_d$ such that $\Delta(G[V_i]) \leq \Delta(G) / d$.
    \end{proposition}

    \def\deg{\text{deg}}
    \begin{proof}
        Let $d$ be an integer with $1\leq d \leq \Delta$. For a partition $P$ of the vertex set into $V_1, V_2, \dots V_d$, let $\deg_P(v)$ be the degree of the vertex $v$ in $G[V_i]$ where $v \in V_i$. Consider a partition $P$ that minimizes $\sum_{v \in G} \deg_P(v)$. We claim that $P$ is the desired partition. Suppose for contradiction that there exists $v \in G$ with $\deg_P(v) > \Delta / k$. As total degree is at most $\Delta$, there exists $j$ such that $V_j$ contains strictly less than $\Delta / k$ neighbors of $v$. Moving $v$ from $V_i$ to $V_j$, results in reducing $\sum\limits_{v \in G}\deg_P(v)$ by at least $2$ which contradicts the choice of $P$.
    \end{proof}

    \textbf{Preprocessing Stage.} Compute the partition of $V(G)$ into $V_1, V_2, \dots V_d$ for some $2 \leq d \leq \Delta$. (For now the algorithm is parameterised by $d$, we shall find a suitable value for it later).

    \textbf{Online Stage.} First we compute in parallel a MM for each $G[V_i]$, by the linear in $\Delta$ algorithm described earlier. Subsequently matched vertices are removed. We then execute $d$ phases. In phase $i$, consider the bipartite graph induced by the cut $V_i, V \setminus V_i$. Compute MM of this cut after removing the matched vertices in time $O\left(\frac{\Delta}{f(\Delta)}\right)$ and then remove the matched vertices.

    The total round complexity is given by,
    \begin{equation*}  \begin{split}
        T(\Delta) \leq \frac{\Delta}{d} + d \frac{\Delta}{f(\Delta)}
    \end{split}
    \end{equation*}

    Choosing $d = \sqrt{f(\Delta)}$ we get $T(\Delta) \leq \frac{\Delta}{\sqrt{f(\Delta)}}$ which is $o(\Delta)$.
\end{proof}

We propose that the above lemma presents a strong case to study the problem on bipartite graphs. 

%%%%%%%%%%%%%%%%%%%%%%%%%%%%%%%%%
\subsection{Maximal Independent Set}

In this section we show that sub-graph maximal matching can be solved in $O(\chi(G))$ rounds in $\SUPPORTED$, where $\chi(G)$ is the chromatic number of the graph $G$. We also show how to extend the underlying idea behind this algorithm for other graph families with ``nice'' vertex decompositions. This extension partly explains the working of the maximal matching algorithm for bounded arboricity graphs (Theorem \ref{mm:trees}).

\begin{theorem}\label{thm:mis:col}
    Sub-graph MIS for a graph $G$ with chromatic number $\chi$ can be solved in $\chi$ rounds.
\end{theorem}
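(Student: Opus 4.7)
The plan is to leverage the preprocessing stage to compute a proper $\chi$-coloring of $G$, and then use this coloring in the online stage to order the vertices of the input subgraph into $\chi$ independent ``priority classes,'' processed one color at a time.

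More concretely, the preprocessing algorithm computes an arbitrary proper $\chi$-coloring $\col: V(G) \to \{1, 2, \dots, \chi\}$ of $G$, and each vertex $v$ stores only its own color $\col(v)$. In the online stage, given the input subgraph $H$, the algorithm proceeds in $\chi$ rounds. At round $i \in \{1, \dots, \chi\}$, every vertex $v$ with $\col(v) = i$ that is still undecided checks (using the messages received from its neighbors in $H$ so far) whether any neighbor in $H$ has already joined the MIS; if none has, then $v$ joins the MIS and notifies its $H$-neighbors in that round, otherwise it marks itself as excluded.

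For correctness I need to verify independence and maximality. Independence holds because two vertices $u, v$ that join in \emph{different} rounds are handled in the standard priority order (the later one refuses to join if the earlier one did), while two vertices joining in the \emph{same} round $i$ both have color $i$ and are therefore non-adjacent in $G$, hence also non-adjacent in the subgraph $H \subseteq G$. Maximality holds because every vertex is processed in some round: if a vertex $v$ does not end up in the MIS, then at the moment its color class was processed, some $H$-neighbor of $v$ was already in the MIS, so $v$ is dominated by the MIS in $H$.

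The round complexity is immediate: the online algorithm consists of exactly $\chi$ synchronous rounds, one per color class, each requiring only a single exchange of messages between $H$-neighbors. I do not expect any genuine obstacle here: the only subtlety is making sure that when we invoke ``independence in $G$'' for same-color vertices, this carries over to $H$, which it does trivially because the vertex set is the same and $E(H) \subseteq E(G)$.
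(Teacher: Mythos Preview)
Your proposal is correct and follows essentially the same approach as the paper: precompute a proper $\chi$-coloring, then in the online stage process the color classes one per round, with each vertex joining the MIS iff no neighbor has already joined. Your write-up is in fact more thorough than the paper's, which states the algorithm but omits the explicit independence and maximality arguments you supply.
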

\begin{proof}
    \textbf{Preprocessing Stage.} Compute a proper $\chi$ coloring of the graph.

    \textbf{Online Stage.} In phase $i$ (for $i = 1, 2, \dots \chi$), vertices with color $i$ add themselves to the independent set if none of its neighbors have been added. Once a vertex adds itself to the independent set, it informs all of its neighbors.
\end{proof}

A well known technique of finding a maximal matching (MM) of a graph $G$ is to find a maximal independent set (MIS) of its line graph $L(G)$. Since $\Delta(L(G)) \leq 2 \Delta(G) - 2$, a $o(\Delta)$ algorithm for MIS implies a $o(\Delta)$ algorithm for MM.

There exists an $O(\Delta + \log^* n)$ algorithm for MM in general graphs (after coloring the graph using $\Delta+1$ colors, use Theorem \ref{thm:mis:col}, note that $\Delta + 1$ coloring (i.e. the preprocessing stage) can be done in $o(\Delta) + \log^* n$ rounds).

\begin{lemma}\label{lem:mis:split}
    Suppose sub-graph MIS can be solved for a graph family $\mathcal{F}$ in $T$ rounds, and suppose the vertex set of a graph $G$ can be covered by $k$ sets $V_1, V_2, \dots V_k$ (i.e. $\cup V_i = V(G)$) such that $G[V_k] \in \mathcal{F}$, then sub-graph MIS for $G$ can be solved in $T \cdot k + (k - 1)$ rounds in the SUPPORTED model.
\end{lemma}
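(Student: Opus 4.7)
The plan is to process the cover one piece at a time, running the sub-graph MIS subroutine for $\mathcal{F}$ on each induced subgraph $G[V_i]$ in turn, while ``freezing'' vertices that are already committed (either added to the independent set or dominated by it). The intended reading of the statement is that $G[V_i]\in\mathcal{F}$ for every $i\in\{1,\dots,k\}$; this is what makes the subroutine applicable in every phase.

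For the preprocessing, each vertex $v$ simply stores (from the decomposition computed centrally) the index $i(v)$ of a set $V_i$ containing it (breaking ties arbitrarily if $v$ lies in several $V_i$'s). No further information is needed. The online algorithm maintains an independent set $I$, initially empty, and proceeds in $k$ phases. At the start of phase $i$, let $A_i\subseteq V_i$ be the vertices $v\in V_i$ that are neither in $I$ nor have a neighbor in $I$ (each vertex can locally check this condition using one round of notifications from the previous phase). The induced subgraph $G[A_i]$ is a sub-graph of $G[V_i]\in\mathcal{F}$, so by hypothesis we can compute a maximal independent set $M_i$ of $G[A_i]$ in $T$ rounds using the assumed algorithm for $\mathcal{F}$ (recall that in the $\SUPPORTED$ model, edges outside $G[A_i]$ are still available as communication links). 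We then set $I\gets I\cup M_i$ and, if $i<k$, spend one additional round so that every newly added vertex informs its neighbors; this prepares the membership test for $A_{i+1}$. After phase $k$ we output $I$.

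For correctness, $I$ remains an independent set throughout: vertices added in phase $i$ come from $A_i$, and by construction $A_i$ contains no vertex of $I$ nor any neighbor of $I$, so independence is preserved across phases. Within a phase, independence follows from the fact that $M_i$ is itself an independent set. For maximality, suppose $v\notin I$ and consider the phase $i=i(v)$ in which $v$ is ``processed.'' If $v$ already had a neighbor in $I$ at the start of phase $i$ we are done; otherwise $v\in A_i$, and since $M_i$ is a maximal independent set of $G[A_i]$, either $v\in M_i\subseteq I$ (contradicting $v\notin I$) or $v$ has a neighbor $u\in M_i\subseteq I$ within $G[A_i]$. Either way, after phase $i$ the vertex $v$ has a neighbor in $I$, so no vertex outside $I$ can be added to $I$ without breaking independence.

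The round count is immediate: $k$ invocations of the subroutine at $T$ rounds each, plus one synchronization round between consecutive phases, giving $Tk+(k-1)$ rounds total. The only place where a subtlety could arise is in the claim that the subroutine hypothesis applies to $G[A_i]$ rather than to $G[V_i]$ itself; this is handled by the fact that we are already in the sub-graph variant of MIS, so an algorithm for the family $\mathcal{F}$ works on arbitrary edge-induced sub-graphs of graphs in $\mathcal{F}$. I do not foresee any other obstacle.
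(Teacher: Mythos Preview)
Your proposal is correct and follows essentially the same phase-by-phase approach as the paper: compute an MIS on each $G[V_i]$ restricted to the still-active vertices, union the results, and spend one round between consecutive phases to propagate membership. Your formulation is in fact slightly more careful than the paper's written proof, since you explicitly exclude neighbors of $I$ from $A_i$; the paper's text removes only $I_{i-1}$ itself when forming $G[V_i\setminus I_{i-1}]$, which as literally stated would not guarantee independence across phases.
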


\begin{proof}
    The above lemma is in some sense a generalization of Theorem \ref{thm:mis:col}. Applying $\mathcal{F}$ to be the set of all graphs with $0$ edges and observing that MIS can be solved trivially in $T = 0$ rounds of communication. Chromatic number $\chi$ implies that its vertex set can be partitioned into $\chi$ sets each of which is independent. Moving on to the proof (which is in spirit same as that of Theorem \ref{thm:mis:col})

    \textbf{Preprocessing Stage.} Decompose $G$ into $V_1, V_2, \dots V_k$

    \textbf{Online Stage.} The algorithm runs in several phases. In each phase the independent set computed so far is extended. Let $I_i$ be the independent set computing during phase $i$ ($I_0 = \emptyset$). In phase $i$, compute the MIS for the subgraph $G[V_i \setminus I_{i-1}]$. Note that $G[V_i \setminus I_{i-1}]$ is a subgraph of $G[V_i]$ and by the lemma, MIS can be computed in $T$ rounds. 

    Let $\tilde{I}_i$ be the independent set of $G[V_i \setminus I_{i-1}]$. Observe that $I_i = I_{i-1} \cup \tilde{I}_i$ is an independent set of $G$. 

    Finally consider $I_k$. Suppose $I_k$ was not an MIS, then there exists a vertex $v \not\in I_k$ such that none of the neighbors of $v$ lie in $I_k$. Suppose $v \in V_i$, then consider $H = G[V_i \setminus I_{i-1}]$. We know $v \not\in I_{i-1}$ so $v \in H$. Also $v \not\in \tilde{I}_i$, the MIS of $H$. Consider $J = \tilde{I}_i \cup \{v\}$. $J$ is an independent set of $H$, since none of the neighbors of $v$ lie in $I$. This contradicts that $\tilde{I}_i$ is an MIS of $H$.
\end{proof}

One can also view the results for MM (Theorems \ref{mm:trees}, \ref{mm:arb}) as a consequence of the above lemma. MM for a tree (say $T$) is identical to the MIS for its line graph. Observe that MIS is trivial for graphs $G$ where each of connected component of $G$ is a clique (vertex $v$ belongs to MIS iff it has least ID among its neighbors). Let this family of graphs be $\mathcal{F}$. Root the tree $T$ at an arbitrary vertex and let $E_0, E_1$ be the set of edges at even and odd depths respectively. The line graph induced by $E_0$ (i.e. $L(T)[E_0]$) belongs to $\mathcal{F}$. Similarly for bounded arboricity graphs, we can decompose them into $a$ forests and in turn into $2a$ members of $\mathcal{F}$.

\section*{Acknowledgement}
David Peleg holds the Venky Harinarayanan and Anand Rajaraman (VHAR) Visiting Chair
Professorship at IIT Madras. This work was carried out in part during mutual visits that were supported by the VHAR Visiting Chair funds. John Augustine and Srikkanth Ramachandran are supported by the Centre of Excellence in Cryptography, Cybersecurity, and Distributed Trust (CCD) and by an IITM-Accenture project (SB/22-23/007/JOHN/ACC). 

%%%%%%%%%%%%%%%%%%%%%%%%%%%%%%
\bibliography{main.bib}
%%%%%%%%%%%%%%%%%%%%%%%%%%%%%%

\end{document}